\def\@ACM@checkaffil{% Only warnings <<<<<<<<<<<<<<<<
    \if@ACM@instpresent\else
    \ClassWarningNoLine{\@classname}{No institution present for an affiliation}%
    \fi
    \if@ACM@citypresent\else
    \ClassWarningNoLine{\@classname}{No city present for an affiliation}%
    \fi
    \if@ACM@countrypresent\else
        \ClassWarningNoLine{\@classname}{No country present for an affiliation}%
    \fi
}
\renewcommand\footnotetextcopyrightpermission[1]{}
\algnewcommand{\LineComment}[1]{\Statex \hskip\ALG@thistlm \hskip\ALG@thistlm \(\triangleright\) #1}
\newif\ifpacor
\newif\ifshowpassive
\newif\ifEurosys
\newif\ifShortenForEurosys
\newcommand{\parheading}[1]{\vspace{0.05 cm}\noindent \textbf{#1}}
\newcommand{\SystemName}{Flock\xspace}
\newcommand{\Sys}{Flock\xspace}
\newcommand{\compactcaption}[1]{\vspace{-1.1em}\caption{#1}\vspace{-1.3em}}
\newcommand{\scompactcaption}[1]{\vspace{-0.2em}\caption{#1}\vspace{-0.9em}}
\def \thickhline {\Xhline{3\arrayrulewidth}}
\newtheorem{theorem}{Theorem}
\newtheorem{lemma}{Lemma}
\newtheorem{definition}{Definition}
\setlist[itemize]{noitemsep,topsep=0pt}
\titlespacing{\section}{1pt}{0.6ex}{0ex}
\titlespacing{\subsection}{1pt}{0.6ex}{0ex}
\newif\ifshowpassive
\newcolumntype{P}[1]{>{\centering\arraybackslash}p{#1}}
\newcolumntype{M}[1]{>{\centering\arraybackslash}m{#1}}
\newif\ifGeneralJLE
\DeclareMathOperator*{\argmax}{arg\,max}
\begin{document}

\setlength{\skip\footins}{1pt plus 2pt minus 0pt}
\setlength{\topsep}{0.2em}
\setlength{\floatsep}{1.0pt plus 1.0pt minus 0.0pt}
\setlength{\intextsep}{1.0pt plus 1.0pt minus 0.0pt}
\setlength{\textfloatsep}{1.0pt plus 1.0pt minus 0.0pt}
\setlength{\dblfloatsep}{1.0pt plus 1.0pt minus 0.0pt}
\setlength{\dbltextfloatsep}{1.0pt plus 1.0pt minus 0.0pt}
\setlength{\abovecaptionskip}{0.1pt}
\setlength{\belowcaptionskip}{0.1pt}
\setlength{\abovedisplayskip}{0.2pt}
\setlength{\belowdisplayskip}{0.2pt}
\setlength{\abovedisplayshortskip}{0.2pt}
\setlength{\belowdisplayshortskip}{0.2pt}

\title{Flock: Accurate network fault localization at scale}

% \title{Flock: Scaling Up Bayesian Inference For Network Fault Localization
% %\\(Draft -- please do not distribute.)
% }

%\author{Anonymous authors}
%\affiliation{Submission ID: 131}
\author{Vipul Harsh, Tong Meng, Kapil Agrawal, P. Brighten Godfrey 
%\\Draft, do not distribute
}
\affiliation{
\institution{University of Illinois at Urbana-Champaign}
%\country{USA}
}

%!TEX root = ../main.tex

\begin{abstract}
%Cloud networks frequently experience failures that cause performance degradation and SLO violations.
Inferring the root cause of failures among thousands of components in a data center network is challenging, especially for ``gray'' failures that are not reported directly by switches. Faults can be localized through end-to-end measurements, but past localization schemes are either too slow for large-scale networks or sacrifice accuracy.
We describe Flock, a network fault localization algorithm and system that achieves both high accuracy and speed at datacenter scale.  Flock uses a probabilistic graphical model (PGM) to achieve high accuracy, coupled with new techniques to dramatically accelerate inference in discrete-valued Bayesian PGMs.
%, allowing \Sys to use a flexible, high-fidelity model.
%Flock uses a novel inference algorithm with multiple optimizations, to reason using a flexible, high-fidelity PGM dramatically faster than past algorithms.
%Through theoretical analysis, we prove that the algorithm is optimal in restricted settings and demonstrate via experiments that it works well in practice. 
%The fast inference algorithm allows Flock to use PGMs at scale, which are generally more flexible than non-PGM methods, in incorporating information of multiple types- including passive and active end-to-end probes and in-band network telemetry (INT), whenever available, thereby boosting accuracy even further. 
Large-scale simulations and experiments in a hardware testbed show Flock speeds up inference by >$10^4$x compared to past PGM methods, and improves accuracy over the best 
previous datacenter fault localization approaches, reducing inference error
by $1.19-11\times$ on the same input telemetry, and by $1.2-55\times$ after incorporating passive telemetry. 
We also prove \Sys's inference is optimal in restricted settings.
%Flock's algorithmic techniques can potentially be applied to any PGM, which may be of independent interest.

\end{abstract}

\maketitle

% Use the following at camera-ready time to suppress page numbers.
% Comment it out when you first submit the paper for review.
%\thispagestyle{empty}

%\input{samplebody-conf}
%!TEX root = ../main.tex

\section{Introduction}

%\fix{Run spell check}

Datacenters often comprise of tens of thousands of network components. Failures in such large networks are common, arising due to software bugs, misconfiguration, and faulty hardware, among other reasons~\cite{achillesHeel}.  In many cases, a device will directly report a failure, e.g., a switch may report that one of its line cards is non-responsive or that an interface has a certain packet loss rate. Network operators utilize monitoring software to collect these metrics and raise alerts. %(e.g., via SNMP~\cite{rfc1157}) and raise alerts.
However, datacenters also experience significant network downtime and SLO violations from \emph{gray failures} whose root cause is obscure~\cite{achillesHeel,everflow}.  For example, the reason for poor performance of a distributed service could be a link silently dropping a small fraction of packets without updating switch counters~\cite{netbouncer}, or a driver bug in a virtualized firewall. Diagnosing such performance anomalies is very hard for network operators~\cite{007}. With programmable switches, more advanced monitoring is possible~\cite{netseer,omnimon}, but these methods either do not eliminate gray failures~\cite{INT2.1} or come at a high cost in switch resources~\cite{netseer}, and require deployment of programmable switches which is not generally available.

%\vipul{``these methods either do not eliminate gray failures~\cite{INT2.1}''- isn't it a given that a monitoring method would not eliminate any gray failures? If yes, then should we delete it?}

%More advanced monitoring methods, like in-band network telemetry (INT)~\cite{INT2.1}, improve timeliness and detail of visibility but do not eliminate gray failures.

%One approach is to instrument network elements to the greatest extent possible, with more common methods like SNMP monitoring~\cite{rfc1157} or more advanced methods like inband network telemetry (INT)~\cite{INT2.1}.  However, in the vast majority of network deployments, such detailed instrumentation and tracing is unlikely to be available across the full diversity of network elements (physical switching, virtualized network functions, software packet processing elements, etc., all from a multitude of vendors), and even INT does not guarantee that gray failures are reported. \fix{The ordering here's a bit confused. gray failures are the ones SNMP does not see.}

%or persistent high queuing on a switch due to misconfiguration~\cite{everflow}. 
%These scenarios are difficult to diagnose because they exhibit ``differential observability''~\cite{achillesHeel}: users experience a problem, but the system does not directly report the faulty component as failed.  %Even if the system has reported some faults, it may be unclear which is to blame for a particular urgent application-level incident.

An alternate approach, which is the domain of this paper, is to infer the root cause via end-to-end measurements, which we refer to as \emph{fault localization}~\cite{007,netbouncer,pingmesh,netnorad,KDD-Fault-Localization,simon}.
%Even though more advanced in-network monitoring methods exist, fault localization remains one of the most important tools for failure diagnosis\footnote{verified by discussions with a major cloud provider} as end-to-end data is easy to obtain and can be used to reason about all elements in the infrastructure stack (see \S~\ref{sec:fl-uses}).
Fault localization has been deployed by large cloud providers~\cite{sherlock,pingmesh}. At the heart of fault localization is an \emph{inference model and algorithm} that uses end-to-end observations (e.g., packet loss rate or latency of TCP connections) to infer a set of faulty components (links or switches).  The key challenge is to do this both accurately and quickly.

The most powerful class of inference techniques builds a probabilistic graphical model (PGM) and performs a form of maximum likelihood estimation (MLE): finding the set of components that, if they failed, maximizes the probability of having produced the given end-to-end observations. 
%\pbgnew{Perhaps the first such system was}
An early such system was Sherlock~\cite{sherlock}, whose primary context was inferring faults among dependent services. However, deriving the MLE for a PGM can be computationally intensive. With $\leq k$ failures among $n$ components, the solution space is exponentially large in $k$ ($O(n^k)$).
%with $\leq k$ failures among $n$ components) which is exponential in $k$. 
In a datacenter with millions of TCP flows, switches, links, etc. and multiple simultaneous failures, Sherlock's MLE can require several hours.

%Sherlock's MLE requires exhaustively scanning many hypotheses (i.e., candidate solutions) that, in a datacenter with millions of TCP flows, switches, links, etc. and multiple simultaneous failures, can require several hours.

Therefore, fault localization schemes intended for datacenter networks move away from PGMs to other techniques -- using scores to rank links~\cite{007,KDD-Fault-Localization} or solving for drop rates via a system of equations~\cite{netbouncer}. As we will show, these compromise accuracy and flexibility of PGMs in favor of performance. %\fix{Mention 007 is MLE, but not PGM based MLE} 

In this paper, we present \Sys, a fault localization system for datacenter networks that seeks to maximize accuracy, with sufficiently high speed (i.e., seconds).  \Sys's core innovation is a novel MLE inference algorithm for PGMs that offers substantial acceleration for the kind of models encountered in fault localization, leveraging two key acceleration approaches: (i) a technique we call \emph{joint likelihood exploration} maintains an array of hypotheses that it can update en masse to find the likelihood of a set of \emph{new but similar} hypotheses, more quickly than computing their likelihoods individually from scratch and (ii) we use a greedy algorithm which builds its solution link by link; this part of the algorithm is simple, of course, but importantly, we prove a sufficient condition for optimality and verify through experiments that it does find the MLE in practice. These two optimizations each individually provide asymptotic speedups, and together allow \Sys to use a PGM at scale.  The result is that \Sys is several orders of magnitude faster than past PGM-based fault localization~\cite{sherlock}, and is substantially more accurate than past non-PGM-based fault localization~\cite{007,netbouncer}, on the same input data.

%explores the exponentially-large hypothesis space asymptotically faster than brute force approaches.  
%We design a greedy algorithm which builds its solution link by link, prove a sufficient condition for optimality and verify through experiments that it does find the MLE in practice.

Moreover, the PGM-based approach allows \Sys to use different types of input telemetry. Recent datacenter fault localization schemes~\cite{netbouncer,007} use observations of \emph{active probes} of the network (which can be constructed to have known paths and uniform distribution) but do not incorporate \emph{passive flow monitoring}, i.e., observations of all ongoing traffic, obtained via NetFlow, IPFIX, or INT.
\footnote{\cite{007} incorporates only a limited amount of passive monitoring; \S \ref{sec:existing-approaches}.} 
The large volume of passive data makes it potentially informative.  But including passive monitoring would be hard for non-PGM methods because it requires more discerning modeling and inference to handle skew in traffic patterns and path uncertainty.\footnote{Most data centers use non-deterministic ECMP multipath routing, so that only a \emph{set of possible paths} is known when flows are monitored with NetFlow/IPFIX.  Paths can be known with INT-based monitoring, but INT is not generally deployed.
} Although PGMs are generally more flexible in incorporating data of different types, it would be hard for past PGM approaches because of computational difficulty, due to the immense number of flows and because a flow with $10$ possible paths is roughly $10\times$ costlier to model than a flow with a known path.  \Sys's combination of flexible PGM-based modeling and speed enables it to utilize  passive information.

In summary, this paper's key contributions are as follows:

\begin{itemize}[leftmargin=7.0pt]
%\itemsep0
    %\pbg{How much should we emphasize this, given the existence of sherlock? Also one reviewer from NSDI'22 complained that certain aspects of the model, like the $p_g$ and $p_b$, appeared in 007 but we didn't credit it.  So overall I'm wondering if we should de-emphasize the novelty of the model.  Thoughts?} \vipul{I agree. The model in itself is not completely novel considering that Sherlock and others used similar models. The current text seems okay since it's not really emphasizing the novelty of the model, but novelty of the system. Maybe a subtle point though.} \pbg{Yeah I think that may be a bit too subtle given that this is the first listed contribution.  One option is we begin with the inference algorithm, and add a sentence to that saying the new algorithm allows us to ``incorporate various kinds of dependence and uncertainty such as unknown paths''.  How does that sound?} \vipul{Sounds good. I do think we should highlight the \Sys system as a contribution somewhere.}
    
    \item \textbf{Inference algorithm.} We develop \Sys, a new fast PGM MLE inference algorithm for the type of PGM necessary for fault localization, namely discrete-valued Bayesian PGMs (\S~\ref{sec:inf-alg}). We analyze a sufficient condition for this algorithm's accuracy on Flock's model, providing intuition for why Flock works well in practice (\S~\ref{greedyJustification}).
    %\Sys's inference optimizations can be applied to other PGMs, and thus may be of independent interest. \fix{Regarding that last sentence: We haven't shown that in this paper, right? I wonder if we should cut it, to keep things focused and because the CoNEXT reviewers won't really have the expertise or experiments to evaluate it.}

    \item \textbf{System implementation.} We implement the \Sys system (\S~\ref{flock design}), a new end-to-end fault localization system.  The \Sys algorithm forms the heart of the \Sys system, allowing it to employ a PGM and naturally incorporate various kinds of dependence and uncertainty such as unknown paths.
    
    \item \textbf{Evaluation suite.} We create an open
    %\footnote{This open source suite will be released upon publication of the paper.} 
    evaluation suite~\cite{githubRepoFlock} for fault localization, which includes (a) implementations of  algorithms from NetBouncer~\cite{netbouncer}, 007~\cite{007}, Sherlock~\cite{sherlock}, and \Sys, (b) an implementation of end-host telemetry agents and a collector, (c) telemetry data for six different fault scenarios from a simulated data center and a hardware testbed, and (d) scaling tests. We believe this suite 
    is of independent interest, as it 
    is the first such open data set and expands on the fault scenarios evaluated by past work. 
    
    %. These scenarios include silent packet drops on links, packet drops in faulty devices, ``soft'' failures with varying drop rates, misconfigured queues, intermittent link flaps causing high latency, irregular topologies, error-free scenarios\footnote{This is useful for testing false positive rate, as networks are often blamed when other components are the true root cause.}, and scaling tests.  

    %for current and future fault localization approaches, as it is the first such open data set and expands on the fault scenarios evaluated by past work which focused on silent packet drops.

    \item \textbf{Performance evaluation.} %To evaluate \Sys's performance, we use our evaluation suite's simulated Clos network with 88K links and 9.5M flows (\S~\ref{runningTimeSection}). 
    For a Clos network with 88K links and 9.5M flows, 
    \Sys is empirically $>10^4\times$ faster than Sherlock's PGM-based method~\cite{sherlock}, scanning \textasciitilde3.5M hypotheses in 17 sec, while achieving the same or better inference results (\S~\ref{runningTimeSection}). In fact, \Sys is $\approx 4.5\times$ faster than the non-PGM approach of NetBouncer~\cite{netbouncer} on the same input. 007~\cite{007} is the fastest of the lot, but its time savings (<1 sec) is not a good tradeoff with accuracy.
    %scanning \textasciitilde3.5M hypotheses in 17.38 seconds'' -- just to shorten.  
    \item \textbf{Accuracy evaluation.} With the same (active probe) input measurements as past work, \Sys reduced inference error by $1.8-8\times$ compared to 007 and by $1.19-11\times$ compared to NetBouncer. %\fix{Should we remove that last bit now that we're cutting corruption?}
    % \item We evaluate inference accuracy with extensive NS3 simulations (\S~\ref{mixedTracesExperiments}) and hardware testbed experiments (\S~\ref{testbedExperiments}) on a range of failure scenarios that expands on prior work -- silent packet loss, misconfigured queues, forwarding of corrupted packets, and intermittent link flaps causing high latency. 
    \item \textbf{The value of passive information/INT.} %We evaluate the value of incorporating passive NetFlow/IPFIX data, which substantially helped \Sys, reducing inference error in 6/7 scenarios by $1.2-55\times$ compared to 007 and by $2.5-5\times$ compared to NetBouncer. We also evaluate the value of INT telemetry. Flock (INT) reduced the inference error by $1.66-12\times$ compared to NetBouncer (INT) in 6/7 cases. In all of those cases, it outperformed every other scheme, in most cases substantially, except for Flock with NetFlow/IPFIX data and active traceroutes which came within 1\% close. \fix{The numbers should be compared to Flock with active only information.}
    Incorporating passive monitoring reduced error even further, by up to $5.3\times$ compared to \Sys with only active monitoring.
    %$1.2-2.9\times$ compared to Flock (A1) (except in the case of device failures)  and $1-5.3\times$ compared to Flock (A2).
    We also evaluate the value of INT telemetry input. % We found Flock with INT substantially improved inference error compared to NetBouncer with INT (by $1.66-16\times$ across 6 cases), and outperformed all other input data types. Interestingly, Flock with passive input performed essentially as well as with INT input (within 1\%). 

\end{itemize}

\section{Background and Motivation} \label{motivation}

\subsection{When fault localization is useful}
\label{sec:fl-uses}
Ideally, network faults are directly reported by the faulty component; e.g., switches typically track interface utilization, packet drops, up/down status, queue length, etc. These metrics are commonly collected from network switches via SNMP~\cite{rfc1157}, polling~\cite{CiscoSNMPInterval,SolarwindsSNMPInterval}, or streaming telemetry~\cite{CiscoStreamingTelemetry,AristaNetworkTelemetry}.

Fault localization becomes useful when such direct monitoring is not enough.  Problems that are not directly reported by the faulty components are known as \emph{gray failures}~\cite{achillesHeel}.
For example, silent inter-switch or inter-card drops~\cite{netseer,netbouncer,007,FbLocalization,everflow} are extremely challenging to detect, constituting 50\% of faults that took  >3 hours to diagnose in \cite{netseer}. Other examples of gray failures include corruption in TCAM-based forwarding tables causing black holes~\cite{maxCoverage,everflow} or loss~\cite{CiscoBugCSCvn56156}, and a misconfigured switch causing high latency~\cite{everflow} (see~\cite{netseer} for more cases).  gray failures also occur in the numerous software packet processing components present in modern data centers. Software bugs can silently drop or corrupt packets in host virtualization~\cite{VMwareCorruption,VMwareLoss}, server software~\cite{everflow}, and virtualized network functions like software firewalls~\cite{paloaltobugs}.  All the above faults can be ``silent'' (the device does not %update error counters or
realize the error occurred). Further, the symptoms could manifest at a different location than the problem, e.g. packet data could be corrupted by an intermediate switch but the corruption is discovered only at the receiver. End-to-end observations are well suited to detect such problems (\S~\ref{failureScenarios}). %Although L2 checks can catch physical link corruption, packets can also be silently corrupted within switches (note IPv6 lacks a checksum, so packets may be corrupted somewhere in the network and then forwarded, with the corruption only discovered at the receiver's transport stack).  --- removing this sentence, just because we're critically short of space. 

%\fix{mention somewhere that the symptom can be at a different place than the root cause}

% However, faults can still go undetected due to \emph{gray failures}~\cite{achillesHeel} that are not directly reported by the faulty components. Examples include a switch interface or link silently dropping a small fraction of packets without updating counters~\cite{netbouncer,007,FbLocalization,everflow}, corruption in TCAM-based forwarding tables causing black holes~\cite{maxCoverage,everflow} or loss~\cite{CiscoBugCSCvn56156}, and a misconfigured switch causing high latency~\cite{everflow} (see~\cite{netseer} for more cases).  Silent interswitch or silent intercard drops are the most challenging to diagnose, constituting 50\% of faults that took long to diagnose in [53]. Even with default INT that mirrors packets from the receiving host, the dropped packets will not be recorded since they neverget to the receiving host. End-to-end schemes are well suited for detecting such problems. Although L2 checks can catch physical link corruption, packets can also be silently corrupted within switches (note IPv6 lacks a checksum, \pbgnew{so packets may be corrupted somewhere in the network with the corruption only discovered at the receiver's transport stack}). gray failures also occur in the numerous software packet processing components present in modern data centers.  Software bugs can silently drop or corrupt packets in host virtualization~\cite{VMwareCorruption,VMwareLoss}, server software~\cite{everflow}, and virtualized network functions like software firewalls~\cite{paloaltobugs}.

Another alternative is to use programmable switches to obtain more information, as in Omnimon~\cite{omnimon}, FANcY~\cite{sigcomm22fancy} (for ISPs), or NetSeer~\cite{netseer} which runs a packet sequencing protocol across neighboring hops to find silent drops. This can be quite accurate, though it comes at the cost of significant switch resources (\textasciitilde100\% overall PHV usage and 40\% ALU usage~\cite{netseer}).  But more importantly, although use of programmable switches is growing, they still have very limited deployment (13\% estimated market share for 2023~\cite{progSwitchMarket}).  Both programmable and traditional switching environments are valuable use cases, but \emph{the latter is the target of this paper;} schemes like~\cite{omnimon,netseer,sigcomm22fancy} are out of the scope of our work.  As an exception, we will consider the use of data collected with In-band Network Telemetry (INT)~\cite{INT2.1,ben2020pint}, which does not directly report gray failures, but does record packets' paths.  INT can be implemented with programmable switches, but similar path data can be obtained other ways; see \S~\ref{information kinds}.

Thus, it is very hard to guarantee that every packet processing element detects all faults locally (indeed, the end-to-end principle~\cite{saltzer1984end} applies here). Also, even if a fault is reported, the operator may want a way to cross-check. For these reasons, we see fault localization based on end-to-end observations as an important tool in infrastructure engineers' toolbox for the foreseeable future. %(and our discussions with a major cloud provider confirmed this perspective).

\subsection{Problem setup and goals}
\label{sec:goals}

%In general, 
\noindent The input to a network fault localization algorithm is the network topology, and input telemetry (flow measurements). Each flow measurement includes one or more metrics (TCP loss rate, mean latency, throughput, etc.) and a set of paths through the topology that the flow may have traversed. Depending on the monitoring method, this set may have size one (the exact path is known) or greater than one (typically, a set of possible ECMP paths is known).  Given this input, the fault localization algorithm should output a set of links or devices it believes to be faulty, while meeting two goals.

\parheading{Performance}: 
%In common troubleshooting workflows, network operators utilize monitoring tools to assess the cause of a reported problem, and seek to take action quickly.
Network operators often have to resolve a reported problem quickly.
For example, a managed service from BT has a 15 minute response time in its SLA~\cite{BTManagedWANContract} and 
Gartner chose a threshold of 3 minutes in defining ``real time'' network data analysis~\cite{GartnerRealTime}.  
%Gartner~\cite{GartnerRealTime} defines ``real time'' network data analysis using a threshold of 3 minutes.
Thus, fault localization within minutes is critical and within a few seconds is ideal.  
%In addition, network operators increasingly want to automate response; e.g. automatically disabling links that are dropping packets~\cite{corropt}.
%in one cloud, for example, a controller automatically disables links that are dropping packets~\cite{corropt}. 
%Automated response could benefit from faster localization. 
%Note that measurement collection and fault localization both contribute to delay.
%Note that for good performance, the duration of monitoring must be short, and the inference algorithm must be fast.

\parheading{Accuracy:} %Too many false positives leave users with a frustrating haystack of alerts, burying true problems~\cite{cloudDatacenterSdnMonitoring} . 
False positives can bury true problems among several alerts~\cite{cloudDatacenterSdnMonitoring}. 
False negatives could send engineers down the wrong track of investigation. There may be a tradeoff between accuracy and performance.  %-- removed it purely for lack of space (Vipul) 
As long as results are available within a few minutes, accuracy (minimizing false positives and false negatives) is of primary importance.  %Emerging use cases like automated response (e.g. automatically disabling faulty links~\cite{corropt}) might make accuracy even more important, as automation gone awry could potentially cause widespread congestion or outages.

\subsection{Existing fault localization approaches}
\label{sec:existing-approaches}

\noindent Several past approaches address above goals, with different types of \emph{input telemetry} and different \emph{inference algorithms}.

\parheading{Input telemetry:} Administrators commonly~\cite{GartnerRealTime} use passive monitoring of flows via NetFlow~\cite{claise2004cisco} or IPFIX~\cite{claise2013specification} to understand overall network health. %\pbgnew{This type of telemetry is the most widely available (for fault localization or other purposes).} 
However, this has not been relied on for automated fault localization because vendor-specific ECMP hashing obscures flows' exact paths. %  Thus, using only such data can make it difficult or impossible to isolate the behavior of specific links.

Recent approaches use active probes with known paths.  NetBouncer~\cite{netbouncer} sends probes uniformly from hosts to core switches in a Clos topology, via special switch support.  007~\cite{007} uses active probes with assistance from passive monitoring: end-host agents monitor production traffic, and when they detect flows with anomalous performance, the agents traceroute the flagged flows' paths and report the flagged flows' metrics for analysis.  007 does not incorporate passive monitoring of non-flagged flows. 
In both systems, the volume of active probes is limited (which assists inference performance, and minimizes 
 host/network overhead,
 %\footnote{Based on discussions with a major cloud provider and an enterprise software provider, in real deployments, overhead of probe processes on hosts is a bigger concern than network bandwidth.}), 
 and the exact path of monitored flows is known (which assists accuracy). But active probes do not eliminate all uncertainty: even if the path is known, the culprit link(s) are not; and flows can experience packet loss or latency on non-faulty links (e.g., due to congestion). Hence, good inference is still needed. Further, active probes often take a different data path than regular traffic and may fail to reproduce problems faced by production flows~\cite{007}.
%\fix{Mention active probes take a }

INT has, to our knowledge, not been specifically used by past work 
for end-to-end fault localization.  INT is similar to passive NetFlow telemetry in that it can observe a large volume of actual application traffic, if deployed for all flows.  However, like active probes, it can trace the traversed path. It also does not eliminate all uncertainty (e.g., a silent packet corruption will not trigger an INT action, and even if the path is known, the faulty link is not).

We observe that \emph{different deployments are likely to have different available information.}  007 requires host support and NetBouncer and INT require switch support. INT is now available in some switches, but is not deployed in most networks, and might be deployed selectively.
%, for instance only to monitor a subset of watchlisted flows, or at certain devices. 
Furthermore, this technology landscape may evolve. A scheme that is flexible in its input telemetry is thus preferred.

%(e.g. to ensure that it's transparent to hosts~\cite{CiscoINT}, thus not observing the switch-to-host links and software packet processing elements)
%and when deployed it might only be used to monitor a subset of watchlisted flows, or might be deployed only on certain devices (e.g., only on certain enabled switches so that it is transparent to hosts~\cite{CiscoINT}, thus not observing the switch-to-host links and software packet processing elements).

\parheading{Inference algorithms:}
Sherlock~\cite{sherlock}, NetSonar~\cite{netsonar} and Shrink~\cite{shrink} use a PGM with a form of maximum likelihood estimation (MLE), but are far too slow. 
Sherlock targeted a small use case (358 components), and NetSonar, which uses Sherlock's inference, targeted smaller inter-DC networks. In a data center network with tens of thousands of components, they require several hours (\S~\ref{runningTimeSection}), even with $K\leq2$ concurrent failures (they scan $O(n^K)$ possibilities, where $n=$ number of components).
%Sherlock targeted a scale of a few hundred unknown variables ($358$ components~\cite{sherlock}), but in a large data center network it requires severak hours (\S~\ref{runningTimeSection}), even after restricting the number of maximum detectable failures to $k=2$ (it scans $O(n^k)$ possibilities, $n=$ number of components). NetSonar, which employs Sherlock's inference algorithm, targets inter-DC networks, which are significantly smaller in scale.
%\fix{Update times at scale.  Hours for Sherlock. How about the others?  I couldn't find numbers in NetSonar.  Also say the scale they targeted.} \fix{NetSonar - Bayesian too - find scale, and describe environment.  For Sherlock, mention number of candidate root cause components was about 150.} 
Furthermore, inevitable concurrent failures~\cite{netbouncer,007,FbLocalization} may make $K>3$ essential.

Hence, state-of-the-art data center fault localization schemes move away from PGM-based MLE. 007~\cite{007} uses a scoring function to rank links while NetBouncer~\cite{netbouncer} optimizes an objective function to solve for drop rates. These are reasonably fast, but as we will see, they fall short on accuracy.

%, even for the input measurements they are designed for.

%We also found sampling based inference techniques for PGMs based on Markov Chain Monte Carlo (MCMC)(e.g. Gibbs sampling) for PGM to be inadequate since they could take many iterations for convergence. The basic Gibbs sampling was too slow for our input scale (\S~\ref{sec:inf-alg}).

\parheading{Summary:} Our goal is to localize ``gray'' failures in datacenter networks, using end-to-end information, achieving as high accuracy as possible within roughly a few seconds (including monitoring and inference), making best use of available information, including active probes or INT (where paths are known) and passive monitoring (exact paths are unknown). 

\section{Flock Design} \label{flock design}

We present \Sys in three components.  First (\S~\ref{sec:data-collection}), \Sys monitors end-to-end flows at the endpoints (e.g. hosts) of the network. The monitored flow observations, comprised of metrics from active probes and (when available) passive flow monitoring and/or INT, are then sent to a central collector. Next, the collector periodically constructs a probabilistic graphical model (PGM) from the flow observations (\S~\ref{inference graph model}). This PGM captures uncertainty in how the observations may have resulted from underlying network components.  Finally, Flock periodically performs MLE inference (\S~\ref{sec:inf-alg}) on the PGM model, searching through the exponentially-large hypothesis space (i.e., sets of possible faulty components) to find the hypothesis that maximizes the probability of the observed flow metrics in the model.

The first two components (monitoring and the PGM model) are not the significant contributions of this paper.  Our PGM is similar to that of Sherlock and NetSonar~\cite{netsonar}, with useful adaptations to fit our environment.  We describe those pieces for completeness. Our key contribution lies in the MLE inference algorithm, comprised of multiple optimizations to make it scale. %while still returning the correct answer.

\subsection{Flow monitoring}
\label{sec:data-collection}

Monitoring of end-to-end flows may occur in an agent process running on hosts, in the hypervisor of a virtualized data center, or potentially at edge/top-of-rack switches (e.g., with INT).  Our inference algorithms are agnostic to where exactly these measurements come from. %Production deployments would generally seek to reuse existing monitoring infrastructure, particularly for widely-deployed passive monitoring. 
Still, for concreteness, we describe the operation of an endpoint monitoring agent (\S\ref{implementation}).  
%, similar to that used by 007 but with an expanded set of reported data. Our implementation of this design is described later (\S~\ref{agentCollectorImplementation}). \fix{Our agent is not similar to 007 because it does not launch active probes}

The agent periodically actively probes the network  %These may be of different styles, such as in 007 or NetBouncer (\S~\ref{sec:existing-approaches}). 
%in our experiments, we will test both the probes used by 007 and by NetBouncer (\S~\ref{sec:existing-approaches}). 
and may optionally passively observe performance of ongoing flows. Note that 007 also observes ongoing traffic, to decide what active probes to launch. Metrics from both active and passive monitoring are aggregated by flow, and optionally randomly sampled to reduce volume.  Periodically, the agent sends these reports to the collector.  For the rest of this section, we assume flow reports include RTT, source, destination, retransmissions, packets sent and the path if known via active probing/INT, else the set of possible paths.

\subsection{Inference graph model} \label{inference graph model}

After telemetry is collected, \Sys builds a probabilistic model of the network using two inputs: (1) telemetry reports as described above, potentially including active probes, INT, or passive telemetry, and (2) the network topology and routes.  
The latter could be provided by an SDN controller or a topology discovery tool and is used to determine a \emph{set} of paths (typically, ECMP paths) that each flow may have traversed in the specific case of passive data with unknown paths.

%\fix{Update this section for INT} The collector ingests flow reports, and uses these to build a probabilistic model of the network.  When passive monitoring is used, the model needs one other input: the network topology and routing (provided by numerous topology discovery tools, an SDN controller, a network verifier, etc.).  We will use it to find a set of end-to-end paths (typically, the ECMP paths) to model the uncertainty in path choice. \fix{there's much more uncertainty -- describe more comprehensively}

%We assume that the exact path taken by the flows (which is determined at the network switches by hashing on the 5-tuple) is not always known, but the set of possible paths is known, for instance, by knowing the topology and the routing scheme. If the path of a flow is known, then our algorithm can use that information (e.g. 007 runs active probes to trace the path for all problematic flows which experienced a retransmission).

Flock employs a probability model based on a Bayesian network that utilizes flow level metrics.  A Bayesian network is a probabilistic graphical model that defines the probability distribution of a set of observed variables in terms of a set of unknown (or \emph{hidden}) variables. It consists of a labeled directed acyclic graph (DAG) where each node represents a random variable, unknown or observed, whose distribution can be specified completely as a function of its immediate ancestors in the DAG. The goal of the inference is to estimate the unknown variables (failed/not failed status of each component) given the observed variables (retransmissions, RTT, packets sent etc. associated with each flow).

\begin{figure}
  \begin{minipage}[b]{0.3\textwidth}
    \centering
    \hskip 0 cm
  \begin{subfigure}[b]{\linewidth}
    \includegraphics[width=0.98\textwidth]{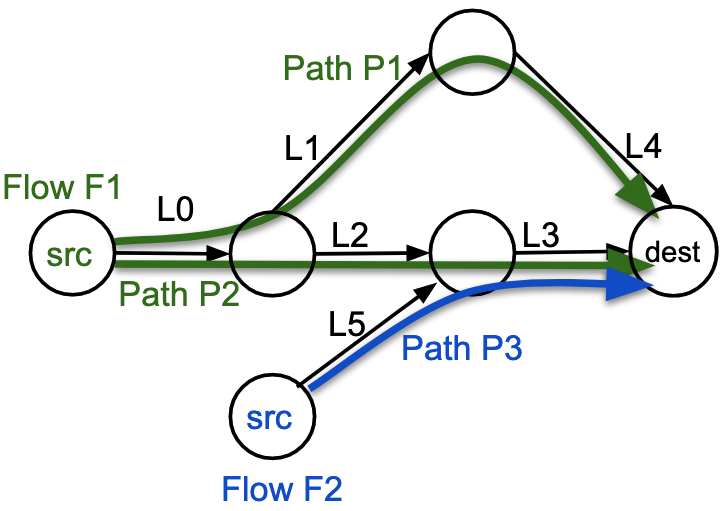}
    \caption{Network}
    \label{fig:flow_paths}
    %\caption{Small runs}
    %\label{fig:cs_small_jf_fat}
  \end{subfigure}
  \end{minipage}
   \hskip 0 cm
    \begin{minipage}[b]{0.3\textwidth}
    \centering
    \begin{subfigure}[b]{\linewidth}
    \includegraphics[width=1.1\textwidth]{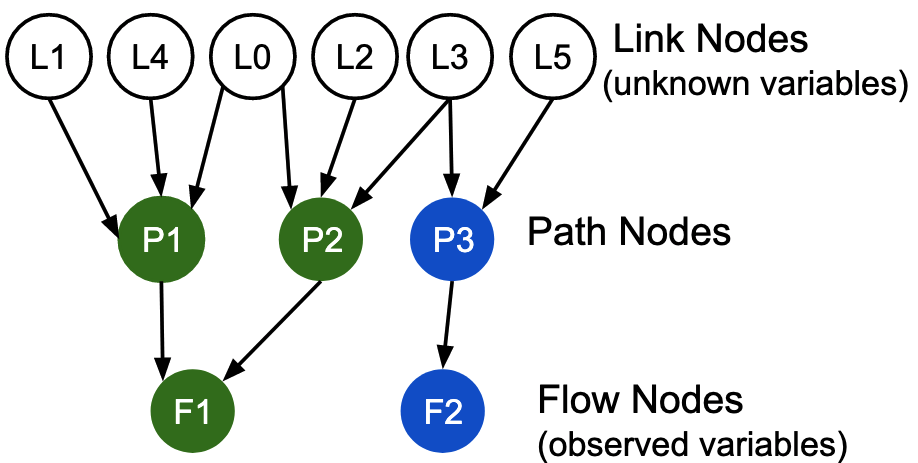}
    \caption{PGM graph}
    \label{fig:flow_bayesian_subgraph}
  \end{subfigure}
  \end{minipage}
  \caption{(a) A network with flows F1 and F2 traversing several possible paths; (b) the corresponding PGM model.}
    \label{fig:PGM_model}
\end{figure}

% Bayesian networks based schemes have been previously used in other contexts- for fault diagnosis in IP networks and modeling inconsistencies in the topology database~\cite{shrink,score,maxCoverage}, localizing the root cause for failed transactions of a distributed service~\cite{sherlock,gestalt} with multiple components and fault diagnosis in global backbone networks by actively probing it's paths~\cite{netsonar}.

We use a 3-layer graph model. Fig.~\ref{fig:PGM_model} shows an example network with two flows and its corresponding model.  Formally, the conversion from datacenter network to model is as follows. The \textbf{top layer} nodes in the graphical model represent individual links in the datacenter, referred to as \textit{link-nodes}. Each link-node is a hidden binary variable which is 0 if the link has failed and 1 otherwise.  %In this layer we model links, which we refer to as \textit{link-nodes} %and \textit{device-nodes} respectively.  (One could extend this to include other kinds of components like line-cards to model other failure types.)} 
The \textbf{intermediate layer} consists of nodes corresponding to paths in the datacenter, referred to as \textit{path-nodes}. For each link $\ell$ in path $p$, there is an edge from the link-node of $\ell$ to the path-node of $p$ in the Bayesian graph. A path-node represents an intermediate binary variable which is 0 if the path consists of a faulty link and 1 otherwise. Finally, the \textbf{bottom layer} consists of \textit{flow-nodes} -- one for every flow -- which are the observed variables. A flow-node has an edge from each path-node in the path-set of that flow.  We define the value of the flow-node variable as the number of \textit{bad} packets -- packets which experienced a problem.
We describe two ways of setting the bad packets variable:
\begin{itemize}[leftmargin=*]
    \item Per packet analysis: For packet loss and corruption, we set the number of \textit{bad} packets as the number of retransmissions which serves as a proxy for lost packets.
    \item Per flow analysis: To capture symptoms of high latency, we use a ``per-flow'' analysis which is in effect a special case of the per-packet model where the number of packets sent is 1, and the number of bad packets is set to 1 if the flow's RTT is above a threshold and 0 otherwise.
\end{itemize}

%This includes both packet loss and abnormally high latency, allowing us to capture the symptoms of multiple underlying fault types (random loss, corruption,  high latency). 

%fix{This spec of the model is neglecting a definition of the edges from middle to bottom layers, and the flow nodes' type and (conditional) distribution.}
%\fix{More precisely it's a setting of the hidden variables, right? ... which ends up being equivalent but I think explaining it all might help.  Also need to include device-nodes.} 
 
% \fix{This phrasing (also used below) makes it sound like we are modeling a binary variable that defines whether or not the flow exists.  Need to rephrase to be precise.}
% \fix{That previous phrase is actually an assumption of our model which is built into the conditional distribution of a flow node.  Right?  So it should be described before a hypothesis.}
\noindent \Sys's model assumes a flow $F$ is routed via ECMP; $F$ takes one of $w$ paths chosen uniformly at random, and packets experience problems independently and uniformly at random.  Thus, the probability of observing $r$ bad packets out of the $t$ sent can be given as:
%\vskip -0.1 in
\begin{equation}\label{likelihood equation}
P[F = (r, t)]=\frac{1}{w} \sum_{i=1}^{w} (1 - \gamma_i) p_b^{r} (1 - p_b)^{t-r} + \gamma_i p_g^{r} (1 - p_g)^{t-r}
\end{equation}

\noindent where $\gamma_i$ is the value of the $i^{th}$ possible path of $F$ ($\gamma_i = 0$ if a failed link is on the $i^{th}$ possible path and 1 otherwise).  $p_g$ and $p_b$ are model hyperparameters: $p_b$ represents the probability of a packet experiencing problems when taking a bad path (i.e., with at least one faulty link), and $p_g$ represents the probability of a packet experiencing problems on a good path (no faulty links). Intuitively, a packet going through a failed link is much more likely to experience a problem, hence $p_b >> p_g$. 
%For example, the parameters could be set as $p_g = 0.0001$ and $p_b = 0.002$ for detecting lossy links with roughly 0.2\% loss rates or more. 
\S~\ref{sec:paramCalibration} describes how to pick $p_g$ and $p_b$. Equation~\ref{likelihood equation} can also be adapted to include path weights, like in WCMP~\cite{WCMP}.

A \emph{hypothesis} is an assignment $H\in \{0,1\}^n$ for all link-nodes.  Equivalently, we can think of $H$ as a \emph{set of links} that are deemed to be failed, with all other link-nodes being not failed. %we will use that form in the remainder of this exposition. 
The goal of the inference is to recover the hypothesis that consists of all truly failed links and only those links. %Given a hypothesis $H$, one can define the probability for a single flow $F$ (which sent $t$ packets out of which $r$ experienced problems), as the probability of observing $r$ bad packets out of the $t$ sent. Assuming the flow takes one of $k$ paths chosen uniformly at random,  this can be given as:
Conditioned on a hypothesis $H$, the probability of the set of flow observations taking on the observed assignment of values (a certain number of bad packets $r_i$ out of $t_i$ packets sent, for each flow $i$) is simply the product of probabilities of all individual flow probabilities:

\begin{align*}
& P[F_1, F_2, \ldots, F_m|H]  = \prod_{F_i\in \text{flows}} P[F_i = (r_i, t_i)|H\big] = \prod_{F_i\in \text{flows}} P[F_i | H\big]
\end{align*}
where $F_i$ is shorthand for the event that flow $i$ takes on the observed metric values, i.e., $F_i = (r_i, t_i)$.

%--- 

%where $P[F_i|H]$ is the shorthand for $P[F_i = (r_i, t_i)|H]$ where $F_i$ represents the observed metric values of flow $F_i$, i.e. $F_i = (r_i, t_i)$ where $r_i$ and $t_i$ are the number of bad and sent packets observed by flow $F_i$, respectively.

%The above expression embodies an assumption that packets are affected independently of each other. This is a simplifying assumption, which is true in some cases~\cite{netbouncer}. 

\parheading{Incorporating Priors.} We assign a prior belief about failures by assuming that, a priori, any link can fail with probability $\rho$. %This is also a simplifying assumption since all links won't fail with the same probability or independently to each other. However,
The priors reduce the false positive rate by effectively assigning a lower prior to hypotheses with more links, thus favouring hypotheses with fewer failed links. If hypothesis $H$ contains $|H|$ candidate failed links and there are $n$ total links, then the likelihood of $H$ after incorporating priors is given as:
\vspace{-0.07in}\[Prior * \prod_{F_i\in \text{flows}} P[F_i|H];\ \ Prior = \rho^{|H|}(1-\rho)^{n-|H|}\]%\vspace{-0.05in}

%Even though the PGM could have many nodes, it has only 3 hyperparameters: $p_g$, $p_b$ and $\rho$. Section~\ref{sec:paramCalibration} describes how to calibrate them.

%Even though the PGM might have a large number of nodes (= number of links + flows), it has only 3 tunable  parameters: $p_g$, $p_b$ and $\rho$. Section~\ref{sec:paramCalibration} describes how to calibrate them.

%\if ShortenForEurosys \noindent We also add device nodes to the PGM, treating a device as another component in a flow's path, exactly as links. Other components like line cards, racks, pods etc. can be modeled in a similar way, but this is beyond the scope of this paper. 
%\else

\parheading{Model extensions.} The top layer nodes can include other component types besides links; we add device nodes, treating a device as another component in a flow's path, exactly as links. %We use a larger prior for a device to force \Sys to detect a device failure only when there is stronger evidence for a device failure than a link failure. 
We found that a device prior that is $5\times$ larger on log-scale, worked well in practice, as it forces \Sys to detect a device failure only when there is stronger evidence for it than a link failure.
Other components (line cards, racks, pods etc.) can be modeled in a similar way, but is beyond the scope of this paper.

%\fi

%Since a much larger volume of flows are incident on a device than on a link, we use a larger prior for a device ($5\times$ when comparing logarithms). 
%\pbg{This seems a bit arbitrary -- is it dependent on the number of links incident on the device?  Also, it's not going to be obvious to the reader why ``a much larger volume of flows are incident on a device than on a link'' means you need to do this.} \vipul{If a small number of its links (say 2 or less) have failed, it should prefer classifying them as link failures rather than device failures, hence it's a constant. In retrospect, something like constant times the number of device links could also be appropriate, but I haven't tried that.}
%This forces \Sys to detect a device failure only when the evidence is stronger than the evidence required for it to detect a link failure.

\parheading{Model Intuition.} The model effectively incorporates several kinds of uncertainty. Given an observation of bad packets in a flow, we may not know what path is responsible (modeled via flow nodes having multiple path parents).  Even if we do know the path, as we do for active probes and INT, we don't know what link is responsible\footnote{While INT can capture per-hop metrics, it may not detect where a gray failure happens.  For example, a silent corruption of packet data likely would not be detected until the packet reaches the receiver's host stack.} (this is modeled via path nodes having multiple link parents).  Even if we know what link is responsible, an observed bad packet might or might not mean there is a faulty link (this is modeled via both good and bad links having non-zero probability of bad packets).

\parheading{Differences from Sherlock's PGM.} Sherlock was intended to model application-level failures, and thus includes elements that we don't need such as services and load balancers. Sherlock uses three node states -- working, failed, and partially working; we omit the last, as Flock models some packet loss even for working links. Starting from Sherlock's model and making these changes results in a PGM that is very close to \Sys's, except for the probability formulations.

\parheading{Model Assumptions.} Like any other model~\cite{sherlock,shrink,netsonar}, ours has assumptions: fixed packet fail probabilities (as in~\cite{shrink}), classifying paths as failed on just the absence/presence of at least one failed link (as in~\cite{netsonar}), and packets getting affected independently (as in~\cite{netsonar,shrink,sherlock}). We tried different assumptions -- one with variable fail rates obtaining MLE using Nesterov's Gradient Descent algorithm, but it was too slow; another variation that treats paths differently depending on their number of failed links, but its accuracy was worse.\footnote{Note that Flock still localizes multiple failures on a path since there are flows that transit one failed link, but not the other.}
We present the model that gave the best results in experiments based on real environments that don't adhere to any model assumptions.  We also give theoretical (\S~\ref{greedyJustification}) evidence supporting Flock's effectiveness with these assumptions. 
%\pbgnew{Finally, it's important to keep in mind that no practical model will perfectly match reality -- \emph{and it does not need to:} to produce a correct answer, the model only needs to be ``close enough'' that the most likely explanation in the model is the right one.}
Finally, it's important to keep in mind that the model does not need to match reality perfectly, \emph{it only needs to be ``close enough'' that the most likely explanation in the model is the right one}. See Fig:~\ref{fig:example} in appendix for an illustration of how the PGM-inference can localize more accurately than past schemes.

\subsection{Inference algorithm} \label{sec:inf-alg}
We describe the inference algorithm next. For ease of exposition, this description only refers to links as components that can fail. Devices are treated exactly in the same way as links and our implementation handles both links and devices.

Recall that a hypothesis $H$ is a candidate set of  failed links.
From the model,  we can compute the probability of the flow variables taking their observed values (number of bad/sent packets for each flow) given $H$; this is the likelihood of $H$. We denote this likelihood as $L(H) = P[F_1|H]P[F_2|H]\ldots P[F_m|H]$. The maximum likelihood estimator  $\mathcal{H}$ is the hypothesis that maximizes $L(H)$, or equivalently log likelihood $LL(H) = \log L(H)$, that is  $\mathcal{H} =  ~ \underset{H\subseteq  \text{links}}{\text{arg max\ \ }} LL(H)$.

%\vskip -0.2cm
%\vspace{-0.05in}
% \begin{align*}
% \small
% \mathcal{H} = &  ~ \underset{H\subseteq  \text{links}}{\text{arg max\ \ }} LL(H) =  \underset{H\subseteq \text{links}}{\text{arg max\ \ }} \sum_{F\in \text{flows}}\log P[F| H]. 
% % =\ & \underset{A\in \{0,1\}^n}{\text{arg max\ \ }} \log\bigg[\frac{P\big(f_1, f_2, \ldots, f_m | A\big)}{P\big(f_1, f_2 ... f_m | 0^n\big)}\bigg] 
% \end{align*}
%\vspace{-0.01in}\normalsize
%where $n$ = number of links in the datacenter, $m$ = number of flows and $H \in\{0,1\}^n$ is a hypothesis.
We normalize all likelihoods by the likelihood of the no-failure hypothesis (i.e., $H_{0} = \{\}$) to cancel out any flow whose path set does not include any failed links in a hypothesis $H$.  
%Recall that an assignment of 1 to a link denotes that the link is up and functioning properly.

The goal of MLE inference is to compute $\mathcal{H}$. Simply computing the likelihood of each possible hypothesis would be impractically slow because there are $2^n$ hypotheses, where $n$ is the number of links. Sherlock~\cite{sherlock} and NetSonar~\cite{netsonar} limit max concurrent failures to $k$, reducing the search space to $O(n^k)$ hypotheses. However, in our setting, this is still far too slow, even for $k\!=2$ (\S~\ref{runningTimeSection}). Further, a datacenter can have many concurrent failures making $k\!>3$ important.

%\pbgnew{The inference algorithm's job is to compute $\mathcal{H}$. The simplest algorithm would sweep the entire search space of $2^n$ possible hypotheses and compute \pbg{Insert equation referenece} for each, but this is impractically slow.} \vipul{The next sentence is almost the same paragraph as the second last paragraph in section 2, can I remove it?}. Sherlock~\cite{sherlock} and NetSonar~\cite{netsonar} limit the number of concurrent failures to $k$, reducing the search space to $O(n^k)$ hypotheses. However, \pbgnew{in our setting,} this is still far too slow, even for $k=2$ (\S~\ref{runningTimeSection}). Further, a large datacenter can have many concurrent failures making $k>3$ \pbgnew{important}.

We introduce two algorithmic techniques to accelerate MLE inference in PGMs. \emph{Greedy search} reduces the number of hypotheses examined. While Greedy MLE is a simple idea, our main contribution is showing that it finds good solutions even after narrowing the hypothesis space: we provide theory (\S~\ref{greedyJustification}) and experiments (\S~\ref{experimentalSetup}). \emph{Joint likelihood exploration} (JLE) is a new algorithmic technique, that reduces the time required per examined hypothesis.  %JLE is our more algorithmically novel contribution, but 
Both techniques provide significant speedups individually and 
%much greater together result (\S~\ref{runningTimeSection}).
together speedup the inference by several orders of magnitude (\S~\ref{runningTimeSection}).

%\pbgnew{In \Sys, we introduce two algorithmic approaches to accelerate \vnew{MLE} inference. \emph{Greedy search} reduces the number of hypotheses examined. This is a common approach in many domains (though we don't know of it being applied in ours); the main contribution here is showing that it offers a large speedup in our domain without sacrificing accuracy (\S \fix{ref proof and experiments}). \emph{Joint likelihood exploration} (JLE) is a new technique that reduces the time required per examined hypothesis.  JLE is our more algorithmically novel contribution, but both techniques are practically important because they each offer a significant speedup independently, and even greater together (\S \fix{ref}).}

\parheading{Greedy Search}:
%We use a simple greedy based inference algorithm that runs in linear-time. More importantly, we show that it finds good solutions, both in theory (\S~\ref{greedyJustification}) and experiments (\S~\ref{experimentalResults}).
%We've designed a fast approximate linear-time inference algorithm that limits the number of hypotheses searched.
We start from the no-failure hypothesis and extend it one link at a time.  Specifically, we maintain a current hypothesis $H$.  Initially, $H = \{\}$. In each iterative step, we scan over each link $l \notin H$ and calculate $LL(H \cup \{l\})$.  If one of these log likelihoods improves over $LL(H)$, we set $H := H \cup \{l^*\}$ where $l^*$ is the link offering the biggest improvement, and continue iterating.  When no added link failure improves the log likelihood of the current hypothesis $H$, the search terminates and returns $H_{greedy} = H$.

%Since greedy search explores a narrow set of hypotheses, we need to show it still finds good solutions; we provide analysis (\S~\ref{greedyJustification}) and experiments (\S~\ref{experimentalSetup}).  Also, 

There is still a performance challenge with Greedy MLE. Each iterative step requires evaluating close to $n$ hypotheses (specifically $n - |H|$) to find $l^*$. Even with 40 cores, greedy search took over 3 hours for a medium-sized datacenter (\S~\ref{runningTimeSection}). This motivates our second key optimization.

%\subsection{Joint likelihood exploration} \label{greedy search acceleration}
\vskip 0.2 cm

\noindent \textbf{Joint likelihood exploration (JLE)}: We devise an additional acceleration technique for inference algorithms for PGMs with discrete variables. We use it to speed up each iteration of the greedy algorithm by a $O(n)$ factor, where $n$ is the number of components (links and switches). Note that greedy+JLE produces the exact same solutions as greedy.

%Consider any Bayesian network, where the in-degree and the out-degree of each node is bounded. We focus on the discrete case where the top layer nodes are unknown discrete Boolean variables and the bottom layer observed nodes can take real values. Let $n$ be the number of top-level nodes. %Let $\Gamma(v)$ denote the descendants of a node and $\Lambda(v)$ denote the ancestors of a node in the Bayesian network. %Let the probability distribution of a node $v$ given all its ancestors  $\Gamma(v)$ be denoted as $\mathcal{F}^v_{H}(x)$, where $H$ represents an assignment to the top-level nodes and $x$ can take boolean values if $v$ is not a bottom node and real values otherwise.
%For ease of exposition, we assume that the depth of $\mathcal{F}$ is $2$, but the technique is applicable to multi-layer models such as that of Sherlock and \textcolor{red}{cite} (note that the model we use in Flock is effectively a 2-layer probabilistic model, since the intermediate path nodes are a deterministic function of the link nodes).

Suppose we are given the current best hypothesis $H$ which has the maximum likelihood among hypotheses searched till now. Joint likelihood exploration is a technique to quickly explore all ``neighbors'' of $H$ -- all assignments that are different from $H$ in the inclusion or exclusion of exactly one link. Note that there are $n$ such neighbor hypotheses of $H$.
%and can be obtained from $H$ by flipping the failed/not-failed status of exactly one link or switch. 

\vskip -0.1in
\begin{definition}
Let $H\oplus l$ denote the hypothesis obtained by flipping the status of link $l$ in $H$, i.e., if $l\in H$ then $H\oplus l = H \setminus \{l\}$ and otherwise $H\oplus l = H \cup \{l\}$.
Let $\Delta_H(l)\! = LL(H\oplus l)- LL(H)$ represent the difference in log likelihoods of hypotheses $(H\oplus l)$ and $H$.
\end{definition}
\vskip -0.1in
\parheading{JLE Intuition}: We explain JLE by showing how it accelerates the Greedy algorithm (although it can also accelerate exhaustive search).  In each iteration, Greedy computes each $LL(H \cup \{l\})$ for each $l$, to find the link $l^*$ offering the most improvement.  Note that maximizing $LL(H \cup \{l\})$ is equivalent to maximizing the difference in log likelihoods, $\Delta_H(l)$.  So, in each iteration of Greedy, we could compute an $n$-element array $\Delta_H$ whose elements are the values $\Delta_H(l)$ for each link $l$, and then scan the array to find the largest value. This will find the same $l^*$ as in the original greedy algorithm.

%Suppose that in each iteration of Greedy, while considering how to extend the current hypothesis $H$, we compute an $n$-element array $\Delta_H$ whose elements are the values $\Delta_H(l)$ for each link $l$.  To find $l^*$, we can now scan the array to find the link $l \notin H$ with largest $\Delta_H(l)$.  Note that maximizing $\Delta_H(l)$ is equivalent to maximizing $LL(H \cup \{l\})$, so we will find the same $l^*$ as in the original greedy algorithm and move to the new hypothesis $H':=H\cup\{l^*\}$ in the next greedy iteration.

But how do we compute the array $\Delta_H$? If we do it in the obvious way, by iterating over each $l$ and directly computing $LL(H\oplus l)- LL(H)$, then this is nearly identical to the original greedy algorithm which computed $LL(H \cup l)$, with no appreciable change in runtime. What JLE offers is \emph{a faster way to compute $\Delta_H$}, with careful algorithm engineering. This involves two somewhat different types of computation: quickly preparing the array $\Delta_{H_0}$ for the first iteration; and quickly iteratively \emph{updating} the array for each subsequent iteration.

To initially create $\Delta_{H_0}$, note each entry $\Delta_{H_0}(l)$ represents the difference in log likelihood due to failing link $l$, compared to no failures.  To compute these differences, we only have to look at the effect on flows whose paths intersect $l$.  Furthermore, there is an important opportunity for memoization in computing the log likelihood-difference formula across different links $l$: the effect on a flow's likelihood depends only on the number of failed paths, not the specific failed links. 

Now suppose we have an existing $\Delta_H$ and the search algorithm is about to move from hypothesis $H$ to hypothesis $H'$ in its next iteration.  We need to compute the array $\Delta_{H'}$. To do this, we track the \emph{difference in the difference arrays} ($\Delta_{H}$ vs. $\Delta_{H'}$) rather than directly computing the difference in likelihoods of $H$ and $H\oplus l$ for every $l$. The key insight here is that each entry of the difference array $\Delta_H$ can be written as a sum of contributions for all flows and only some of the terms need to be updated after moving to a new hypothesis $H'$. The fact that we can do this faster than creating the array from scratch is the key to JLE's acceleration.

%Let $\Delta_H$ denote the $n$-element array storing the $\Delta_H(l)$ values for each of $H$'s $n$ neighbors (see definition above). Suppose we somehow have access to $\Delta_H$.
%We could then scan the $\Delta_H$ array to evaluate all $n$ neighbor hypotheses of $H$ (of the form $H\oplus l$ for all links $l$) to obtain the next hypothesis to move to, in the search algorithm. Specifically, with the greedy algorithm, this means finding the link $l$ which maximizes $LL(H \oplus l)$, which is equivalent to maximizing $\Delta_H(l)$ over all $l$.  Suppose $l'$ is the maximizing link, so the best neighboring hypothesis is $H'=H \oplus l'$. This finishes the current greedy iteration and our current best hypothesis moves from $H$ to $H'$.

%Let's say we're given the array $\Delta_H$ for the current hypothesis $H$. We then scan $\Delta_H$ to figure out the next hypothesis in the neighborhood of $H$ to move to: $H'=H \oplus l'$, for some link $l'$. This finishes the current greedy iteration and our current best hypothesis becomes $H'$.

%Let's say we obtained the new hypothesis by adding one link to the current estimate hypothesis, as in the greedy algorithm. Having access to the $\Delta_H'$ array would allow us to evaluate $O(n)$ hypotheses quickly for the next iteration of the greedy algorithm. 

\parheading{JLE formalization:} 
First the algorithm needs to compute the array $\Delta_{H_0}$ for the first iteration of Greedy. We omit these details due to space; see function ComputeInitalDelta in 
\ifEurosys Algorithm 2 of ~\cite{FlockFullDraft} \else Algorithm~\ref{HypothesisSearchAlgorithm} in appendix \fi which follows the intuition above.

Next, we describe how the $\Delta_H$ array can be updated when the greedy algorithm moves to a new hypothesis $H'$. We first note that $LL(H)$ is a sum of contributions from all flows and can be written as $LL(H) = 
\sum_{F\in \text{flows}} LL_F(H)$, where $LL_F(H) = \log P[F|H]$. We have 
%\vspace{-0.04in}
\begin{align*}
\Delta_H(l) &= LL(H\oplus l) - LL(H)  \\
=& \sum_{F\in \text{flows}} LL_F(H\oplus l) - \sum_{F\in \text{flows}} LL_F(H) 
%\\
%&= \sum_{F\in \text{flows}} (LL_F(H\oplus l) -  LL_F(H)) 
= \sum_{F\in \text{flows}} \Delta_H(l, F)
\end{align*}
where $\Delta_H(l, F) = LL_F(H\oplus l) - LL_F(H)$. Next we derive a useful property about the individual flow contributions $\Delta_H(l, F)$.
%\vskip -0.4cm

\vskip -0.15in
\begin{definition}
A flow $F$ intersects with link $l$ if at least one of the possible paths for $F$ has link $l$.
\end{definition}
\vskip -0.15in
\begin{theorem}\label{hypothesis search acceleration logic}
For a link $l'$ and hypothesis $H$, let $H' = H \oplus l'$.  Then for all links $l$ and flows $F$,
\begin{enumerate}[leftmargin=*]
    \item[(i)] If $F$ does not intersect with $l$, then $\Delta_H(l, F) = 0$
    \item[(ii)] If $F$ does not intersect with $l'$, then $\Delta_{H'}(l,F)\!=\Delta_H(l,F)$.
\end{enumerate}
\vskip -0.15in
\end{theorem}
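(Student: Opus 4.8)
The plan is to isolate a single \emph{invariance lemma} and then obtain both parts as one-line consequences of it. The lemma I would state and prove first is: if a link $m$ lies on no possible path of a flow $F$, then $LL_F(G\oplus m)=LL_F(G)$ for every hypothesis $G$. The proof of this lemma is the only place any model-specific fact is needed: by Equation~\ref{likelihood equation}, $P[F\mid G]$ depends on $G$ only through the per-path indicators $\gamma_1,\dots,\gamma_w$, and $\gamma_i=0$ exactly when $\mathrm{path}_i\cap G\neq\emptyset$. Since $m\notin\mathrm{path}_i$ for every possible path $i$ of $F$, we have $\mathrm{path}_i\cap(G\oplus m)=\mathrm{path}_i\cap G$ for all $i$, so every $\gamma_i$ — and hence the value of $P[F\mid G]$ — is unchanged when the status of $m$ is flipped in $G$. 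Taking logs gives the lemma; the normalization by $P[F\mid H_0]$ is irrelevant since it is a fixed positive constant that cancels in the difference $LL_F(G\oplus m)-LL_F(G)$.

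Part (i) then follows immediately by applying the lemma with $m=l$ and $G=H$: because $F$ does not intersect $l$, $LL_F(H\oplus l)=LL_F(H)$, i.e.\ $\Delta_H(l,F)=0$.

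For part (ii), set $H'=H\oplus l'$ and assume $F$ does not intersect $l'$. I would invoke the lemma twice with $m=l'$. First, with $G=H$: $LL_F(H')=LL_F(H\oplus l')=LL_F(H)$. Second, using that successive single-link flips commute (symmetric difference is associative and commutative), so that $H'\oplus l=(H\oplus l')\oplus l=(H\oplus l)\oplus l'$, apply the lemma with $G=H\oplus l$ to get $LL_F(H'\oplus l)=LL_F((H\oplus l)\oplus l')=LL_F(H\oplus l)$. Subtracting the two equalities yields $\Delta_{H'}(l,F)=LL_F(H'\oplus l)-LL_F(H')=LL_F(H\oplus l)-LL_F(H)=\Delta_H(l,F)$.

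I do not expect a serious obstacle; the only point I would be careful about — and hence the ``hard part,'' though it is mild — is making the invariance lemma airtight: the claim that $\gamma_i$ is a function of $\mathrm{path}_i\cap G$ alone (immediate from the model's definition of a failed path), together with the elementary set identity $\mathrm{path}_i\cap(G\oplus m)=\mathrm{path}_i\cap G$ whenever $m\notin\mathrm{path}_i$, and the commutativity/associativity of single-link flips used in part (ii). Once the lemma is in hand, neither part uses any property of the hyperparameters $p_g,p_b$, the path weights $1/w$, or the product-over-flows decomposition of $L(\cdot)$, so the argument is uniform and short.
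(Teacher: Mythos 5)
Your proposal is correct and follows essentially the same route as the paper: the paper's proof also rests on the single observation that $LL_F(\cdot)$ is unchanged by flipping any link not intersecting $F$, applying it once for (i) and twice (using $H\oplus l'\oplus l = H\oplus l\oplus l'$) for (ii). Your version merely makes the invariance step and the commutativity of flips more explicit via Equation~\ref{likelihood equation}, which is a fine, slightly more detailed rendering of the same argument.
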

\begin{proof}
This can be easily seen by expanding $\Delta_H'(l, F)$ and $\Delta_H(l, F)$. 
%The important observation here is 
We note that for any $H$, the log likelihood of a flow $F$, given by $LL_F(H) = \log P[F|H]$, does not depend on a link $l$ if $F$ does not intersect with $l$ (that is, $LL_F(H) = LL_F(H\oplus l)$).

For (i),  if link $l$ does not intersect with $F$, then flipping $l$'s status does not affect $F$: $LL_F(H)\! = LL_F(H\oplus l) \Rightarrow \Delta_H(l,F)=0$.

For (ii), when $l'$ does not intersect with flow $F$, $LL_F(H') = LL_F(H\oplus l') = LL_F(H)$ and $LL_F(H'\oplus l) = LL_F(H\oplus l' \oplus l) = LL_F(H\oplus l)$. Consequently, we get $\Delta_{H'}(l, F)\! = \Delta_H(l, F)$.  \end{proof}

% %\vspace{-0.02in}
% \begin{equation} \label{eq_ll_h}
% LL_F(H') = LL_F(H\oplus l') = LL_F(H)
% \end{equation}
% \begin{equation}\label{eq_ll_hl}
% LL_F(H'\oplus l) = LL_F(H\oplus l' \oplus l) = LL_F(H\oplus l)
% \end{equation}
% Using equations~\ref{eq_ll_h} and~\ref{eq_ll_hl}, we get $\Delta_{H'}(l, F)\! = \Delta_H(l, F)$.  
% %\pbg{In the first term should $\Delta_H'$ be $\Delta_{H'}$?} \vipul{Fixed.}
%\vskip -0.10in
Hence, to obtain $\Delta_{H'}(l)$ from $\Delta_{H}(l)$, we only need to update the terms $\Delta_{H}(l, F)$ for flows $F$ that intersect with \emph{both} links $l'$ and $l$ since all other flow contributions to $\Delta_{H'}(l)$ remain unchanged from $\Delta_{H}(l)$. Let flows$(l', l)$ denote the set of flows that intersect with both $l$ and $l'$.
%(the function GetRelevantFlows in Algorithm~\ref{HypothesisSearchAlgorithm} in appendix). %This is what the procedure ComputeNewDeltaArr of Algorithm~\ref{HypothesisSearchAlgorithm} does.
%Let flows$(l', l)$ be the flows that intersect with both link $l'$ and $l$.
After updating the current hypothesis from $H$ to $H'$,  we can compute the new entry $\Delta_{H'}(l)$ for link $l$ using Theorem~\ref{hypothesis search acceleration logic}: 

\begin{align*}\label{DeltaUpdateEquation}
 &\Delta_{H'}(l)=\!\sum_{\mathclap{F\in \text{flows}}}\!\Delta_{H'}(l, F) = \sum_{\mathclap{F\in \text{flows}}} \Delta_H(l, F) + \sum_{\mathclap{F\in \text{flows}(l', l)}} \Delta_{H'}(l, F) - \Delta_H(l, F) 
 %&= \Delta_H(l) + \sum_{\mathclap{F\in \text{flows}(l', l)}}\! \Delta_{H'}(l, F) - \Delta_H(l, F)
 \end{align*}

\begin{equation} 
\Rightarrow \boxed{\Delta_{H'}(l)  =  \Delta_H(l) + \sum_{\mathclap{F\in \text{flows}(l', l)}}\! \Delta_{H'}(l, F) - \Delta_H(l, F)}
\end{equation} 

%&= \sum_{F\in (\text{flows}\setminus \text{flows}(l', l))} \Delta_H(l, F) + \sum_{F\in \text{flows}(l', l)} \Delta_{H'}(l, F) \\

%Where we used Theorem~\ref{hypothesis search acceleration logic} for the first deduction. %It can be seen that, to update all $n$ entries of $\Delta_H'$ array, we only need to update contributions for flows that intersect with link $l'$, which we refer to as relevant flows for this iteration (the function GetRelevantFlows in Algorithm~\ref{HypothesisSearchAlgorithm} in appendix).
\noindent Once we have equation~\ref{DeltaUpdateEquation}, the algorithm to update the $\Delta$ array, for all $n$ entries, is simple to state. After moving to $H' = H\oplus l'$, we iterate over all flows that intersect with $l'$. For each such flow $F$, let $L_F$ be the set of links that intersect with $F$. For each $l\in L_F$, we update $F$'s contribution to $\Delta_{H'}(l)$. 
With memoization, one can update all entries $\Delta_{H'}(l, F)$ for all $l \in L_F$ in a couple of passes over $L_F$, similar to how we initially computed $\Delta_{H_0}$. The crux of the greedy+JLE algorithm is outlined in \ifEurosys Algorithm 1 of ~\cite{FlockFullDraft} \else  Algorithm~\ref{JLEAlgorithmShort} in appendix \fi and the full pseudocode in \ifEurosys Algorithm 2 of ~\cite{FlockFullDraft} \else Algorithm~\ref{HypothesisSearchAlgorithm} is outlined in appendix \fi . 

%\fix{Explain how transpose updating works.}

Given $LL(H)$, an alternate approach is to compute $LL(H\oplus l)$ without JLE, individually for each $l$, as in ~\cite{sherlock,netsonar}. This requires updating the contribution of all flows that intersect with $l$ since their likelihoods $LL_F(H\oplus l)$ would have changed after flipping the status of link $l$. Thus, the number of flows whose contributions need to be updated for computing just one entry $LL(H\oplus l)$ for a single $l$ is the same as that for computing all $n$ entries of the $\Delta$ array jointly with JLE. Thus, JLE results in in a $O(n)$ speedup. The reason for this large improvement is that JLE tracks the change in the $\Delta$'s (i.e., the difference in the differences:  $LL_F(H\oplus l) - LL_F(H)$) across iterations which allows reuse of computation from the previous iteration.  %whereas the former approach tracks the change in likelihoods.

%\Sys scans $Kn$ hypotheses in total,
%~\footnote{We picked top $C=4$ hypotheses every greedy iteration instead of one for a more exhaustive search, but didn't notice any accuracy gains over $C=1$.},
%compared with $O(n^k)$ for Sherlock~\cite{sherlock} and NetSonar~\cite{netsonar}. 
%\pbg{That's a bit hard to follow. The number of hypotheses scanned is a property of greedy search, not JLE, right?  But we're in the JLE subsection.  And it's not clear how the inference running time compares to Sherlock and NetSonar.} 

%For clarity, we described JLE as applied to greedy inference.  However, it is a general technique that 

Besides greedy search, JLE can apply to any algorithm which explores a hypothesis $H$'s neighbors: $H\oplus l$ for all $l$. This includes brute force, Sherlock and NetSonar's inference, and MCMC techniques (e.g. Gibbs sampling). 
Using JLE, we were able to accelerate (i) Sherlock's inference (\ifEurosys Algorithm 3 in~\cite{FlockFullDraft} \else Alg.~\ref{JLEBruteForceAlgorithm} in appendix\fi), and (ii) Gibbs sampling for Flock, both by multiple orders of magnitude. We ended up using Greedy for \Sys because (i) Sherlock's inference can not detect $K>2$ concurrent failures and was still slow with JLE (\S~\ref{runningTimeSection}) and (ii) for Gibbs sampling, it's hard to bound the number of iterations required for convergence.
%, which could be large.
%\pbgnew{was also slower, when allowing for the several iterations of sampling needed to converge}. 
Gibbs sampling~\cite{networkTomographyGibbsSampling} without JLE was too slow for our purposes.

\ifGeneralJLE
\subsection{Generalizing JLE for any PGM}
In this section, we describe a generalized version of JLE for any PGM.  Consider a probabilistic graphical model $\mathcal{G}$ where the nodes are unknown discrete random variables. We limit ourselves to boolean variables, but our technique can be generalized to discrete variables with finite support. Let $\Omega(t)$ denote the Markov blanket for a node $t$. For an undirected model, $\Omega(t)$ is the set of neighbors of $t$. For a directed model, $\Omega(t)$ consists of immediate parents and children of $t$ and parents of immediate children of $t$. Given its Markov blanket, $t$ is independent of any other variable. We describe the algorithm for the undirected model, the directed model is similar. 
%The bottom layer observed nodes can take real values. 
Let $n$ denote the number of (unknown) nodes in $\mathcal{G}$. %Let $\Gamma(v)$ denote the descendants of a node and $\Lambda(v)$ denote the ancestors of a node in the Bayesian network. %Let the probability distribution of a node $v$ given all its ancestors  $\Gamma(v)$ be denoted as $\mathcal{F}^v_{H}(x)$, where $H$ represents an assignment to the top-level nodes and $x$ can take boolean values if $v$ is not a bottom node and real values otherwise.
%Let $n$ denote the number of nodes in $\mathcal{G}$. %The Bayesian network defines a probability distribution over the space of all possible values that its random variables can take.
Let $x$ denote a particular setting for the random variables in $\mathcal{G}$ (also known as a hypotheses in many practical settings). The probability mass of $x$ in terms of the node factors is written as-
\[
P[x] = \frac{1}{Z}\prod_{i=1}^{n} \phi_i(x_i, N_x(i))
\]
Where $N_x(i)$ denotes the values of neighboring nodes of variable $i$ in $\mathcal{G}$ under $x$ and $Z$ is an unknown normalizing constant. It's often easier to work with the log probability, so we'll work with $L(x) = \log P[x] = \sum_i L^i(x)$, where $L^i(x) = \log  \phi_i(x_i, N(x_i))$ denotes the logarithm of the node factor for node $i$. The goal of MLE is to find $x$ that maximizes $L(x)$. For a given setting $x$ for the random variables, consider the following function defined for all nodes: $i \in [1,n]$,   
\[\Delta_{x}(i) = L(x \oplus i) - L(x)\]
% \begin{align*}
% L[x] = \log P[x] &= \sum_{i=1}^{n} \log P[x_i | parents(x_i)] \\
% &= \sum_{i=1}^{n} L[x_i | parents(x_i)]
% \end{align*}
Where $x \oplus i$ denotes the variable setting obtained by flipping the $i$th bit in $x$. If we had access to such a function for all $i$, then an iterative search algorithm can evaluate $n$ assignments quickly to move to a better estimate of the posterior given by-
\[x'\text{(new estimated hypothesis) } = x\oplus i^*, \text{ where } i^* = \argmax\limits_{i} L(x \oplus i) = \argmax\limits_{i} \Delta_{x}(i)\]
This local search technique is at the core of several common inference algorithms like greedy, exhaustive search and even Gibbs sampling and Metropolis Hastings move to a better estimate via sampling from all possible $x\oplus i$'s.  Having access to such an array could potentially speed up all of these algorithms. As before, the key observation is that the $\Delta_{x}$ function can be computed inductively as the search algorithm proceeds, in an efficient way using dynamic programming. Let's say we have $\Delta_{x}(i)$ from the previous iteration of the greedy algorithm and the search algorithm moves to the next candidate $x' = x \oplus i^*$. Now, we wish to compute $\Delta_{x'}(i)$ for the next iteration of the inference for all $i$. 

\begin{itemize}
\item Let $\Delta^j_{x}(i) = L^j(x'\oplus i) - L^j(x')$, so that $\Delta_{x}(i) = \sum_{j}\Delta^j_{x}(i)$. 

\item For node pairs $j, i\in\Omega(j)$, store values for $\Delta^j_{x}(i)$. Note that $\Delta^j_{x}(i) = 0 \forall i \notin \Omega(j)$. 

\item After a search iteration is over and the algorithm moves to a new estimate $x' = x \oplus i$, we compute the new values $\Delta^j_{x'}(i)\ \forall j,i \in \Omega(i)$. For nodes $j \notin \Omega(i^*)$, $\Delta^j_{x'}(i) = \Delta^j_{x}(i)\ \forall i$. For node $j \in \Omega(i^*)$, we recompute  $\Delta^j_{x'}(i)\ \forall i \in \Omega(j)$. If $|\Omega(i)| \leq C\ \forall i$, we will have to update $O(C^2)$ entries in total.
\end{itemize}

Note that we don't really need to store $\Delta^j_{x}(i)\ \forall j, i\in \Omega(j)$. We're interested only in the sum $\Delta_{x}(i) = \sum_j \Delta^j_{x}(i)$ for each $i$. We can maintain the sum $\sum_i \Delta^j_{x}(i)$ for all nodes $i$. When we compute $\Delta^j_{x'}(i)$, we can add this to existing sum and  subtract $\Delta^j_{x}(i)$ (which also has to be computed).

\textbf{Analysis}: If $|\Omega(i)| \leq C\ \forall \text{ nodes } i$ and computing any node factor $\phi_i$ takes $O(M)$ time, then each iteration takes $O(C^2M)$ time. Computing $L(x\oplus i)$ for all nodes naively would take $O(nM)$ time. This gives a speedup of $O(n/C^2)$ for greedy/exhaustive search in models with small markov blankets.
    
\textbf{Corollary for MCMC}: Each iteration of MCMC involves picking a node $i$, uniformly at random and resampling $i$ given all other nodes i.e. $x^{new}_i \sim P[i|N_x(i)]$. With JLE, we have access to $P[i|N_x(i)] \forall i$ and hence instead of picking a node uniformly at random and then resampling that node (which could result in several iterations where the assignment does not change), we can execute the same sampling by picking a node based on probabilities proportional to $P[\neg x_i|N_x(i)]$ and flipping the value of that node. The speedup obtained is roughly $\big(f/(C^2 +\frac{\log n}{M})\big)$ over vanilla MCMC, where $f=n/\sum_i P[\neg  x_i|N_x(i)]$. In fact, optimizing MCMC with JLE results in an algorithm that is used in the Computational Physics community for Ising model and is known as  \emph{Kinetic Monte Carlo}\cite{efficientKineticMonteCarlo}.

While JLE is applicable in all cases, it will not necessarily result in a speedup, for instance, JLE is expensive for models with large vertex degrees. We leave a more detailed study along with experimental evaluation on other previously used PGMs for future work.

\fi

% \begin{itemize}[leftmargin=10.0]
% \item \SystemName effectively uses a per-packet probabilistic graphical based model. Sherlock has mechanisms that differ from our network-layer setting, while Gestalt and Netsonar employ scoring mechanisms to rank hypotheses (a hypothesis is a candidate set of failed links). %Max-coverage~\cite{maxCoverage} tries to find a minimal set of failed nodes that explain most of the problematic flows (or packets), and we believe runs into similar problems as 007.

% \item Flock employs a linear time greedy search algorithm, with an acceleration technique, that improves the inference time dramatically (1000x empirically and $O(n)$ asymptotically \S~\ref{greedy search acceleration}) over the baseline such as the ones used in~\cite{sherlock,gestalt}. This eliminates the need to exhaustively search the hypotheses space, making \SystemName practical for datacenter scale. 
% \end{itemize}

%\textcolor{red}{Put toy example case where bayesian network does better than 007 and netbouncer.}
%\textcolor{red}{Mention advantage of probabilistic model in that is easier to reason about greedy/conditional/reduced inference.}

%\section{Analysis summary}
%We present analytical results for runtime and accuracy for Flock. The derivations of these results and discussions are included in the appendix \fix{} of (\ifEurosys ~\cite{FlockFullDraft} \else \fi).

\section{\Sys: Analysis} \label{flockAnalysis}
%\noindent We analyze the runtime of Flock's inference and conditions under which Flock returns the set of all failed links, under restricted settings.

%provide theoretical analysis on Flock's effectiveness in recovering the correct 

\subsection{Runtime analysis} \label{runtimeAnalysis}
Let $n$ be the number of links, $m$ be the number of flows, $T$ be an upper bound on the number of links that any flow intersects with, $D$ be an upper bound on the number of flows that any link intersects with and $K$ be the maximum number of concurrent failures (note \Sys's inference does not know $K$). The runtime of Greedy inference with JLE is $O(n+mT+(K\!-\!1)DT)$. If we had used just Greedy without JLE (computing likelihood of each hypothesis individually), the runtime would be $O(n + mT + (K\!-\!1)nDT)$. In contrast, Sherlock's runtime is $O(n^KDT)$. JLE can improve Sherlock's runtime by a factor of $n$, to $O(n^{K-1}DT)$. 
From the analysis above (and our experiments later), it can be seen that Greedy + JLE is dramatically faster. See \ifEurosys section C of \cite{FlockFullDraft} \else \S~\ref{fullRuntimeAnalysis} of appendix \fi for derivations of these results.

\subsection{Accuracy analysis}
\label{greedyJustification}
\noindent %We derive conditions on the traffic matrix under which greedy inference recovers the correct set of failed links using Flock's model. 
%Our scheme will be also validated experimentally. %Additionally, it holds in some cases -- for instance, when the input consists only of orchestrated active probes as in ~\cite{netbouncer}. 
%If the topology, flow statistics and path taken for each flow are chosen arbitrarily, then finding the MLE for such adversarial inputs, unsurprisingly, is NP-hard (\S~appendix~\ref{greedyJustificationProofs}). %We call this problem \textit{Adversarial Inference}.
%\begin{theorem}\label{adversarialVersionNPHard}
%Adversarial inference is NP-hard.
%\end{theorem}
%Proof is provided in the appendix.
%However, we can make the following assumption, in the context of packet drops, to alleviate intractability for adversarial inputs: for each link, there is a (unknown) ground truth drop probability. Rather than adversarially, each packet crossing that link is dropped independently according to the drop probability of the link. 
% \begin{itemize}[leftmargin=*]
% \item For every link, there is a (unknown) ground truth probability of dropping a packet by that link. Each packet crossing a link is dropped independently according to the ground truth probability for that link (rather than adversarially). 
% \item The topology is given to be a typical datacenter topology such as leaf spine or a fat tree~\cite{AlFares08}.
% \end{itemize}

We now analyze conditions in which Greedy returns the true MLE hypothesis.
%, in restricted settings.
%In this analysis, we assume that the path taken is known for each flow (which is usually true for active probes and INT). 
%If the packets crossing a link get dropped arbitrarily, then the inference is NP-hard. We call this adversarial inference (see \ifEurosys ~\cite{FlockFullDraft} \else appendix\fi). 
To make the problem tractable, we restrict the analysis to cases where path taken is known (true for active probes and INT) and packets crossing a link get dropped independently according to a (unknown) drop probability of that link (inference is NP hard if packets get dropped adversarially, see \S~\ref{adversarialVersionNPHardAppendix} in appendix).
%(which is NP-hard, see \ifEurosys ~\cite{FlockFullDraft} \else appendix\fi).
Theorem~\ref{greedyGuaranteeConditions} gives a sufficient condition on the traffic pattern for \Sys's inference to correctly recover the set of failed links, providing intuition for why Flock's model and inference work well in practice. 
%Outside this restricted regime, \Sys's accuracy in theory is an open question, but the analysis provides good intuition for why Flock's model and inference work well in practice.
\vskip -0.1 in
\begin{definition}
For given traffic $T$, let $T(\{l_1,l_2,...l_k\})$ denote the number of packets that each go through all of the links $\{l_1, l_2, ... l_k\}$. If $T(\{l_1, l_2\})/T(\{l_1\}) \leq \epsilon$ for all links $l_1$ and $l_2$, then we say $T$ is \textit{$\epsilon$-skewed}.
\end{definition}
\vskip -0.1 in
%$\epsilon-skewed$ traffic for large $\epsilon$ (say $\epsilon=$1) can confuse the algorithm. Consider traffic in a line topology where there's only one path with several hops. No inference algorithm can  distinguish between two intermediate links on the path since all packets on one failed link also transit the other.
\noindent \begin{theorem} \label{greedyGuaranteeConditions} 
For any topology with $(1/\alpha)$-skewed traffic, with high probability, \Sys's inference returns the set of all failed links if the number of failures is $\leq \alpha/2$, the number of packets $T_{min}$ crossing every link is larger than a certain threshold, and the drop probabilities are $<p_g$ on all good links and $>p_b$ on all failed links where ($p_g$, $p_b$) satisfy the condition $5p_g < p_b < 0.05$.
\end{theorem}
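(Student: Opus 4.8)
The plan is to track the Greedy trajectory directly: conditioned on a high-probability concentration event I would show that each iteration appends exactly one true failed link until the current hypothesis equals the true failed set $H^{\ast}$, after which every remaining move is non-improving and Greedy halts; since Greedy never deletes a link, the output is then exactly $H^{\ast}$. The first step is to rewrite the likelihood. With paths known, a flow on path $P$ contributes $r\ln p_g+(t-r)\ln(1-p_g)$ if $P$ carries no failed link of the current hypothesis and $r\ln p_b+(t-r)\ln(1-p_b)$ otherwise, so $LL(H)$ depends on $H$ only through which paths are ``bad'', and we may aggregate packets per path ($t_P$ sent, $r_P$ bad). A direct computation then gives, for $l\notin H$,
\[
\Delta_H(l)=(A+B)\,T^{H}_{l}\bigl(\hat q^{H}_{l}-\tau^{\ast}\bigr),
\]
where $A=\ln(p_b/p_g)$, $B=\ln\frac{1-p_g}{1-p_b}$, $\tau^{\ast}=B/(A+B)$, $T^{H}_{l}$ is the number of packets on paths that newly turn bad when $l$ is added (paths through $l$ with no failed link in $H$), and $\hat q^{H}_{l}$ is the empirical loss rate over exactly those packets. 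Thus adding $l$ helps iff the observed loss on the paths it would ``explain'' clears the decision threshold $\tau^{\ast}$. The hypotheses $5p_g<p_b<0.05$ are exactly calibrated to make $p_g<\tau^{\ast}<p_b/2$: failure-free paths (loss $<p_g$) fall below $\tau^{\ast}$ while any path through a failed link (loss $>p_b$) clears it with a factor-two margin.

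Next I would establish concentration plus two structural facts from the skew condition. Hoeffding gives $|\hat q^{H}_{l}-\mu^{H}_{l}|<\delta$ for fixed $(l,H)$ with probability $1-2e^{-2T^{H}_{l}\delta^{2}}$, where $\mu^{H}_{l}$ is the true loss rate on the relevant paths; a union bound over all $n$ links and all $2^{|H^{\ast}|}\le 2^{\alpha/2}$ subsets $H\subseteq H^{\ast}$ (the only hypotheses Greedy visits if the main claim holds), with $\delta$ a constant fraction of $p_b-\tau^{\ast}$, forces the bad event to probability $o(1)$ provided $T_{\min}$ exceeds a threshold of order $(p_b-\tau^{\ast})^{-2}(\alpha+\log n)$ --- this is where ``with high probability'' and the $T_{\min}$ hypothesis come from, and from here the argument is deterministic. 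For the structural facts: on any $H\subseteq H^{\ast}$ with $|H|\le\alpha/2$ and any link $l$, $(1/\alpha)$-skew implies packets on paths through $l$ that also meet $H$ total at most $\sum_{l'\in H}T(\{l,l'\})\le(|H|/\alpha)T(\{l\})\le T(\{l\})/2$, so $T^{H}_{l}\ge T(\{l\})/2\ge T_{\min}/2$; and for a \emph{good} link $l\notin H^{\ast}$, every path making $l$ look bad must pass through one of the $\le\alpha/2$ undetected failures, so that path mass is at most $\tfrac{1}{\alpha}\sum_{l''\in H^{\ast}\setminus H}T(\{l''\})$, while each responsible $l''$ carries $\alpha$ times more traffic overall than it shares with $l$.

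The induction then runs with invariant $H\subseteq H^{\ast}$. If $H\subsetneq H^{\ast}$: for a true $l^{\ast}\in H^{\ast}\setminus H$ every newly-bad path has loss $>p_b$, so $\Delta_H(l^{\ast})\ge(A+B)(T_{\min}/2)(p_b-\tau^{\ast})>0$ and an improving move exists. For a good $l\notin H^{\ast}$ the failure-free part of its newly-bad paths contributes negatively (loss $<p_g<\tau^{\ast}$), so $\Delta_H(l)$ is controlled by the loss on its overlap with the undetected failures; by the structural facts that overlap is a $\le\tfrac{1}{\alpha}$ fraction of each responsible $l''$'s traffic, and using $\tau^{\ast}<p_b/2$ (hence $q_{l''}/(q_{l''}-\tau^{\ast})\le 2$) this makes $\Delta_H(l)$ strictly smaller than $\max_{l''\in H^{\ast}\setminus H}\Delta_H(l'')$. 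Hence the best improving move is the addition of a true failed link and the invariant is preserved. When $H=H^{\ast}$, adding any $l\notin H^{\ast}$ flips only failure-free paths (loss $<p_g<\tau^{\ast}$), so $\Delta_{H^{\ast}}(l)<0$ and Greedy stops with output exactly $H^{\ast}$.

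The hard part will be the comparison in the last step: showing the \emph{best} true link beats \emph{every} false link at \emph{every} reachable $H$. A good link can legitimately look bad when it shares traffic with a heavy, loss-heavy failure (a black hole being the worst case), so the argument must charge that apparent evidence to the responsible failed link(s) and then lean on $(1/\alpha)$-skew (each shares only a $1/\alpha$ fraction of its traffic with the good link), on ``$\le\alpha/2$ failures'', and on the factor-two margin $\tau^{\ast}<p_b/2$ baked into $5p_g<p_b<0.05$. Because those inequalities are essentially tight at the corners of the allowed parameter region, the bookkeeping must be careful enough to yield \emph{strict} domination (so Greedy cannot even tie on a false link); that is where essentially all the effort lies, the concentration and set-up being routine.
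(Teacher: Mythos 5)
Your proposal is correct and follows essentially the same route as the paper's proof: the same rewriting of $\Delta_H(l)$ as (packets newly explained) $\times$ (empirical loss minus a threshold $\tau^{\ast}$, which is the paper's $\mu$), the same use of $5p_g<p_b\le 0.05$ to get $p_g<\mu<p_b/2$, the same $1/\alpha$-skew charging of a good link's apparent evidence to the responsible failed links with the factor-two margin, and the same induction (Greedy adds true failed links one at a time, then all remaining $\Delta$'s are negative w.h.p. by Chernoff/Hoeffding plus a union bound). The only real difference is at intermediate iterations: you keep the full instance and bound $T^{H}_{l}\ge T(\{l\})/2$, which makes the strict-domination bookkeeping you flag as the hard part close only after the sharper counting $|H^{\ast}\setminus H|\le \alpha/2-|H|$ and $T^{H}_{l''}\ge(1-|H|/\alpha)T(\{l''\})$, whereas the paper instead deletes all flows crossing already-chosen links and re-derives $1/(\alpha-1)$-skew for the reduced instance, which sidesteps that factor-two slack.
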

\vskip -0.05in
\noindent Proof in \ifEurosys ~\cite{FlockFullDraft} \else appendix\fi. Theorem~\ref{greedyGuaranteeConditions} has an intuitive interpretation: Consider two links $l_1$ and $l_2$, where $l_1$ has failed and $l_2$ works correctly. Intuitively, at most $\frac{1}{\alpha}$ fraction of the dropped packets on $l_1$ will transit $l_2$. One can think of this as $l_2$ getting ``$\frac{1}{\alpha}$ fraction of the blame'' for the dropped packets on $l_1$. If there are $(f\alpha)$ failures and these are arranged adversarially so that all of them add blame to $l_2$, then in the worst case $l_2$ can get $(f\alpha)\cdot \frac{1}{\alpha} = f$ times as much blame as a true failed link.  Intuitively, a large enough constant $f$ would confuse the algorithm into classifying $l_2$ as failed. The theorem proves that for $f < 0.5$, the algorithm outputs the true failed links.

%then adversarially $l_2$ can get $f$ fraction of the blame as any true failed link (summing up fractional blames for all true failed links).

%Theorem~\ref{greedyGuaranteeConditions} holds for a wide range of values of $p_g$ and $p_b$ (see Lemma~\ref{paramConditions}).

%The condition $T_{min} > T_0$ ensures that the effect of variance in the number of packets dropped by a link is small.
%\textcolor{red}{Provide examples of (1/alpha)-skewed traffic and how that translates to real life}

\section{Implementation} \label{implementation}

%Our implementation of \Sys is comprised of (i) the end-host agent that monitors flow statistics at hosts, and (ii) the inference engine which collects flow statistics from the agents and runs inference periodically.

\subsection{Agent and inference engine}

\noindent We implement an agent that runs on end-hosts and collects flow statistics via a lightweight packet dumping tool based on the PF\_RING module~\cite{pfring}. %The agent maintains a hash table of flow entries, where each bucket is a linked list. The hash value of each flow is calculated from its 4-tuple information (src IP, dest IP, src port, dest port).
The agent periodically encapsulates the collected flow statistics (52 bytes per flow) into export IPFIX messages, and sends it to the collector. %\vnew{Such monitoring solutions are commercially available~\cite{solarwinds,manageengine} and can possibly employ alternate approaches, such as pulling flow statistics directly from the kernel via eBPF or using multiple collectors at scale. 
As the specifics of the agent/collector don't affect our results, we leave more optimized designs for future work. Note that commercial solutions also exist~\cite{solarwinds,manageengine}, possibly employing alternate approaches such as pulling flow statistics directly from the kernel via eBPF or using multiple collectors at scale.

Flock's inference engine, written in C++, (i) collects IPFIX flow reports from  agents and (ii) periodically runs inference on the collected input. %Each TCP connection from an agent is handled by the first available thread from a pool of threads , which parses and stores the flow reports in a queue along with a timestamp. %UDP can also be used, since the inference is tolerant of a few missing flow reports.
%The network topology is provided as input to the engine.
Currently, the network topology is static, but the inference engine can be modified to obtain the topology from a controller.  %\fix{Describe how a dynamic topology could be obtained, somehow a reviewer got confused by this sentence}. %We show later that a single collector process suffices for a moderately large datacenter (\S~\ref{collectorScalability}), but one could also install multiple collector processes and stream the input to the inference engine, if needed.
The engine periodically reads flow reports from the queue, every 30 seconds,  and runs the inference algorithm of \S~\ref{sec:inf-alg}.
%The engine periodically invokes inference, reading off flow reports from the queue and making a call to the inference algorithm. %We have implemented Flock's inference algorithm, including joint likelihood exploration (\S~\ref{greedy search acceleration}). %The inference algorithm makes extensive use of OpenMP threads for multi-core  speedup.
%In \S~\ref{inference algorithm scalability}, we show that our multi-core implementation of the inference algorithm and the collector can be handled by  a single multicore machine, even for a large datacenter. %Hence, a distributed implementation  across multiple machines is unnecessary. 

\subsection{Parameter Calibration}  \label{sec:paramCalibration}
A important problem we encountered, with both \Sys and the other competing schemes, is how to set their hyperparameters.  \Sys has 3 hyperparameters ($p_g$, $p_b$, $\rho$), NetBouncer has 3, and 007 has 1. In real deployed systems, parameters and thresholds are quite common, and are set based on past deployment experience.  However, manual tuning puts extra onus on the user 
%puts extra responsibility on the user 
and makes it difficult to evaluate systems. 
The manually set parameters of past schemes 007~\cite{007} and NetBouncer~\cite{netbouncer} gave suboptimal results in our environments, across different topology scales and failure scenarios.
%\fix{say what 007 and NB did to set parameters, and that it didn't work well or is no better than manual (?).} 

%We therefore turn to automated parameter calibration, and will use the same method for all schemes we evaluate.  
Thus, we design an automated parameter calibration method that we use for all schemes.
We use a training set of monitoring data to search for the parameter settings that obtain the best precision and recall in the training set.  It is tempting to obtain the training set from historical monitoring data, which is generally available in deployed systems.  However, this can be tricky, since: (a) historical data may not have faults labeled; and (b) faults, especially rare faults, may have diverse and unpredictable types so that historical data is not entirely representative of the next upcoming incident.

%\pbgnew{We therefore turn to automated parameter calibration, and will use the same method for all schemes we evaluate.  At a high level, we use a \emph{training set} of monitoring data to set parameters that obtain high precision and recall when the scheme is applied to the training set.  It is tempting to obtain the training set from historical monitoring data, which is generally available in deployed systems.  However, this is actually a subtly tricky problem, because: (a) historical data may not have faults labeled; and (b) faults, especially rare faults, may have diverse and unpredictable types so that historical data is not entirely representative of the next upcoming incident.}

To solve (a), we leverage simulations to obtain the training set. We calibrate parameters once using this training set and use those parameters across our experiments, unless stated otherwise. Note that if this method were used in a deployment, it would increase concerns with (b) since simulations may not match the real world. %In some cases, we also show accuracy with parameters calibrated using historical monitoring data from the test environment. 
To address (b) we will experimentally quantify the robustness of each scheme to scenarios where the train and test sets are drawn from \emph{different environments} (different topology, monitoring duration, fault rate, fault type).

%\pbgnew{To solve (a), we leverage two sources of labeled training data.  First, even if historical data does not have labeled faults, we can leverage it as a source of non-faulty examples with high accuracy, under the reasonable assumption that the vast majority of links are non-faulty at any given point in time.\footnote{Of course, some links might be faulty, leading to a small accuracy loss.} Second, we can produce labeled training data with and without faults through simulation. Of course, this increases concerns with (b) since simulations may not match deployment.  To address (b), in our experiments, we will quantify the robustness of each scheme  to scenarios where the train and test sets are drawn from \emph{different environments} (different topology, monitoring duration, fault rate, or even fault type).}

Once we have the training set, we use the following calibration method.  For each hyperparameter, we choose equally-spaced values in a reasonable range of possible values. We fix a minimum precision $P$ and find the parameters which, in a training set, yielded highest recall and had precision $>P$.  Varying $P$ produces a set of parameters that are Pareto-optimal along the precision/recall tradeoff curve. Our evaluation will apply these parameters to a separate test data set.

To choose a single parameter setting (rather than a tradeoff curve), we set $P=98\%$ and find the setting that maximizes recall (in the training set); if no such point exists or recall is too low ($<25\%$), then we subtract $5\%$ from $P$ and try again, repeating until a setting is found. This method lays more emphasis on precision, which is usually desirable.
%We then utilize a ``training'' data set of network faults (which in a real deployment could be obtained through the history of the real network, or a simulation). We then run the inference algorithm on that data for every parameter combination, computing false positive and false negative rates. 
%In our evaluation, we apply the above calibration method to all the algorithms.
%We believe parameter calibration is a somewhat independent problem where open questions remain, including for other past works.

%Flock's 3 parameters need to be tuned, either via running simulations or with zero failure examples and any other datasets available. For our testbed experiments, we calibrated Flock with the zero failure examples. We fixed $p_g$ and $\rho$ to a reasonable value, and picked the lowest $p_b$ that gives the desired precision level of >90\%. This gave reasonable results, but leaves room for improvement. Ultimately, past schemes also have the problem of tuning hyperparameters and we defer finding a better strategy for future work.

\section{Evaluation Methodology}\label{experimentalSetup}
%All our algorithm impelementations are in C++ in 7K LOC.

\subsection{Systems evaluated}
\noindent We compare \Sys with several state-of-the-art datacenter fault localization schemes. Our codebase consists of 7K LOC  including C++ implementations of all inference algorithms.

We implemented the ``Ferret'' inference algorithm of \textbf{Sherlock} (Sec. 3.2 of~\cite{sherlock}).  %including all described optimizations. 
For a fair comparison, we run Ferret on the same PGM as Flock (which is anyway similar to Sherlock's; \S~\ref{inference graph model}).
%As Sherlock's PGM \vnew{includes elements that we don't need such as services or 3-state status}, we run Sherlock's inference on \Sys's model. \vnew{We expect this to not affect any conclusions.} %\fix{Say stripping Sherlock of middleboxes, load balancers leads to Flock. ``Slightly different'' can confuse a reader as to why we didn't use Sherlock's model}. 
Sherlock can not detect $K>2$ failures, but (as expected) resulted in the same accuracy as Flock for $K\leq2$ failures at small scale. Hence, we only show performance differences. % Note, however, that the fact that \Sys is faster means that in practice it would be able to use additional (passive) input, thus boosting accuracy; this will be visible in comparisons of \Sys with different telemetry input types.

%using a relative threshold of $0.03$ for classifying failed links  

 %For the sake of runtime comparisons with Sherlock, we implemented Sherlock's brute force algorithm, for upto 2 failures, for Flock's model\footnote{we could not use Sherlock's model as is, since it's designed for a different domain- application level service dependencies}. Flock and Sherlock had the same accuracy for examples with upto 2 failures (more than which Sherlock can not detect).

%007 is the closest work in terms of incorporating passive information to some extent.

We implemented \textbf{NetBouncer's} algorithm (Figure 5 in ~\cite{netbouncer}) and \textbf{007} (Algorithm 1 in ~\cite{007})
%\footnote{For NetBouncer (INT), we tried using flow sizes as weights in its formulation. This resulted in poorer accuracy on testbed, but better accuracy in simulation compared to the original unweighted version, while Flock (INT) outperformed the weighted version, both in simulation and testbed. We show results  with the original NetBouncer scheme in our evaluation.} 
%\footnote{We modified NetBouncer by adding weights based on flow sizes. Tested with INT telemetry, it gave poorer accuracy on the testbed, but better accuracy in simulation compared to the original scheme, while \Sys outperformed both versions in all cases. We show accuracy of the original NetBouncer scheme.}
%\footnote{For review: NetBouncer's implementation is not public and 007's public codebase does not include code for inferring failed links but only outputs scores for all links which can be done with relatively simple code. Hence, we implemented these ourselves and verified the scores with 007's public code.}. 
We verified our 007 implementation by matching scores outputted by the publicly available 007 code. We were also able to reproduce Figure 10 in~\cite{007} with our implementation/setup.

For all schemes, we calibrate parameters once using simulations of random packet drops  and use those parameters by default, unless stated otherwise. In some cases, we also show results with parameters calibrated on that environment.

%\fix{I don't believe we've stated how and when we run parameter calibration (once overall? once per test scenario? what simulated input data?).  Maybe this is a good place to do it.  If indeed we are using one parameter calibration setting for all experiments for Flock, then that is a very good, clarifying fact that I think we want to state explicitly.}

%To facilitate a fair comparison, we applied our parameter calibration method (\S\ref{sec:paramCalibration}) to all schemes.
%We reserve equal amounts of data for parameter calibration and testing.
%for all evaluated network environments,
%experiments,
%unless stated otherwise.
%In each evaluated network environment, unless otherwise stated, we generated equal amounts of data for parameter calibration and for testing.

\subsection{Input telemetry types} \label{information kinds}
\noindent We use four different kinds of input for inference:
\begin{itemize}[leftmargin=10pt]
\itemsep0em
    \item \textbf{A1}: Active probes between end-hosts and the core switches with known paths, as designed for NetBouncer~\cite{netbouncer}.

    \item \textbf{A2}: Reports about flows with $\geq 1$ retransmission, along with their (actively-probed) paths, as designed for 007~\cite{007}.

    \item \textbf{P}: Passive information consisting of reports about regular (application) data flows, whose traffic matrix is thus dictated by the network environment (\S~\ref{ns3Setup}). A set of possible paths is known (based on ECMP multipath).

    \item \textbf{INT}: We assume INT~\cite{int} provides reports, including paths, for both A1 and P (which thus becomes a superset of A2).

\end{itemize}

\noindent Note the last case is intended to test full deployment of INT; in other deployment modes it could trace just a subset of traffic. Similar reports could be obtained from other recent packet marking~\cite{FbLocalization} or mirroring~\cite{everflow,pathdump, sonata,omnimon,switchpointer} methods.
%    All flows, including passive and active flows, along with path information for each flow. Recent monitoring techniques, based on packet marking such as INT~\cite{int}, ~\cite{FbLocalization} or packet mirroring ~\cite{everflow,pathdump, sonata,omnimon,switchpointer} make it easier to obtain path information for a flow; more so with programmable switches.
%%As mentioned earlier, \SystemName can incorporate any combination of A1, A2 and P.
%We emphasize that NetBouncer and 007 are strict in terms of the type of input data that they can use for inference. Indeed, one of the advantages of \SystemName is that it can take advantage of multiple types of information (A1, A2, P).
%NetBouncer~\cite{netbouncer} is based purely on active information.
Since \Sys can incorporate different kinds of inputs, we compare accuracy across input types: \Sys (A1) vs NetBouncer (A1), \Sys (INT) vs NetBouncer (INT) and \Sys (A2) vs 007 (A2).  We also quantify the accuracy boost \Sys obtains from additional passive information (A1+P, A2+P, A1+A2+P).  NetBouncer and 007 cannot trivially ingest the passive telemetry as they do not model path uncertainty. Finally, the passive flow telemetry can be downsampled in a large datacenter with high link speeds to reduce volume of the monitoring data.

\ifshowpassive
\subsubsection{Reduced Analysis: passive data (P)} \label{reduced analysis}
For this section, passive information (P) includes all application flows irrespective of how many packets were dropped (excluding active probes for NetBouncer). We assume that we don't know the path for any flow in the input. Although this dataset is unbiased (since all application flows are included), making use of it is non-trivial for the following reason. In a Clos topology, there are sets of ``symmetric" links. For e.g. in a 3-tiered Fat-tree, a flow whose destination is in a different rack can take any of the ToR uplinks in the case of ECMP routing. Hence, just knowing the set of possible paths for all flows is not enough to differentiate between the ToR uplinks, since all outgoing flows can each get routed through any of the uplinks. We say that the uplinks of any given ToR form an equivalence class. No inference algorithm can differentiate between the links of an equivalence class based only on passive flow information with only the path set known for each flow. This is also illustrated by Fig.~\ref{fig:eq_classes}.

Hence, the best our inference algorithm can do is to localize the failure to within an equivalence class. To localize the equivalence classes, one can think of inference in a reduced graph, obtained by collapsing all links in the same equivalence class. The hypotheses space for the reduced topology differentiates between 0 or $\geq1$ failures in an equivalence class. Note that there is only one possible path in the reduced graph for any flow. The probability expression for $P(f|H)$ is same as the original model, the difference being that the number of failed paths in the path set is determined by which equivalence class has failed as per the hypothesis $H$. For e.g., if the host-ToR uplink is labeled failed by $H$, then all paths in the pathset for $f$ are deemed to have failed. Alternatively, if $H$ classifies an equivalence class consisting of links from aggregation layer of one pod to the core switches, then we classify one path as failed in the entire pathset. Note that failure of an equivalence class might mean one link in the class has failed or more than one links have failed. Conservatively we assume that only one link has failed, which we found, works well in practice.  

%\vskip -0.3cm
% \begin{align*}
% P(f|H) =& \frac{1}{k} \Big((1 - \gamma^H) p_1^{r_f} (1 - p_1)^{n_f-r_f} + \gamma^H_i p_2^{r_f} (1 - p_2)^{n_f-r_f}\Big)\\
%       &+ \frac{k-1}{k} p_2^{r_f} (1 - p_2)^{n_f-r_f}
% \end{align*}
% The only difference with the original model is that $\gamma^H$ is 0 (failed) if any of the links in the $i^th$ path for flow $f$ belongs to an equivalence class that's been classifed as failed by $H$.
\fi

% \subsubsection{Other combinations}
% Other information types such as A1 or (A1 + A2 + P) contain unbiased flow reports and hence we can use the original model unchanged. Note that any set of flow reports where a flow is chosen to be included, irrespective of how many packets of that flow were dropped, will be unbiased. 

% \begin{figure}
%   \begin{minipage}[b]{0.22\textwidth}
%     \centering
%     \hskip -.7 cm
%   \begin{subfigure}[b]{\linewidth}
%     \includegraphics[width=1.05\textwidth]{}
%         \vskip -0.2cm
%     \caption{Flow size distribution}
%         \vskip -0.4cm
%     \label{fig:flowsize_dist}
%   \end{subfigure}
%   \end{minipage}
%   \hskip -.2 cm
%   \begin{minipage}[b]{0.22\textwidth}
%     \centering
%     \begin{subfigure}[b]{\linewidth}
%     \includegraphics[width=1.05\textwidth]{}
%     \vskip -0.2 cm
%     \caption{Fraction bytes in y\% flows}
%     \vskip -0.4cm
%     \label{fig:byte_dist}
%   \end{subfigure}
%   \end{minipage}
%   \scompactcaption{Flow size distribution used in the experiments.}
%   \label{fig:flow size distribution}
% \end{figure}

\subsection{Network environments}\label{ns3Setup}
\parheading{NS3 simulations.} We set up a NS3 simulation to output a trace consisting of flow metrics (retransmissions/packets sent). We feed this trace as input to inference. We use a standard 3-tiered Clos topology~\cite{AlFares08} with 2500 40Gbps links, ECMP routing and 3x oversubscription at ToRs. Like ~\cite{netbouncer}, we set drop rates on all non-failed links between $0-0.01\%$ chosen independently and uniformly at random to model occasional drops on good links~\footnote{TCP can tolerate such low rates, hence it's reasonable to qualify such links as non-faulty. These rates are not central to \Sys}. %We set the drop rate on a failed link between $0.2-2\%$.
For all our experiments, half the traces used uniform random traffic and the other half used a skewed traffic pattern where $50\%$ of the traffic is concentrated among $5\%$ of the racks, randomly chosen. Flow sizes were drawn from a Pareto distribution (mean: 200KB, scale:1.05) to mimic irregular flow sizes in a typical datacenter~\cite{hull}.  %(90\% flows are <100KB and the shortest 50\% of flows have $<$10\% of the total bytes). % and 90\% of the flows are smaller than 100KB. %300000 flows were used for the analysis each of which is started randomly within a 1 second simulation window. 
%For all scenarios (including testbed), we used uniform random traffic in half of our traces and skewed random traffic in the other half. 
%For all scenarios (including testbed), we used uniform random traffic in 50\% of the examples and skewed random traffic in the rest.

\parheading{Large scale simulation.} NS3 was too slow for large scale simulations. Hence, we use a flow level simulator (similar to~\cite{007}), that drops each packet as per preset drop probabilities on links but does not model queuing or TCP. We use this simulator for scaling experiments (\S~\ref{runningTimeSection}). %As before, we set random drop rates between 0 and $0.01\%$ on functioning links.

%As discussed in \textcolor{red}{data collection section}, we assume 3 kinds of flow data- (a) one collected from orchestrating active probes to monitor the network paths, called orchestrated active flows, (b) one c and (c) the rest of the application flows which don't experience any retransmission called passive flows. We assume that we know the set of ECMP paths taken by the passive flows but not the exact path taken. 

\parheading{Hardware test cluster.} \label{testbedSetup}
We set up a physical testbed with 10 switches and 48 emulated hosts, each with its own dedicated hardware NIC port and one CPU core. One of the 48 hosts runs Flock's collector. %All links in the testbed run at 1~Gbps.
We use a standard 2-tier Clos topology with 2 spines, 8 leaf racks and 6 hosts per rack. %Overall, the testbed has 64 bidirectional 1 Gbps links. %or 128 unidirectional links (the inference treats a bidirectional link as two unidirectional links).  
We provision 1 Gbps link speeds to emulate as many hosts as possible. %in our setup.
%We employ ECMP routing via standard OpenFlow rules.
%We run iperf connections to orchestrate the same flow distribution as before. %uniform and skewed traffic using the same flow size distribution as before. 
%We use a smaller number of flows per second since our emulated hosts have only 1 CPU core each. %Downsampling is a simple method to deal with a host sending many more flows per seconds. 
Schemes with A1 are omitted from our testbed results since our switches don't have the in network IP-in-IP feature for A1~\cite{netbouncer}.
%We omit schemes with A1 from our testbed results since A1 requires the in network IP-in-IP technique which we don't have in our testbed. %\fix{Justify low link speeds}

%For the set A2, we obtain the path of problematic flows using the rules.
%007's path tracing mechanism to obtain the path of problematic flows with retransmissions isn't public, so we simpl 

\subsection{Failure scenarios} \label{failureScenarios}

\parheading{In simulation:}
\begin{itemize}[leftmargin=*]
\item \textbf{Silent link packet drops:} a link drops a small fraction of packets without updating switch counters. Silent drops are a common problem in the industry~\cite{FbLocalization,007,netbouncer}. % and has been studied in past works. %We consider the problem of detecting random and silent packet drops at a link, which has also been studied in past works and is a known problem in industry~\cite{FbLocalization,007,netbouncer,achillesHeel}. 
%Drop rate $>0.1$\% can hurt TCP's throughput~\cite{pccVivace} and the effect on short flows can be more severe. % as a single loss can add a few extra RTTs to the flow completion time.
%Hence, it's desirable to catch such failures quickly.

\item \textbf{Silent device failure:} An error in a device component (e.g., memory, line card) causes silent packet drops. This differs from the prior scenario because it affects many or all links on the device.   %simultaneously.
\end{itemize}
%For analysing silent packet drops, we set the number of bad packets in a flow to be equal to the number of packets lost. Thus, 
%To detect silent packet drops, for each flow, we collect its source, destination and number of packets sent/lost.

\parheading{In hardware test cluster:} 
\begin{itemize}[leftmargin=*]
%\itemsep0
    \item \textbf{Queue misconfiguration}: A WRED queue drops packets with probability $p$ when the queue length is above a configurable threshold $w$. %, rather than waiting to drop packets till the queue is full, like in DropTail queues. %Dropping packets before the queue gets full ensures early congestion mitigation. 
    We misconfigure WRED queues~\cite{Red} on switches, setting $p=1\%$ (available choices: 1-100\%) and $w=0$ (so, the link works normally if the queue is empty).
    %setting the WRED threshold $w$ to be $> 0$ packets. %This ensures that the link works normally if the queue is empty. 
    %When the queue has $>w$ packets waiting, then the incoming packet is dropped with a certain probability $p$ (can be configured to be between 1-100\%). We chose $w=0$ and $p=1\%$. Thus, the link normally if the queue is empty. %
    %The resultant link failures are thus not very ``clean" as a faulty link wouldn't drop any packets when the network utilization is low. %Such failures are also more realistic since real-world faults are often messy and unclean.
    %\item \textbf{Packet corruption}: We use Linux's tc tool to (i) corrupt 0.5\% of  outbound packets from a host, emulating a software bug at the source and (ii) orchestrating packet corruption at hosts to emulate corruption at a ToR uplink. %Such failures are hard to localize manually if the corrupted packets get through the network and are dropped at the receiver. 
    %We verified that corrupted packets were being forwarded, by running a UDP flow and tracking number of packets sent/received at each interface along its path. Note this scenario differs from silent drops, as corrupted packets traverse the whole path. \pbg{Address reviewer confusion about why corrupted packets are forwarded.}
    \item \textbf{Link flap}: We pull out a cable manually and quickly put it back in to emulate link flaps~\cite{CiscoLinkFlap}. In our setup, link flaps caused the latency of the flows transiting the link to spike, but did not produce any significant increase in retransmissions (i.e., the link was buffering packets). 
\end{itemize}

\begin{figure*}
    \centering
    \begin{minipage}[b]{0.13\textwidth}
    \centering
        \begin{subfigure}[b]{\linewidth}
        \includegraphics[width=1.05\textwidth]{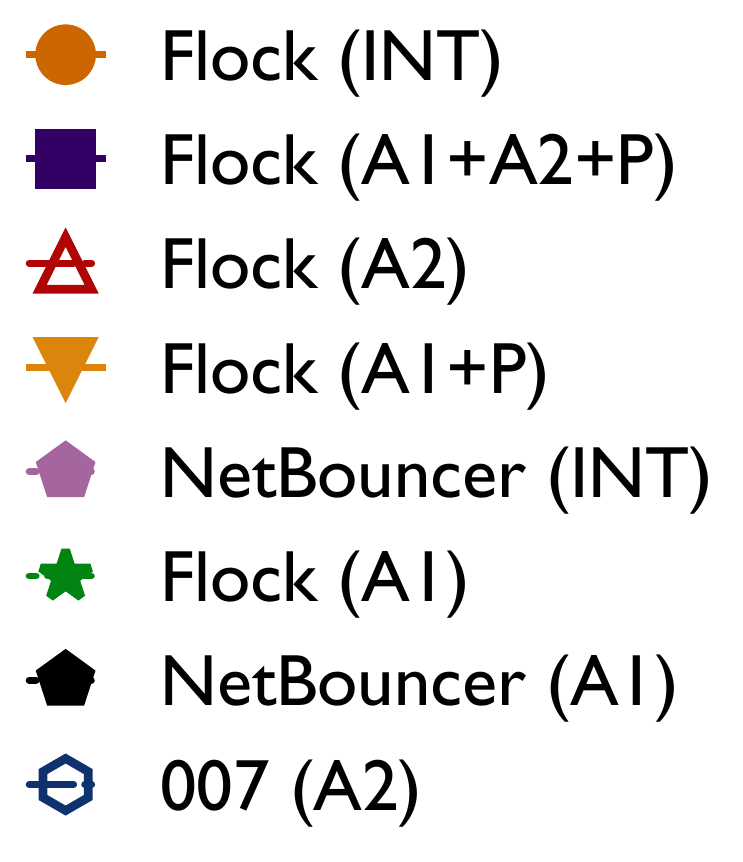}
        \vspace{0.0in}
        \label{fig:mixed_legend}
        \end{subfigure}
    \end{minipage}
    \begin{minipage}[b]{0.215\textwidth}
        \begin{subfigure}[b]{\linewidth}
            \includegraphics[width=1.12\textwidth]{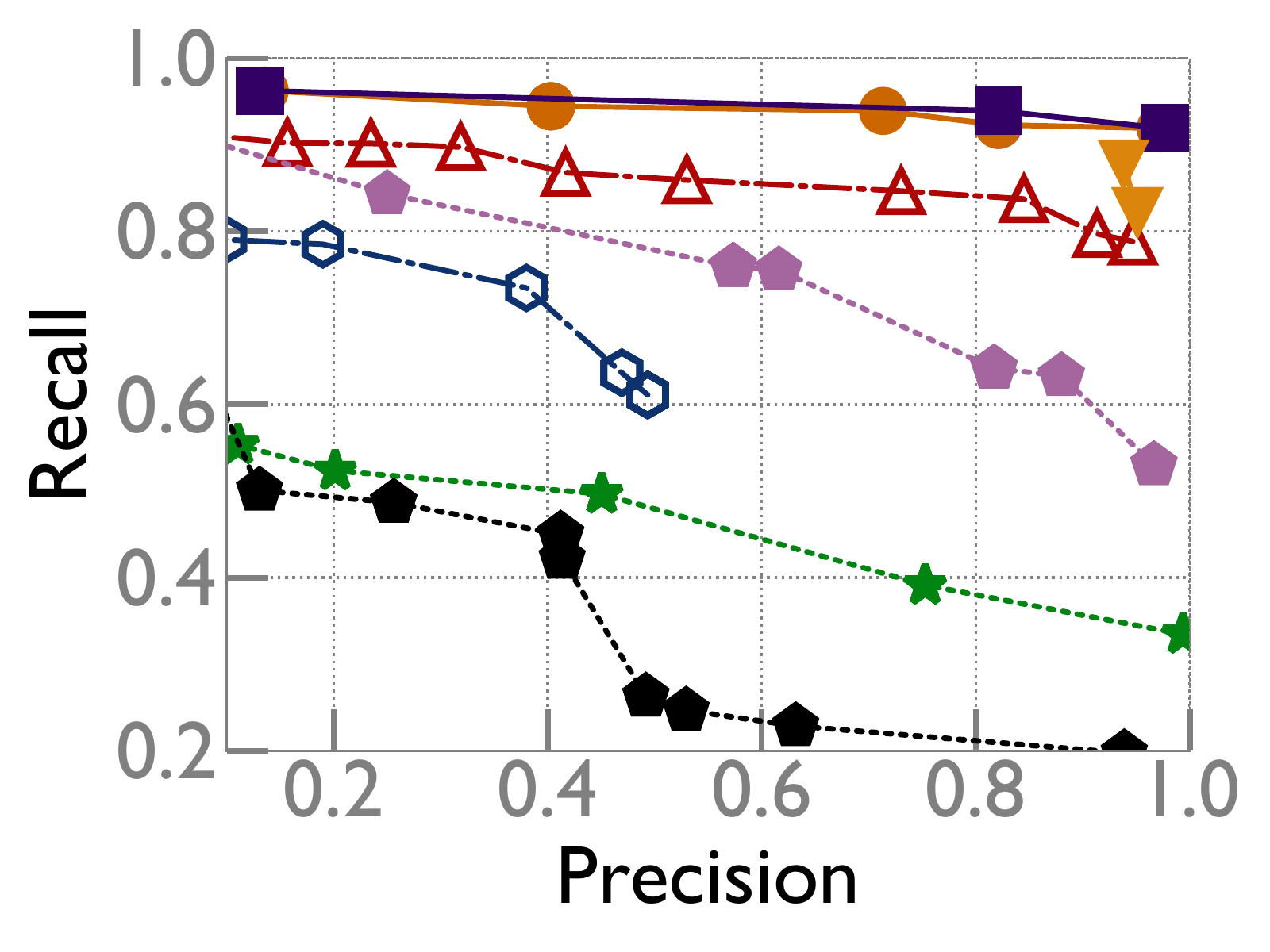}
            %\vskip -1.2cm
            \caption{With 100K flows}
            \label{fig:mixed_tdot25}
        \end{subfigure}
    \end{minipage}
    \hskip 0.5cm
    \begin{minipage}[b]{0.215\textwidth}
    \centering
        \begin{subfigure}[b]{\linewidth}
        \includegraphics[width=1.12\textwidth]{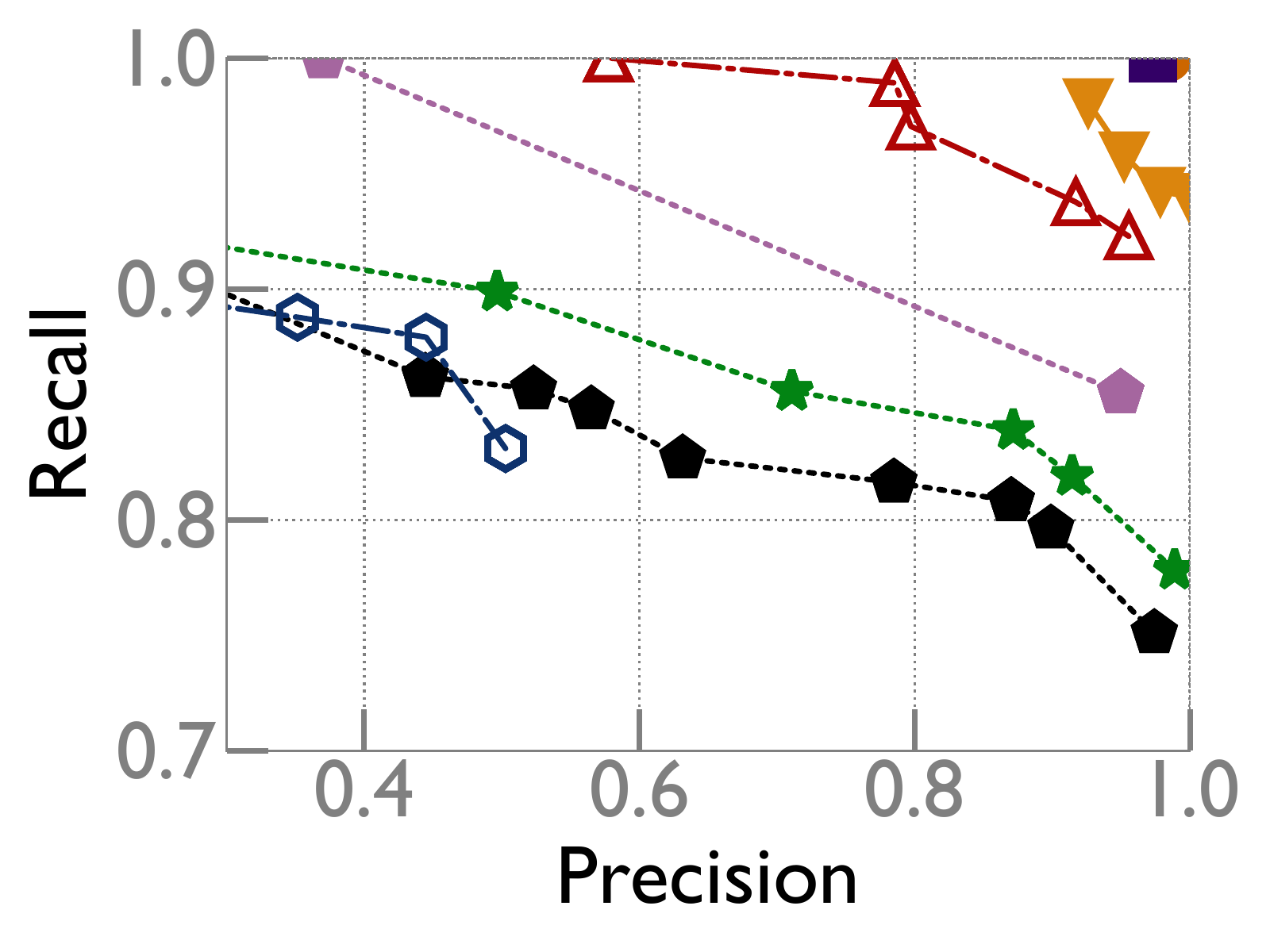}
        \caption{With 400K flows}
        \label{fig:mixed_t1}
        \end{subfigure}
    \end{minipage}
        \hskip 0.3cm
    \centering
    \begin{minipage}[b]{0.215\textwidth}
        \centering
        \begin{subfigure}[b]{\linewidth}
            \includegraphics[width=1.08\textwidth]{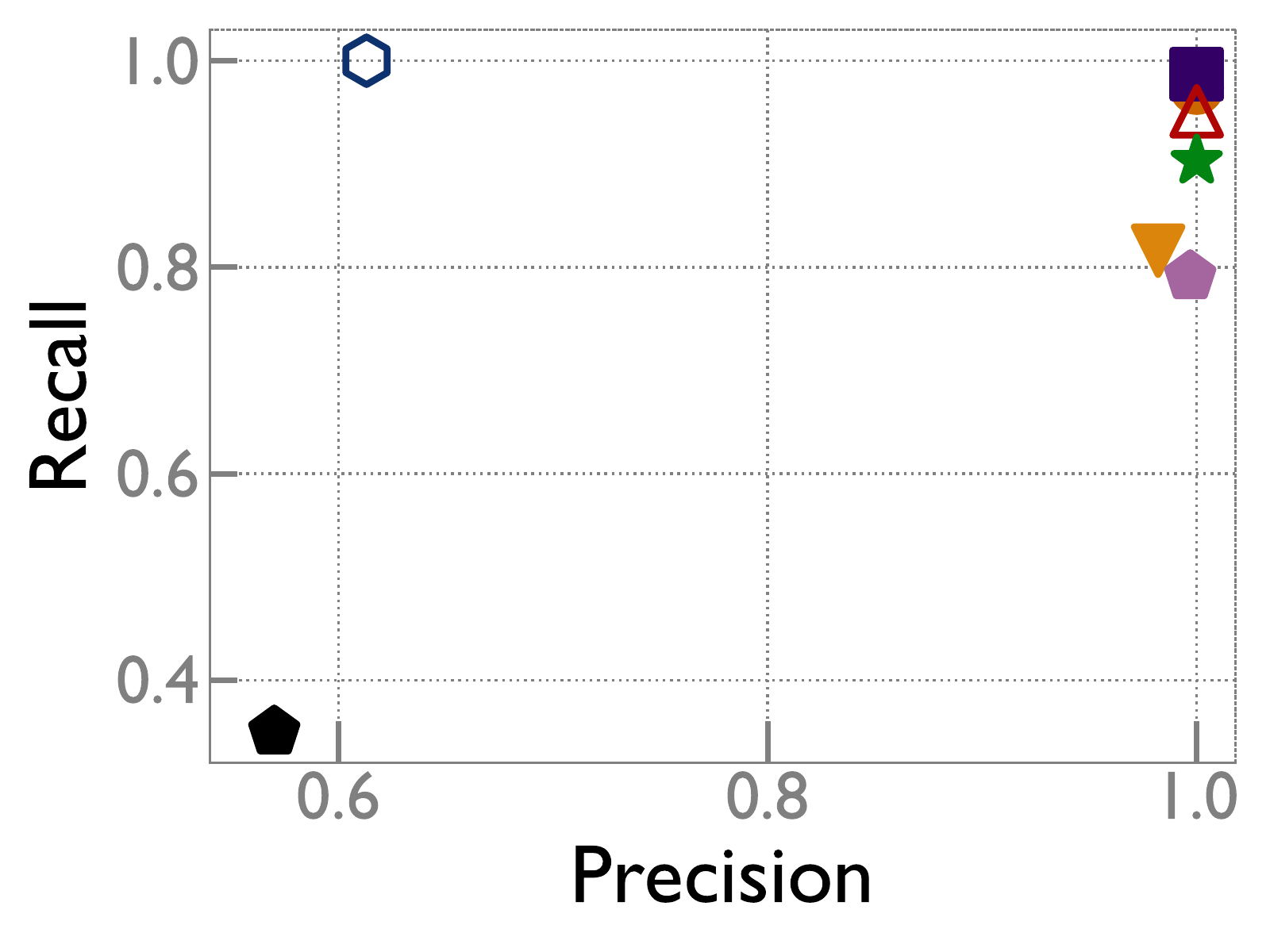}
            \caption{Device failures}
            \label{fig:deviceFailure}
       \end{subfigure}
    \end{minipage}

     \caption{ Accuracy for silent packet drops. (a), (b): Tradeoff curves for NetBouncer, 007 and \SystemName for silent drops, varying hyperparameters for each scheme. Schemes are annotated with the input information they use. (c): Accuracy on device failures.  }
  %\compactcaption{Tradeoff curves for NetBouncer, 007 and \SystemName for silent drops, varying hyperparameters for each scheme. Schemes are annotated with the input information they use.}
%\vspace{-0.1cm}
%\end{figure*}
  \label{fig:mixed_traces}
\end{figure*}

\begin{figure}
    \hskip -0.6 cm
    \begin{minipage}[b]{0.22\textwidth}
        \centering
        \begin{subfigure}[b]{\linewidth}
            \includegraphics[width=1.2\textwidth]{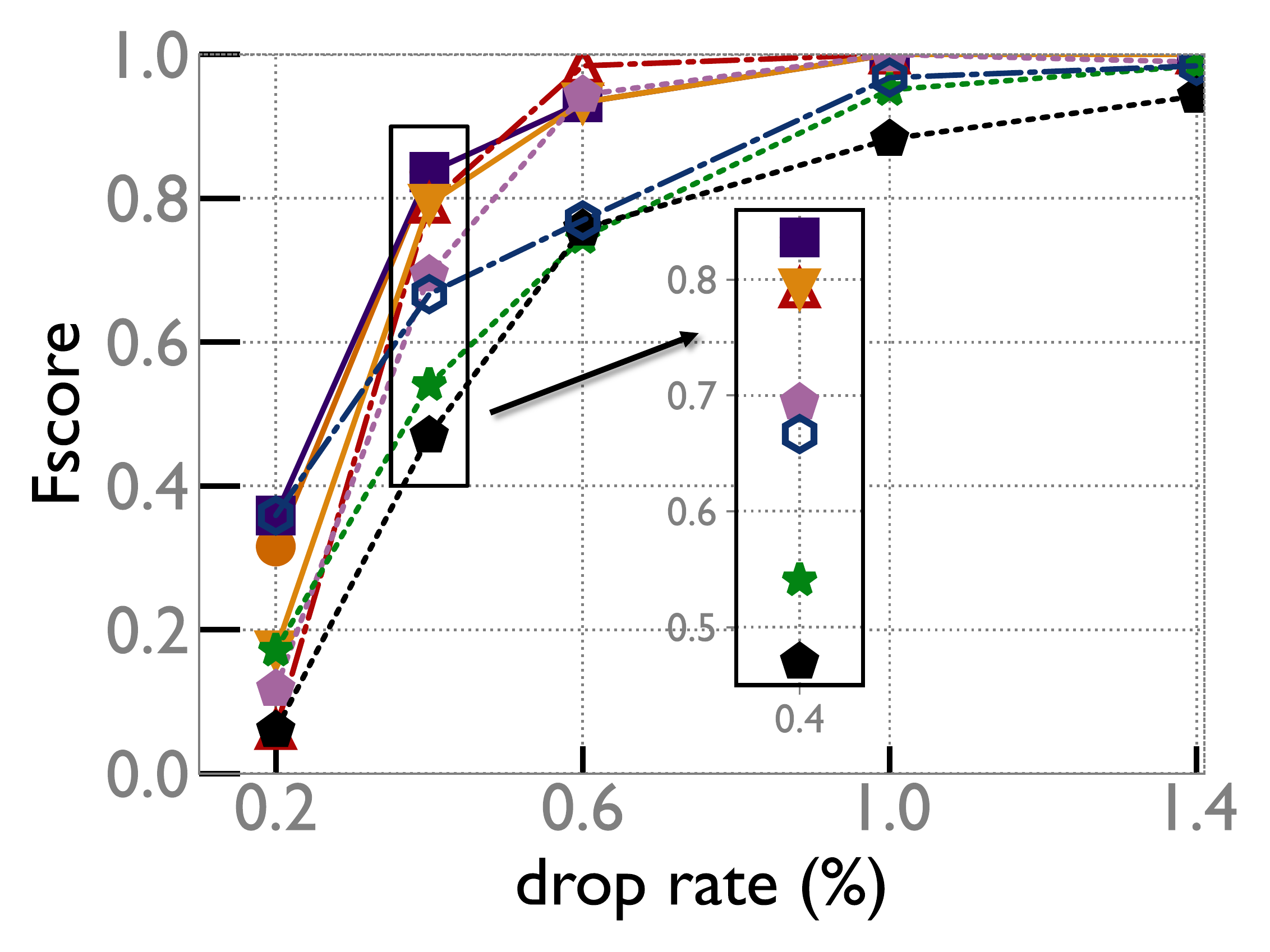}
            \caption{Uniform traffic}
            \label{fig:softnessRandom}
        \end{subfigure}
    \end{minipage}
    \hskip 0.7 cm
    \begin{minipage}[b]{0.22\textwidth}
        \centering
        \begin{subfigure}[b]{\linewidth}
            \includegraphics[width=1.2\textwidth]{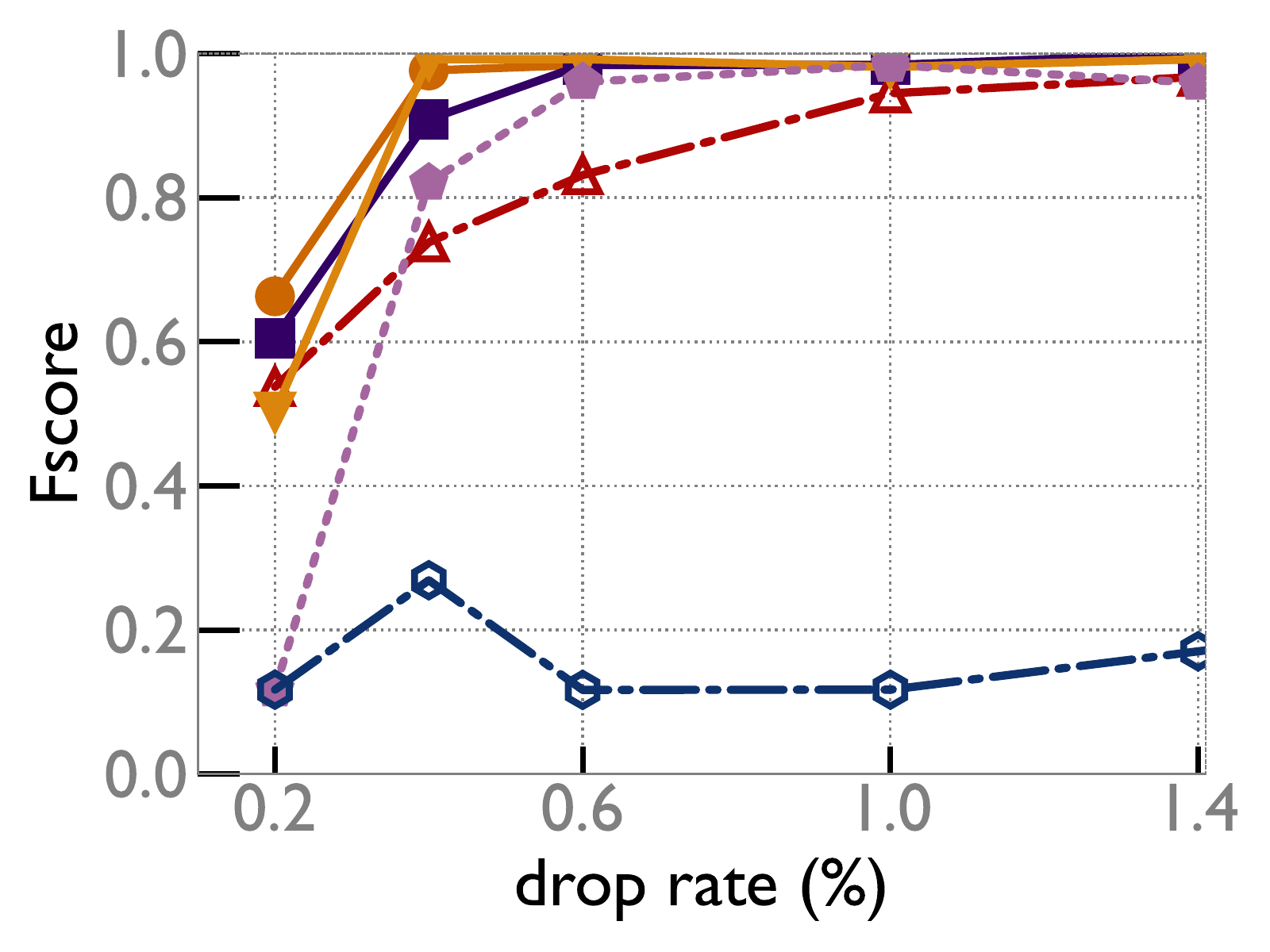}
            \caption{Skewed traffic}
            \label{fig:softnessSkewed}
        \end{subfigure}
    \end{minipage}
%    \hskip 0.3 cm
    % \begin{minipage}[b]{0.26\textwidth}
    %     \centering
    %     \begin{subfigure}[b]{\linewidth}
    %         \includegraphics[width=1.1\textwidth]{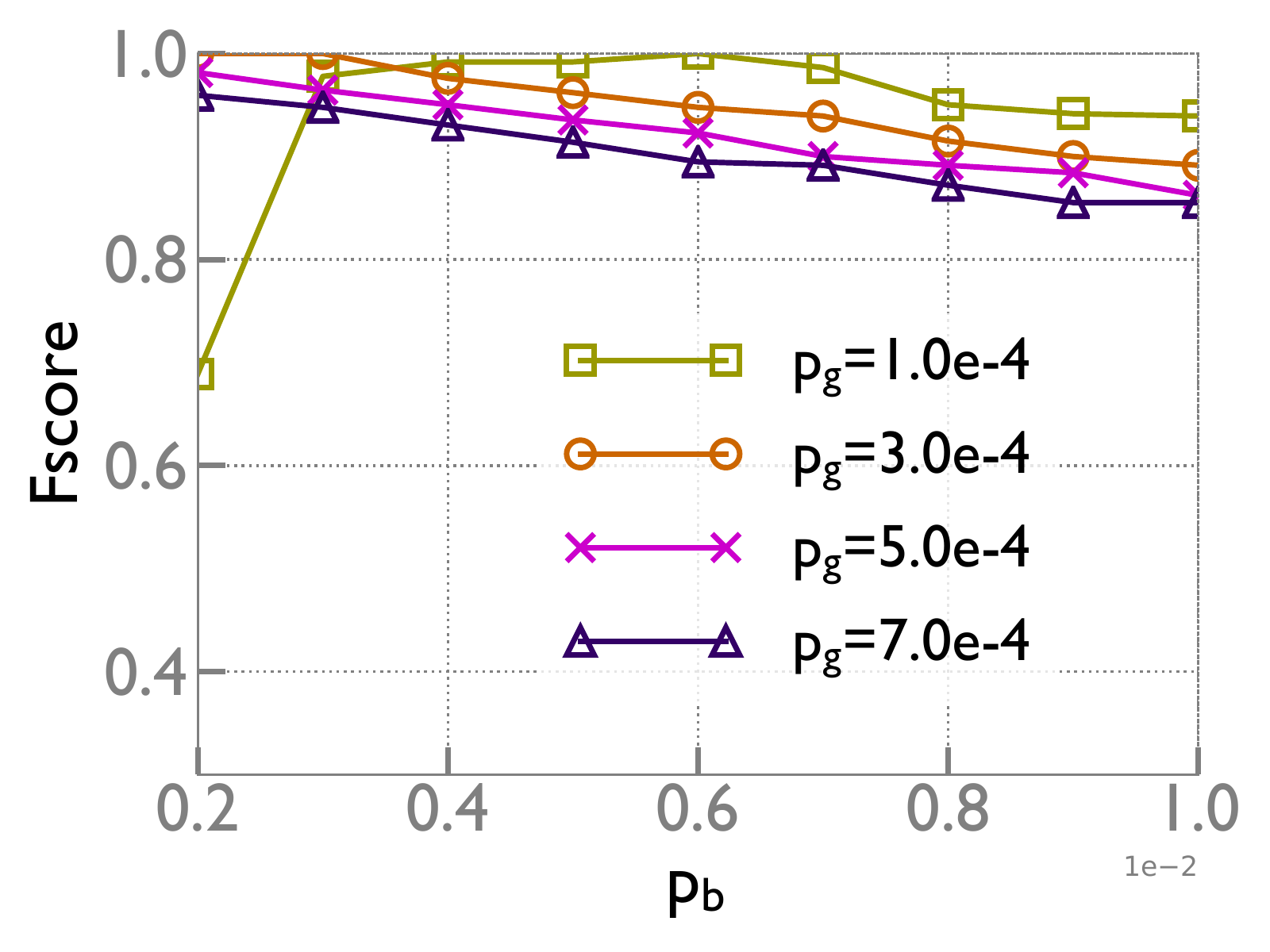}
    %         %\vskip -0.2 cm
    %         \caption{Parameter sensitivity}
    %     %    \vskip -0.4cm
    %         \label{fig:paramSensitivityPgPb}
    %   \end{subfigure}
    % \end{minipage}
  \caption{The extent of drop rates that each scheme can detect. %(f) Effect of changing Flock parameters: $p_g$ and $p_b$. %Each point is averaged over 32 traces %Noise drop-rate on functional links were set randomly between 0 nd 0.01\%.
  %All schemes were calibrated so that they achieve perfect precision. %007 was calibrated differently for uniform and skewed traffic.
  }
  \label{fig:SoftnessDevice}
\end{figure}

\parheading{Evaluation metrics}: We use \textbf{Precision} (fraction of predicted failed links/devices that actually failed) and \textbf{Recall} (fraction of failed links/devices that were correctly predicted as failed), to quantify false positives and false negatives  respectively. 
%Precision is the fraction of predicted failed links/devices that actually failed and recall is the fraction of failed links/devices that were correctly predicted as failed (
%See appendix \S~\ref{prDefinition} for a precise formula. 
We use the standard \textbf{Fscore} measure (harmonic mean of precision and recall) when we need a combined measure of accuracy.

\section{Evaluation Results} \label{experimentalResults}

\noindent The first goal of our evaluation is to investigate Flock's accuracy compared to NetBouncer and 007. 
%(summarized in Table~\ref{table:summary}). 
We compare various input types (INT, A1, A2, P) to quantify benefits obtained from incorporating passive data and INT.  As expected, Flock and Sherlock had the same accuracy in small scale experiments (with max $K=2$ failures), where Sherlock finished in reasonable time.
Hence we don't show accuracy comparisons with Sherlock.

%We ran Sherlock's inference for small scale experiments limiting max failures to $2$ as required by Sherlock. Flock and Sherlock resulted in the same accuracy in those experiments, hence we don't show accuracy comparisons with Sherlock. %Besides, experiments which show accuracy gains,  we also investigate link drop rates that Flock can detect reliably and show the effect of topology irregularity on Flock's accuracy. 
%Secondly, we investigate performance of \Sys- the inference algorithm compared to NetBouncer, 007 and a past PGM-based method (Sherlock).
Next, we investigate performance of \Sys, including inference algorithm compared to Sherlock, NetBouncer and 007 and the runtime benefits of using JLE to speed up inference. %, and how it can be applied to speed-up Sherlock's algorithm. %Finally, we discuss the cost of incorporating passive information.

%We use standard measures of precision (\textbf{Fscore} measure (harmonic mean of precision and recall) when we need a combined measure of accuracy.

\ifshowpassive
\begin{figure*}
  \begin{minipage}[b]{0.32\textwidth}
    \centering
    \hskip -1.6 cm
  \begin{subfigure}[b]{\linewidth}
    \includegraphics[width=0.95\textwidth]{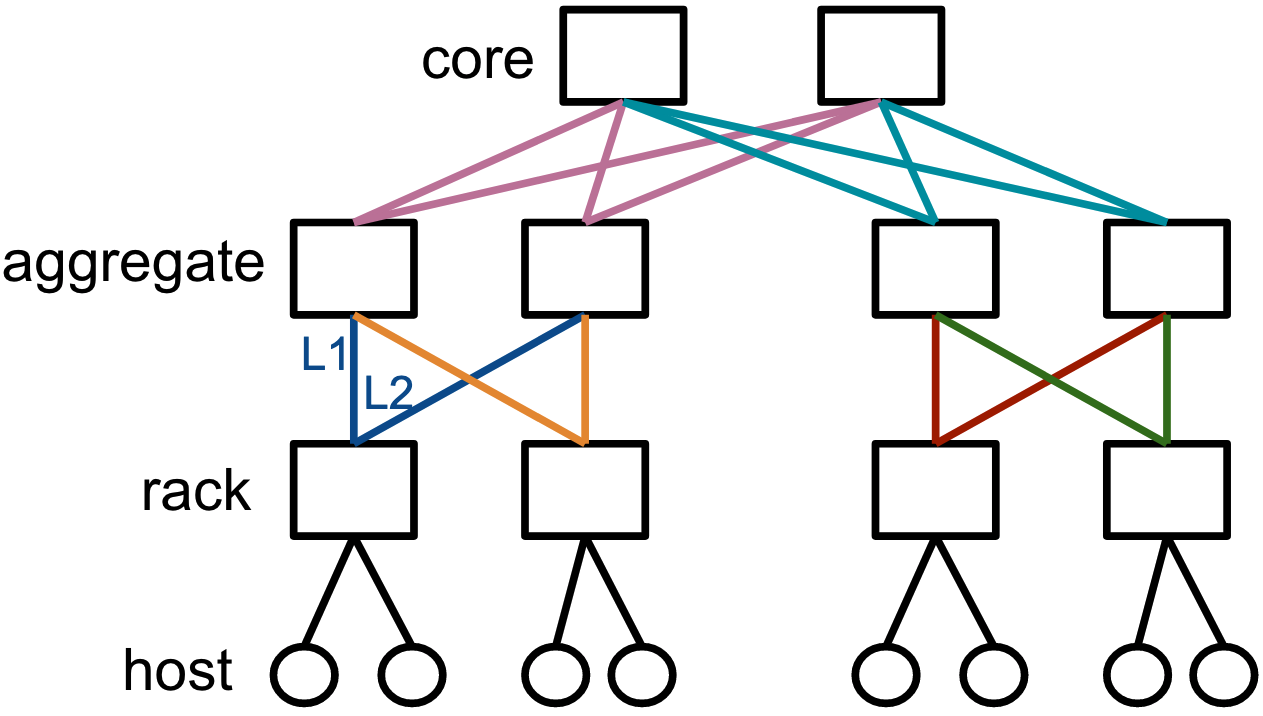}
    \scompactcaption{Equivalence classes of links shown using color. With passive data, one can localize a failure to only within an equivalence class. For e.g. it is not possible to differentiate between L1 and L2 without path information of flows.}
    \label{fig:eq_classes}
  \end{subfigure}
  \end{minipage}
%  \hskip -.6 cm
    \begin{minipage}[b]{0.30\textwidth}
    \centering
%    \hskip -1.6 cm
  \begin{subfigure}[b]{\linewidth}
    \includegraphics[width=0.8\textwidth]{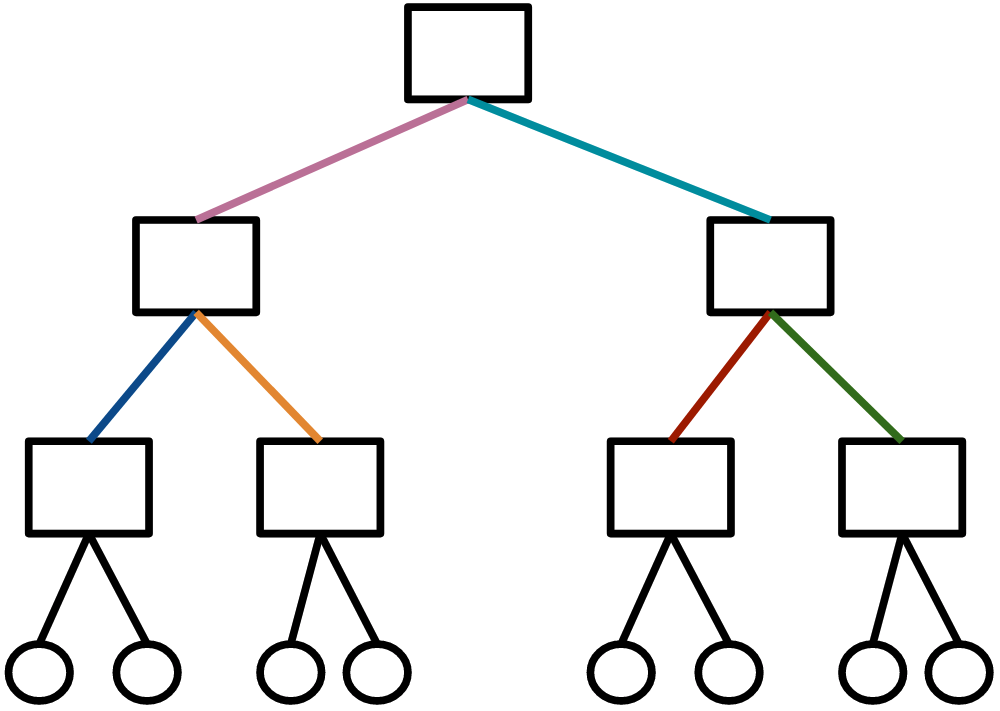}
    \scompactcaption{Reduced topology by collapsing all links in the same equivalence class.}
    \vskip 1cm
    \label{fig:eq_classes}
  \end{subfigure}
  \end{minipage}
 % \hskip -.6 cm
   \begin{minipage}[b]{0.35\textwidth}
    \centering
    \begin{subfigure}[b]{\linewidth}
    \includegraphics[width=1.0\textwidth]{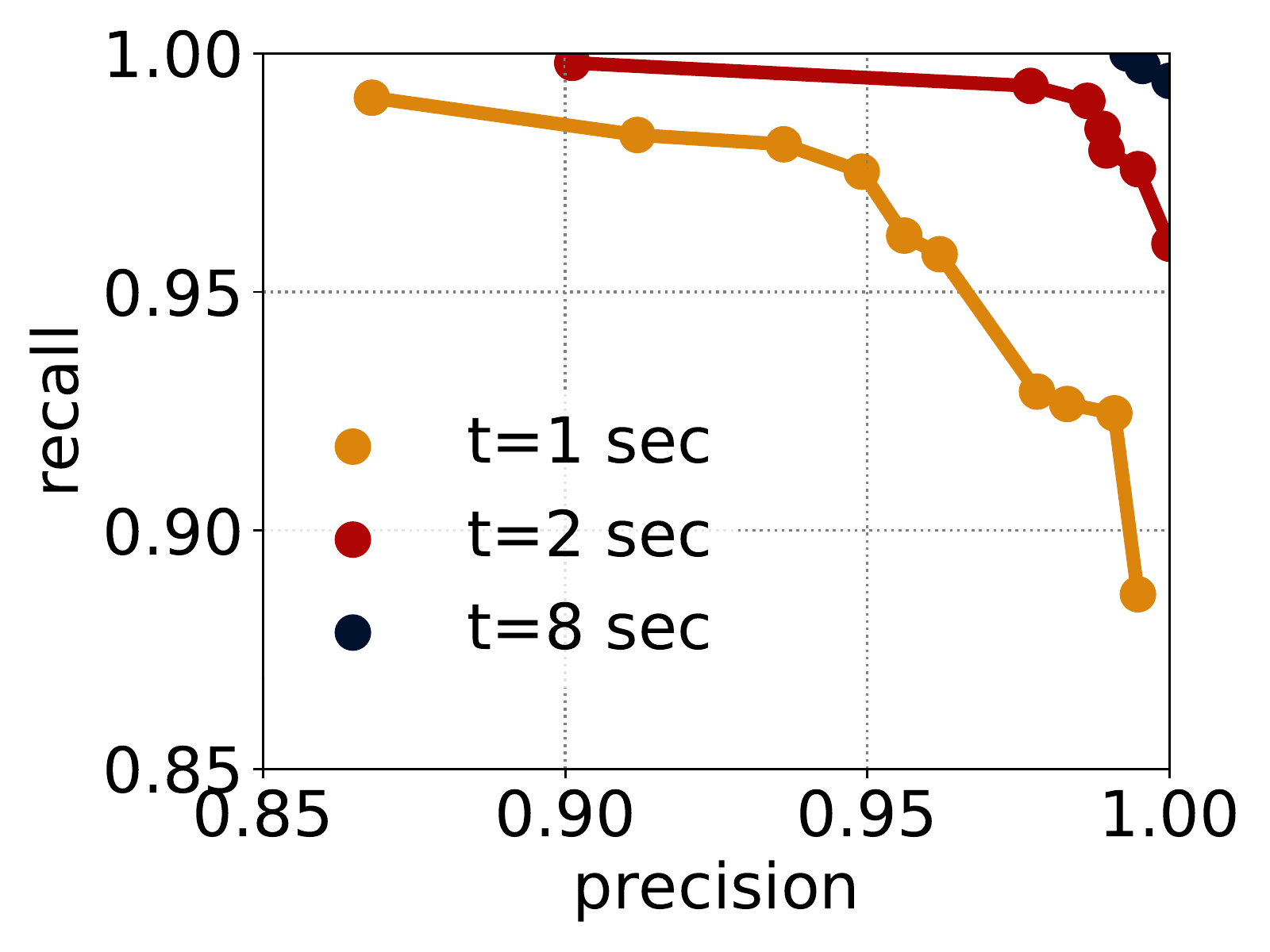}
    \compactcaption{\textcolor{red}{redo} Precision and recall for inference with only passive data. Precision/recall is defined w.r.t. equivalence classes.}
 %  \vskip 0.4cm
    \label{fig:pr_passive}
    %\vskip -0.4cm
  \end{subfigure}
  \end{minipage}
\end{figure*}
\fi

\subsection{Silent packet drops} \label{mixedTracesExperiments}
\noindent We generated 63 traces via NS3, each with 1 to 8 failed links with drop rate on each failed link chosen uniformly at random between 0.1\% and 1\%~\cite{netbouncer} 
%to simulate silent packet drops~\cite{netbouncer} 
(drops on good links are set as in \S~\ref{ns3Setup}). For each trace, we send active flows between the hosts and the core switches (A1), sending 40 packets per second and 400K passive flows per second across all hosts. 
Fig.~\ref{fig:mixed_traces} shows precision-recall tradeoff curves, with different calibrated parameters (\S\ref{sec:paramCalibration}) after 100K and 400K flows. %using different thresholds on a training set with both link and device failures (\S\ref{implementation}). %Each scheme has hyperparameters which can be set differently to tradeoff between precision and recall. We vary these parameters to obtain a precision recall tradeoff curve for each scheme. The frontier of that curve is shown in the figures. This also eliminates the need to calibrate NetBouncer and 007 since we sweep the entire parameter space (see\S~\ref{parameterSensitivity} for parameter calibration). 
%All schemes are annotated with the \pbgnew{input} they use (see \S~\ref{information kinds}).
Using the chosen point for each scheme (\S~\ref{sec:paramCalibration}), we draw the following conclusions for each input type:
\begin{itemize}[leftmargin=7.0pt]
\itemsep0em
    \item \textbf{A1}: Flock reduces error rate over NetBouncer  by roughly 45\% (fscore: 0.5 vs 0.27). With  4$\times$ more probes, Flock still reduces the error rate by >20\% compared to NetBouncer (fscore: 0.87 vs 0.84).
    \item \textbf{A2}: Flock (fscore: 0.93) reduces error-rate over 007 (fscore: 0.61) by 5.5x after 400K flows. 
    \item \textbf{A1+P and A1+A2+P}: When active probes (A1, A2) are augmented with passive information (P), \SystemName achieves very high accuracy (fscore with A1+A2+P: 0.98, A1+P: 0.93) after 400K flows (see Fig.~\ref{fig:mixed_tdot25}), suggesting that these schemes require less data than active-only schemes for localization.
    \item \textbf{INT}: Flock (INT) achieves the best accuracy (fscore 0.99) reducing error over NetBouncer (INT) (fscore 0.88) by 12x. 
\end{itemize}

\noindent The last two points highlight the benefits of incorporating passive data for accuracy. %Flock (INT) provides substantial improvement over all schemes, except Flock (A1+A2+P). This suggests that in non-INT environments, it is still possible to achieve high accuracy, at the cost of launching active probes A1 and A2. 
%As can be seen from  figures~\ref{fig:mixed_t2} and ~\ref{fig:mixed_t2_zoomed}, only active probes do not suffice for real time inference. 
007's performance can be attributed to it being sensitive to traffic skew
%\footnote{007 also attributes to each individual flow, the link that dropped its packets.
%, with good accuracy even under skewed traffic. This is outside our goals and we focus on identifying failed components.} 
(\S~\ref{softnessExperiments}).
% \fix{The previous sentence is confusing. Cut?}\fix{Was mentioned in one of the reviews, made it a footnote now } 

\subsection{Device failures} \label{deviceFailureExperiments}
\noindent Using the same setup as \S~\ref{mixedTracesExperiments}, we simulate a device failure by failing $f\%$ of a faulty device's links. We generate 64 traces, each consisting of up to 2 device failures,  varying $f$ across traces from $25\%$ to $100\%$. A subset of links failing on a device is similar to the behaviour of a faulty line card on that device. %We generalized NetBouncer's device failure detection algorithm by calibrating the threshold on the number of problematic flows crossing a device.
%This yielded higher accuracy for NetBouncer. 
For all schemes, we used the same parameters as in \S~\ref{mixedTracesExperiments} (we calibrated NetBouncer's threshold for the number of problematic flows crossing a device). As shown in Fig.~\ref{fig:deviceFailure}, Flock outperforms NetBouncer and 007 for all types of information. Flock (INT) achieves $\approx$100\% recall, %\fix{Is that what you meant? Or should it be $\approx 100\%$? In the LaTeX there is a tilde before 100, which is rendered as a space.} 
compared to 80\% recall of NetBouncer (INT). Flock (A2) reduces error-rate compared to 007 by 8x (fscore 0.97 vs 0.76). Flock (A1+P) has poorer precision for device failures than link failures.

\subsection{Soft gray failures} \label{softnessExperiments}
\noindent  We vary the drop rate on a single failed link to test what drop rates Flock can detect. A useful metric is the ratio of drop-rate on a failed link and the maximum drop-rate on a functioning link, which we call Signal to Noise Ratio (SNR), by a slight abuse of terminology.
%e.g. if drop rate is 0.2\% on a failed link and at most $0.01$ on a functioning link, then SNR is 20. 
%Figures~\ref{fig:softnessRandom}, ~\ref{fig:softnessSkewed} show the accuracy with varying drop rates with ~190K flows. 
We use the same setup and parameters as \S~\ref{deviceFailureExperiments} (except 007 which had to be calibrated separately for skewed traffic since otherwise it had poor recall). We used 32 traces for each data point.
From Figs.~\ref{fig:softnessRandom} and~\ref{fig:softnessSkewed}, we conclude that \Sys can detect links with $>1\%$ drop rate (or SNR $>$ 100) with high recall, with A2. With uniform traffic, 007's accuracy is good when SNR > 100 (consistent with the SNR in Fig.~10 of~\cite{007}).
%(note ~\cite{007} used a much lower drop rate on good links, but this SNR is consistent with the SNR in Fig.~10 of~\cite{007}).
%\footnote{The noise threshold in ~\cite{007} (one in million packets get noisely dropped) is significantly lower than our setup %than the one used here (one in 10 thousand packets), but their SNR ratio matches with Fig.~\ref{fig:softnessRandom}}
007's recall gets affected significantly with skewed traffic. %The drop in accuracy for skewed traffic occurs because 007 does not take into account the number of flows traversing a link.  
After adding passive telemetry with either INT or (A1+A2+P), \Sys's accuracy gets boosted and it is able to detect $>0.4\%$ drop rate reliably. NetBouncer's accuracy is slightly worse than Flock for A1, but its accuracy becomes even worse for multiple concurrent failures with different drop rates (\S\ref{mixedTracesExperiments}). %Previous works have also reported this as a difficult case for localization~\cite{FbLocalization}. 
Schemes utilizing A1 (active probes) are unaffected by skew in the application traffic and hence omitted from Fig.~\ref{fig:softnessSkewed}.

\ifshowpassive
\subsection{Only passive information} \label{benefits of passive info}

Fig.~\ref{fig:pr_passive} shows the precision/recall with only passive information. Note that precision/recall is defined w.r.t. the equivalence classes as no algorithm can do any better, without additional data. 
\fi

\subsection{Misconfigured queue} \label{misconfiguredQueue}
\noindent We move now from simulated faults to our testbed, beginning with misconfigured queues. Flock achieves high accuracy with all information types (see Fig.~\ref{fig:testbed_wred}). Using the same parameters as in \S~\ref{mixedTracesExperiments} for all schemes, Flock (INT) had higher precision and recall than NetBouncer (INT) (16x less error in fscore), whereas Flock (A2) had higher precision than 007 (A2) (the solid markers in Fig.~\ref{fig:testbed_wred}). For comparison, we also show precision recall tradeoff curves with parameters calibrated on the testbed with real examples. In this case, Flock (INT) had 7x less error than NetBouncer (INT) (fscore: 0.87 vs. 0.98) and Flock (A2) had 16x lower error compared to 007 (fscore: 0.97 vs. 0.5) (the hollow markers in Fig.~\ref{fig:testbed_wred}). %Even compared to the best parameter for 007 in hindsight, Flock (A2) reduces the error rate by 4x.
Flock (A2+P) gets very close to Flock (INT), consistent with \S\ref{mixedTracesExperiments}.
%Both NetBouncer and 007 perform poorly with hybrid calibration compared to calibrations on testbed examples. This suggests that the hybrid calibration is suitable for Flock, but it does not resolve the parameter selection issue in NetBouncer and 007.

\begin{figure*}
   \begin{minipage}[b]{0.23\textwidth}
    \centering
      \hskip -.3 cm
    \begin{subfigure}[b]{\linewidth}
    \includegraphics[width=1.1\textwidth]{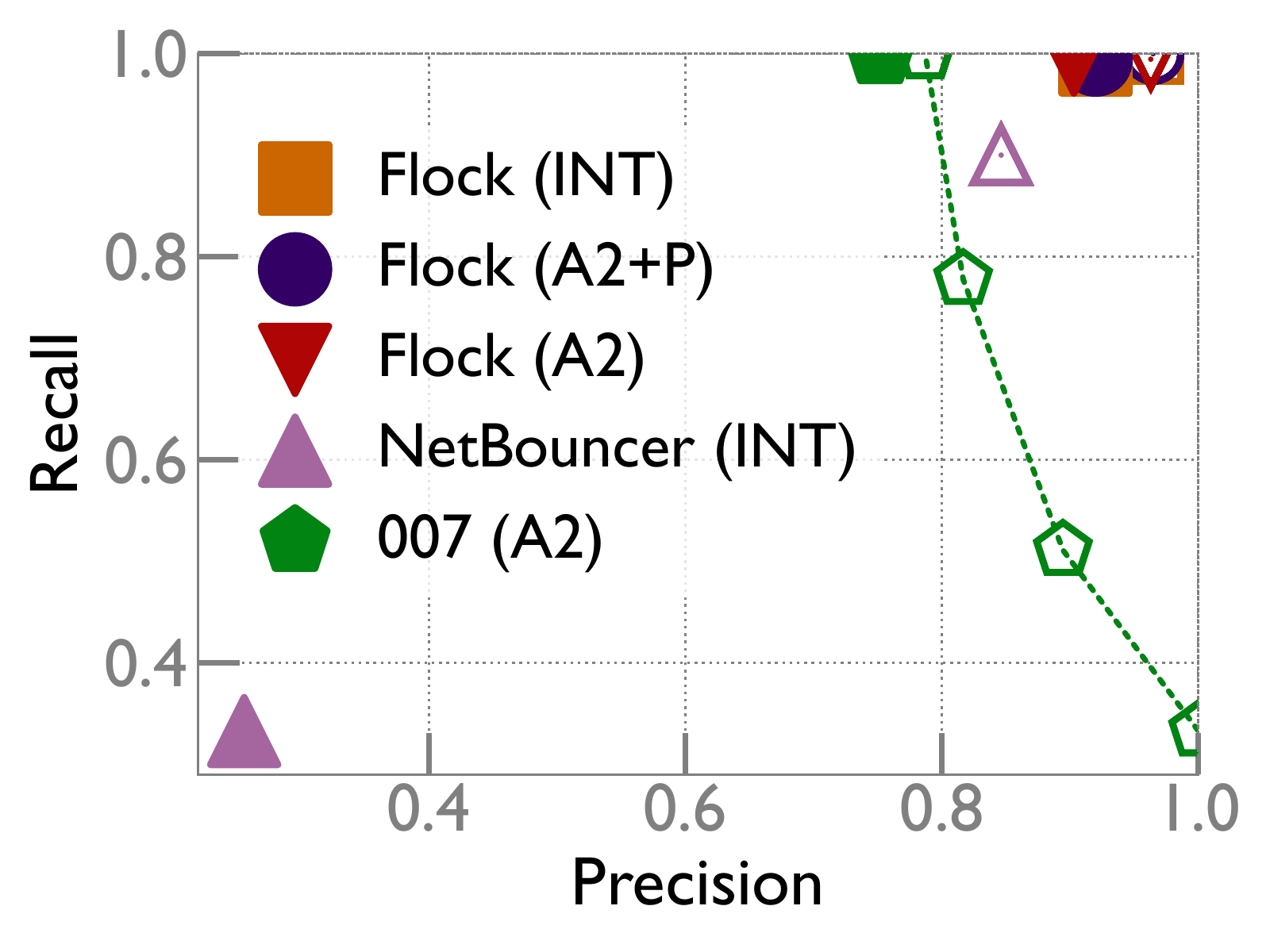}
    \caption{Misconfigured queue}
    \label{fig:testbed_wred}
  \end{subfigure}
  \end{minipage}
  \begin{minipage}[b]{0.23\textwidth}
    \centering
    \hskip .2 cm
    \begin{subfigure}[b]{\linewidth}
    \includegraphics[width=1.1\textwidth]{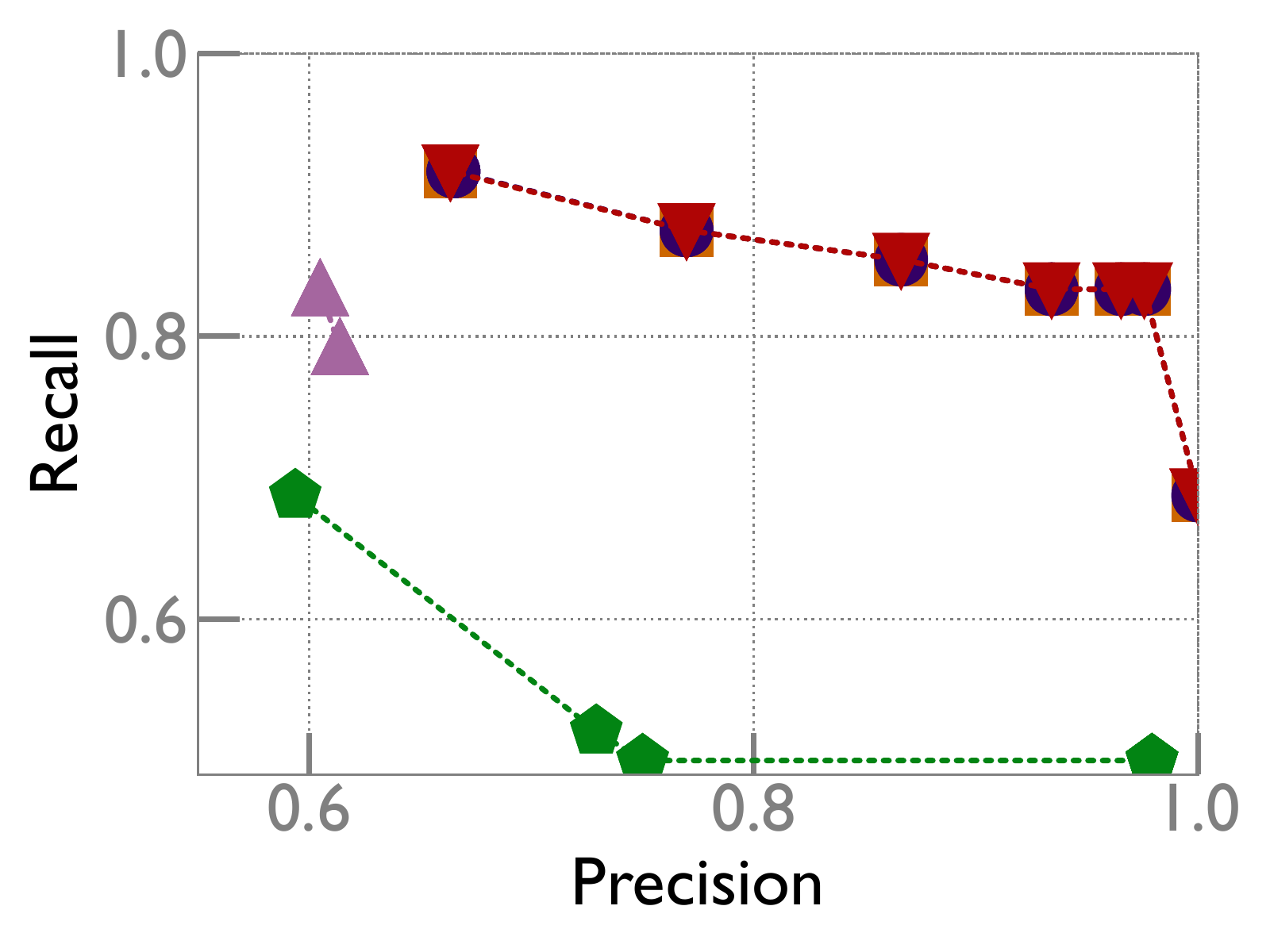}
    \caption{Link Flap}
    \label{fig:testbed_link_flap}
  \end{subfigure}
  \end{minipage}
     \hskip 0.2cm
    \begin{minipage}[b]{0.23\textwidth}
    \centering
    \begin{subfigure}[b]{\linewidth}
    \includegraphics[width=1.10\textwidth]{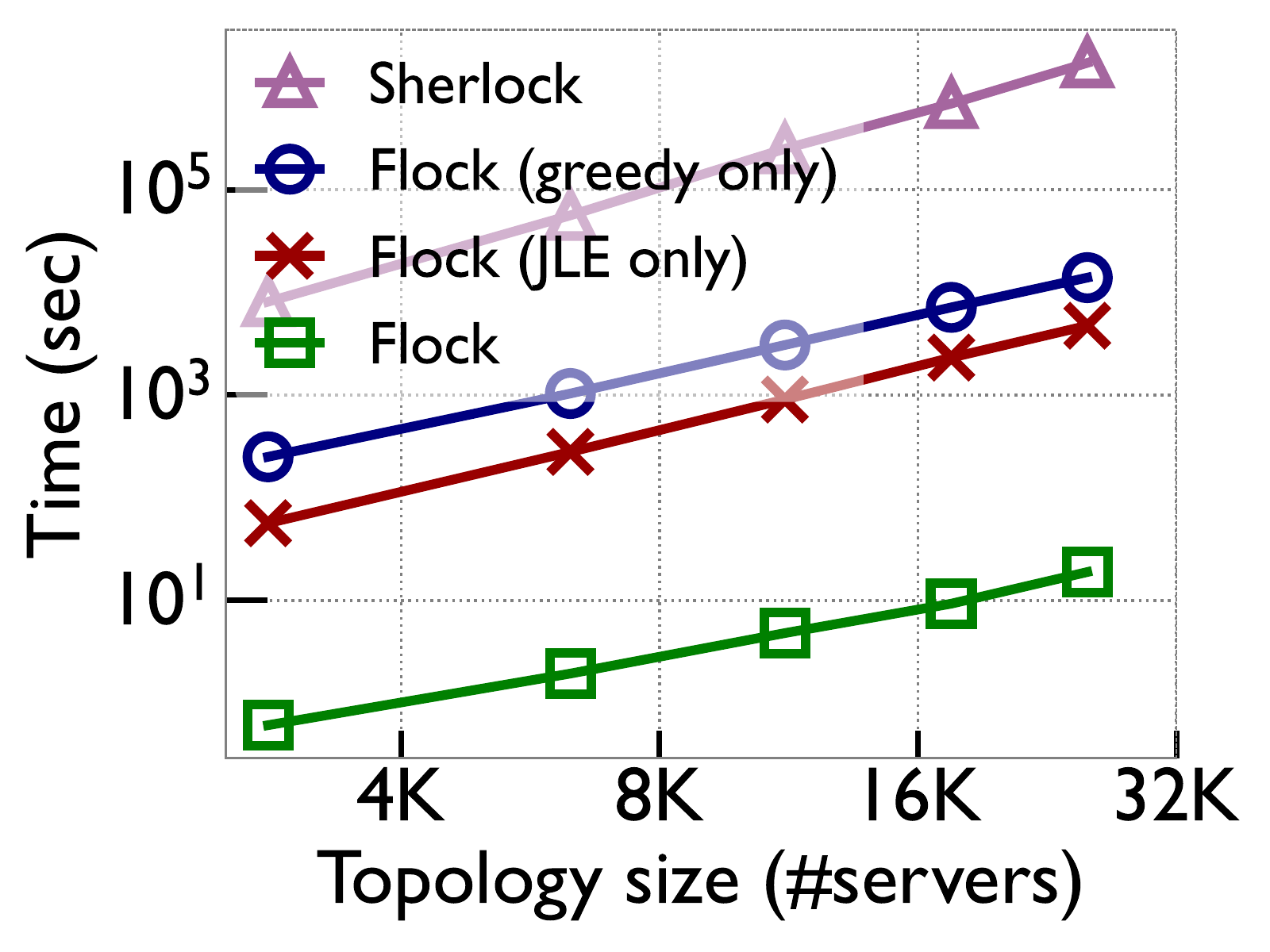}
    \caption{Benefit of greedy and JLE}
    \label{fig:jleBenefit}
  \end{subfigure}
  \end{minipage}
  \hskip 0.15cm
     \begin{minipage}[b]{0.25\textwidth}
    \centering
    \begin{subfigure}[b]{\linewidth}
    \includegraphics[width=1.24\textwidth]{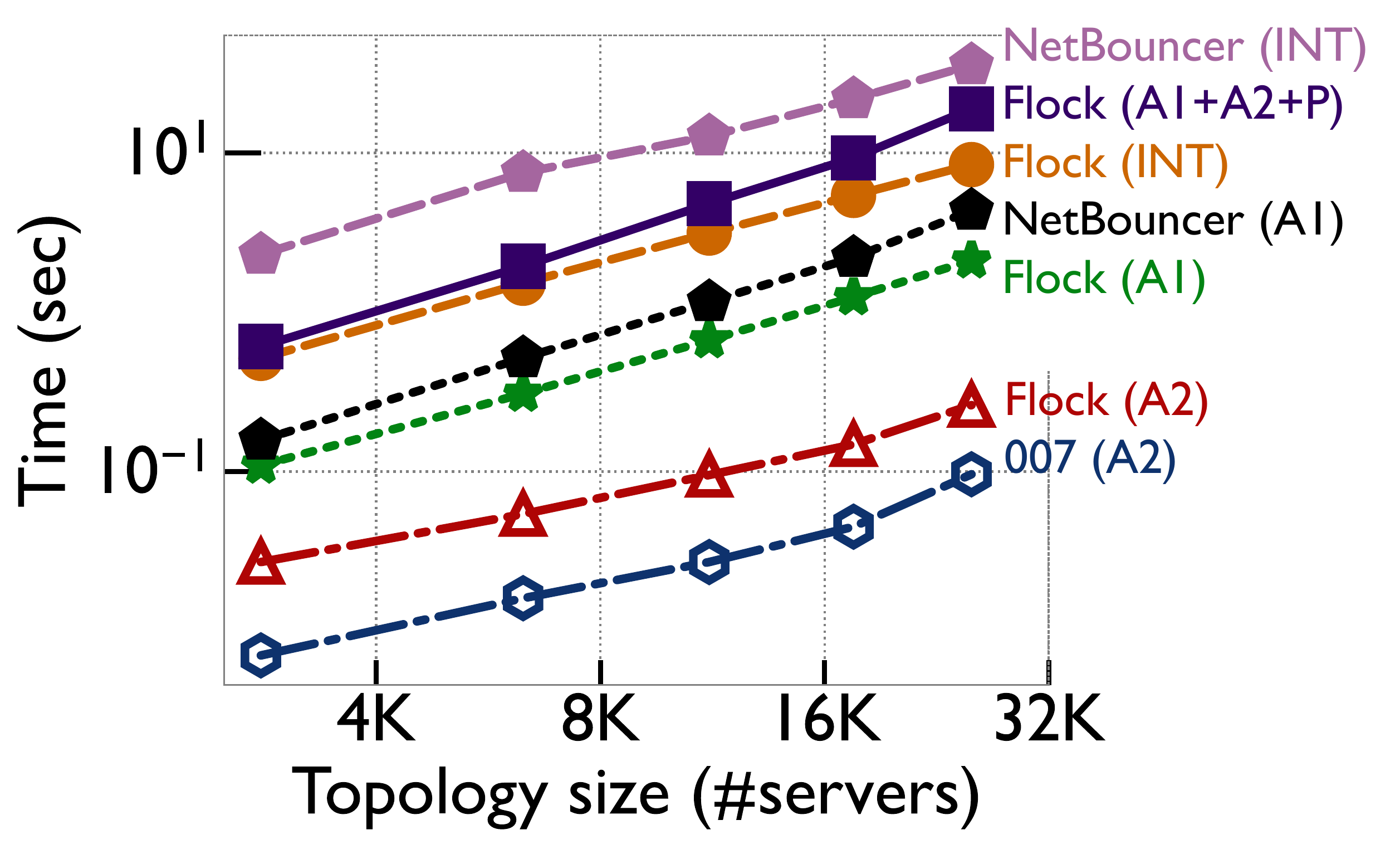}
    %\vskip -0.2 cm
    \caption{Scheme runtime}
%    \vskip -0.cm
    \label{fig:scaleRuntime}
  \end{subfigure}
  \end{minipage}
  \caption{(a), (b) Accuracy on failure scenarios in testbed (solid markers). For (a), for comparison, we also show precision recall tradeoff curves with recalibrated parameters (hollow markers) %(b) %We tested 0-2 faulty links for misconfigured queues, and 0-1 faulty links for link-flap.
    (c) Running time vs. a past PGM scheme (Sherlock); Flock achieves the same accuracy while being >$10^4$x faster. Also shown is the effect of Flock's two optimizations (JLE, greedy) alone.
    (d) Running time of all schemes on various topology sizes.
  }
\label{fig:testbedExperiments}
\end{figure*}

% \begin{figure*}
%   \begin{minipage}[b]{0.24\textwidth}
%     \centering
%       \hskip -.3 cm
%     \begin{subfigure}[b]{\linewidth}
%     \includegraphics[width=1.02\textwidth]{figures/testbed/wred.pdf}
%     \caption{Misconfigured queue}
%     \label{fig:testbed_wred}
%   \end{subfigure}
%   \end{minipage}
%     \begin{minipage}[b]{0.24\textwidth}
%     \centering
%           \hskip .0 cm
%   \begin{subfigure}[b]{\linewidth}
%     \includegraphics[width=1.02\textwidth]{}
%     \caption{Packet corruption}
%     \label{fig:testbed_pacor}
%   \end{subfigure}
%   \end{minipage}
%   \begin{minipage}[b]{0.24\textwidth}
%     \centering
%       \hskip -.3 cm
%     \begin{subfigure}[b]{\linewidth}
%     \includegraphics[width=1.02\textwidth]{}
%     \caption{Packet corruption (filtered)}
%     \label{fig:testbed_pacor_remove_bad_flow}
%   \end{subfigure}
%   \end{minipage}
%   \begin{minipage}[b]{0.24 \textwidth}
%     \centering
%     \hskip .2 cm
%     \begin{subfigure}[b]{\linewidth}
%     \includegraphics[width=1.02\textwidth]{figures/testbed/link_flap.pdf}
%     \caption{Link Flap}
%     \label{fig:testbed_link_flap}
%   \end{subfigure}
%   \end{minipage}
%   \compactcaption{Tradeoff plots for testbed scenarios. All schemes are also shown with the parameters obtained via the hybrid calibration method(the transparent markers). We tested 0-2 faulty links for misconfigured queues, and 0-1 faulty links for other two scenarios.}
% \label{fig:testbedExperiments}
% \end{figure*}

\ifpacor
\subsection{Packet corruption} \label{pacor}
    %Fig.~\ref{fig:testbed_pacor} shows the tradeoff curves for packet corruption. We had to use a higher traffic rate compared  to \S\ref{misconfiguredQueue} to trigger packet drops via Linux \texttt{tc}. 
    \noindent As shown in Fig:~\ref{fig:testbed_pacor}, with hybrid calibration, all schemes had similar accuracy (fscore: \textasciitilde0.9), except 007 (fscore: 0.73). With calibration on testbed, NetBouncer (INT) achieved high accuracy (fscore: 0.99) while all other schemes were about the same (fscore: $\approx0.8$). This was the one case among our test scenarios in which NetBouncer outperformed \Sys. On close examination, we found out that there were a small number of flows on good links with unexpectedly high retransmission rate (>10\%) which were reflected in Flock's output. To illustrate this, Fig.~\ref{fig:testbed_pacor_remove_bad_flow} shows accuracy after filtering out flows with >10\% retransmissions. Here all schemes achieved a fscore of $\approx 0.98$, %with calibration on testbed examples, 
    except 007 (fscore: 0.79), suggesting that further investigation is necessary to understand the cause of the anomalously high retransmission rate in our setup. %For this experiment, the traffic rate was higher compared  to \S\ref{misconfiguredQueue} to trigger packet drops via Linux \texttt{tc}.    
    %We conjecture that it could be transient congestion since Flock does not classify the same link as faulty in successive monitoring intervals (other than the actually faulty link) and the high retransmission flows occur all across the network sporadically.
\fi

%the 3 different failure types (~\S~\ref{failureScenarios})- misconfigured queue, packet corruption and link flaps. %We also include accuracy with the calibrated parameters for Flock (see \S~\ref{parameterSensitivity}).
%Flock results in substantial gains over 007 in all of the 3 experiments. If we define the error rate as average of precision and recall and take the best possible error rate from the trade-off curves for each scheme, then for the misconfigured queue experiment, compared to 007, both Flock(A2) and Flock(A2+P) reduce the error rate by a factor of 10x. For the packet corruption experiment, Flock(A2) reduces the error rate by 1.66x and Flock(A2+P) with the passive information by 3x, highlighting the benefits of incorporating passive information. A1 wasn't available as it requires the IP-in-IP feature which our testbed does not have.

\begin{figure*}
    \hskip -0.6 cm
    \begin{minipage}[b]{0.24\textwidth}
        \centering
        \begin{subfigure}[b]{\linewidth}
            \includegraphics[width=1.10\textwidth]{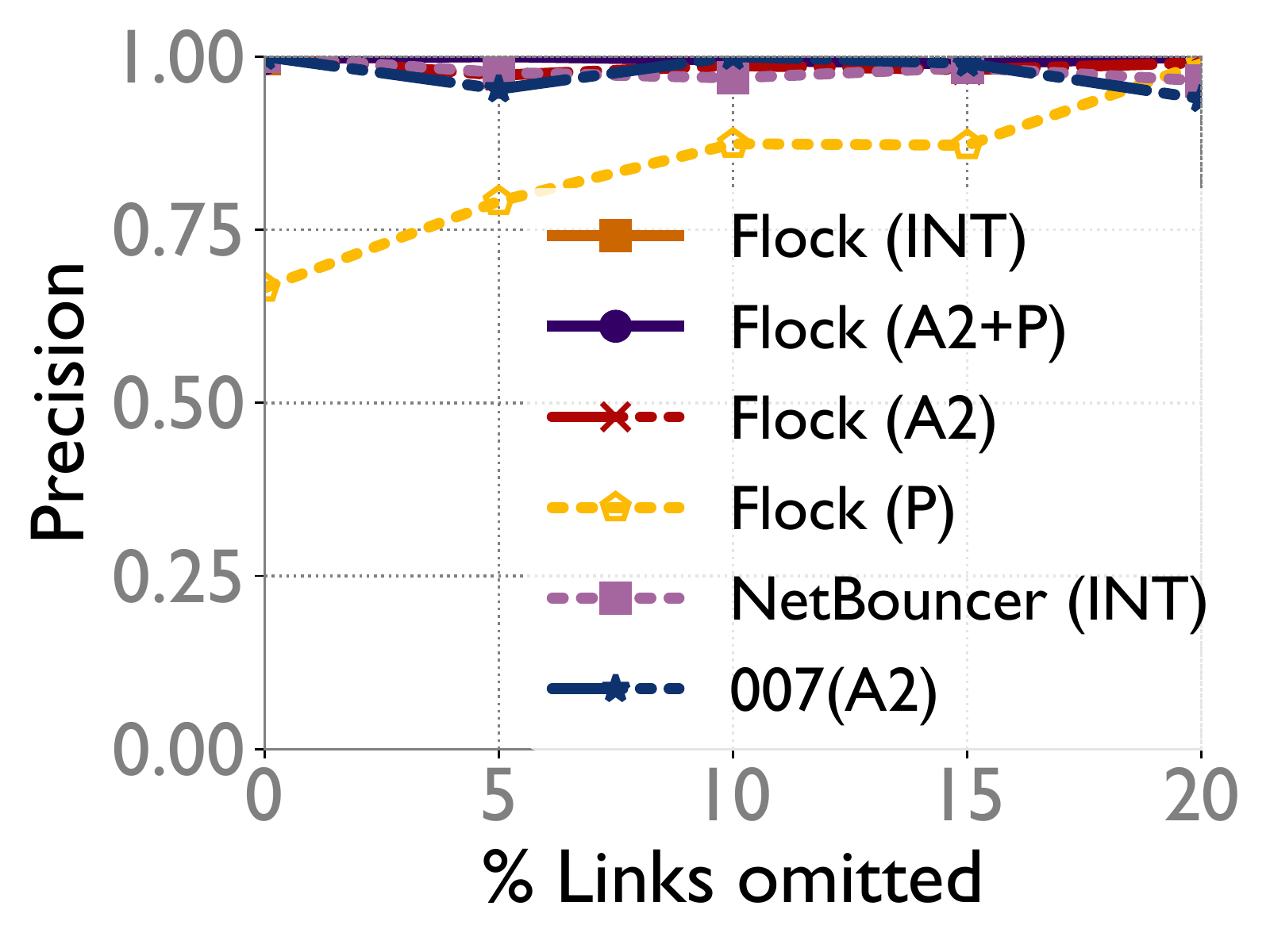}
            \caption{Precision}
            \label{fig:omitPrecision}
        \end{subfigure}
    \end{minipage}
    \hskip 0.2 cm
    \begin{minipage}[b]{0.24\textwidth}
        \centering
        \begin{subfigure}[b]{\linewidth}         \label{fig:irregularrity}
            \includegraphics[width=1.10\textwidth]{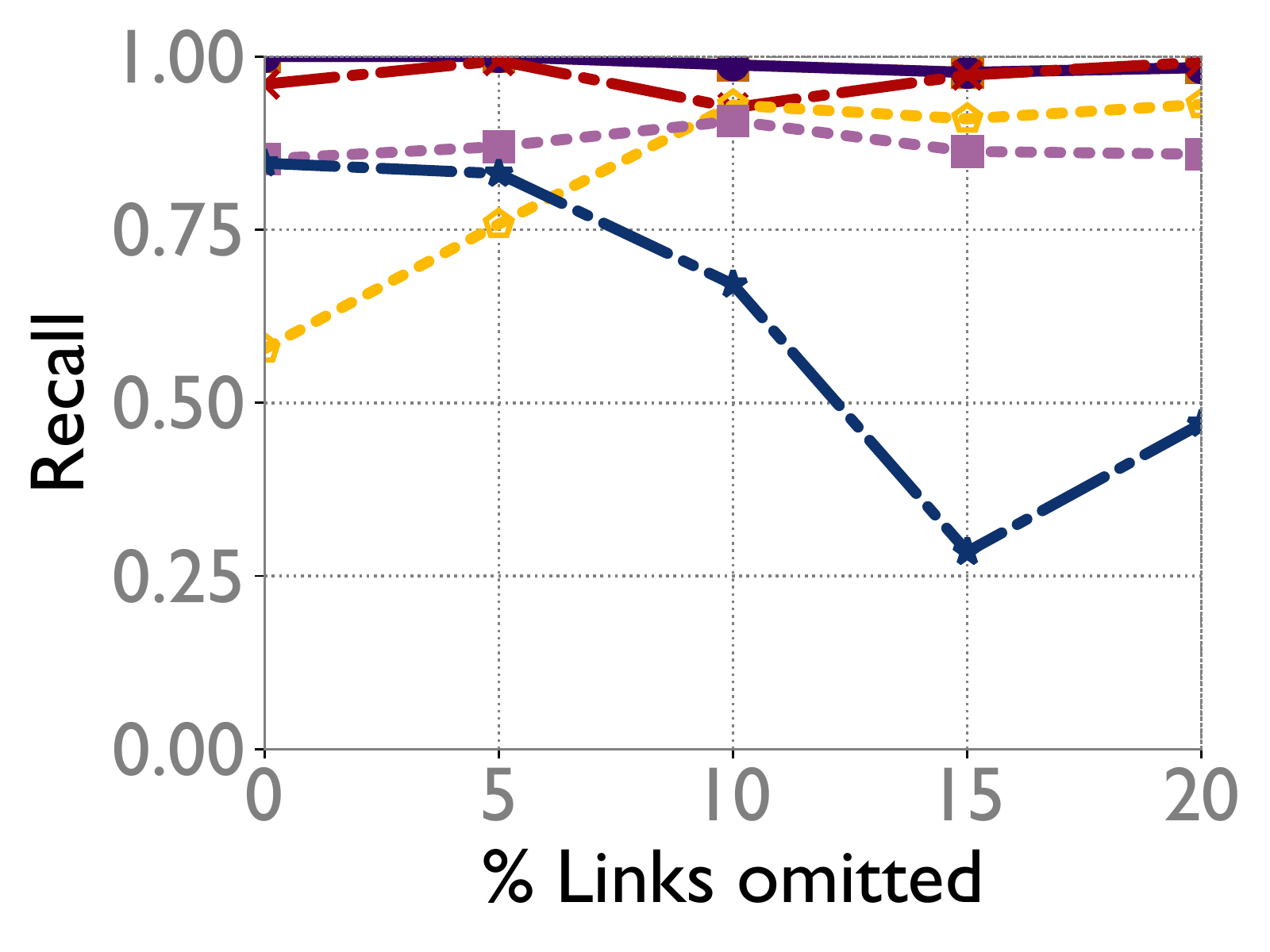}
            \caption{Recall}
            \label{fig:omitRecall}
        \end{subfigure}
    \end{minipage}
    \hskip 0.1 cm
    \begin{minipage}[b]{0.24\textwidth}
        \centering
        \begin{subfigure}[b]{\linewidth}
            \includegraphics[width=1.10\textwidth]{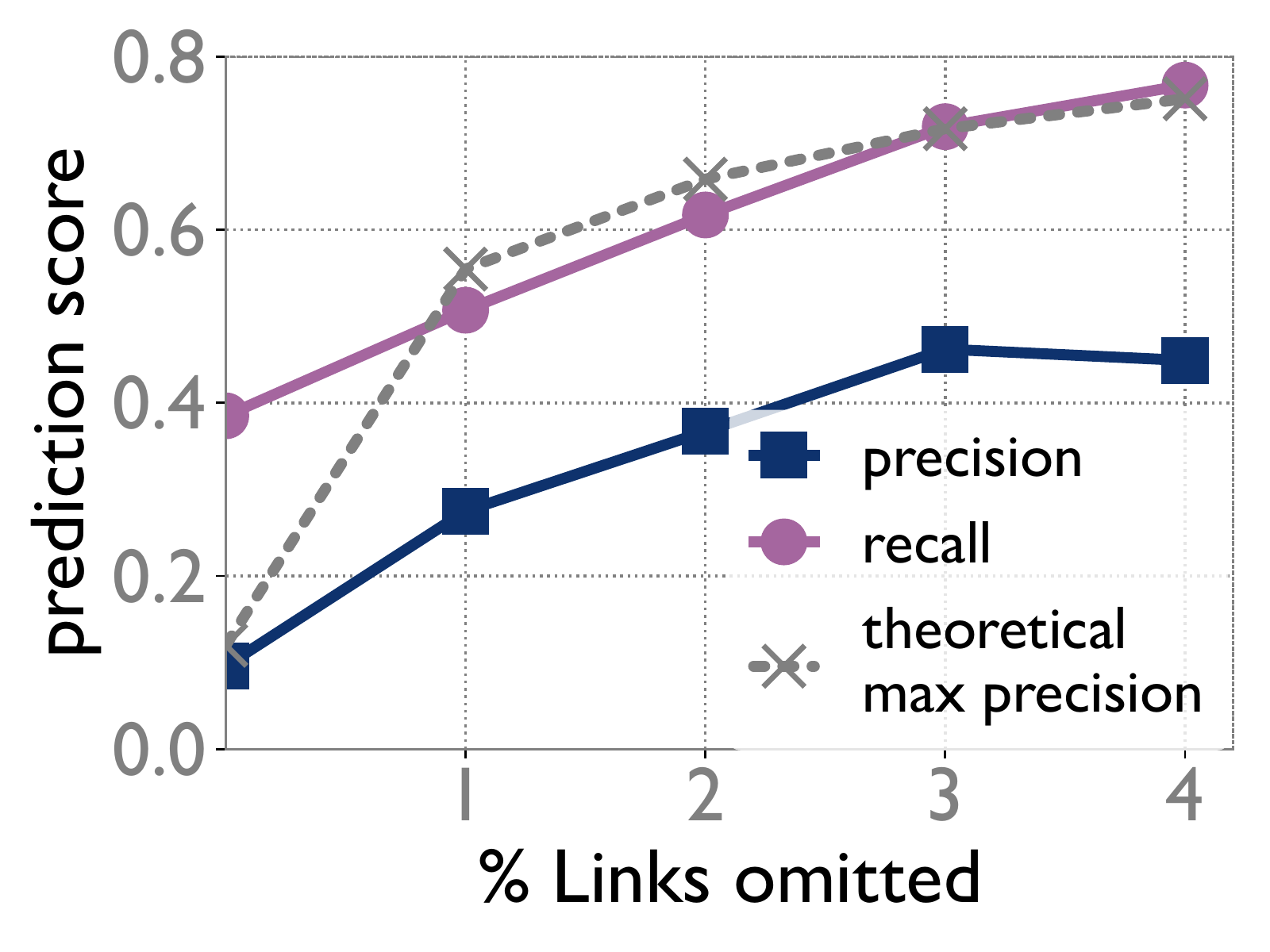}
            \caption{Flock(P) on a hard scenario}
            \label{fig:flock_p_hard_scenario}
        \end{subfigure}
    %\hskip 0.2 cm
    \end{minipage}
    \caption{(a),(b) Accuracy on ``irregular'' Clos networks with a few links omitted. 
    %In the absence of path information of flows, it is impossible to distinguish between symmetric links of a Clos topology. But as the topology becomes irregular, Flock (P) is able to leverage this symmetry breaking to get good localization accuracy. 
    %Other schemes don't handle path uncertainty, hence can not ingest passive only information. 
    (c) Flock (P) produces useful results in a difficult scenario, where other schemes don't apply.
    %Note that 33\% precision means that one needs to look at 2 false positives before getting to the true root cause %(very little topology irregularity in the topology, the failed link is one of several symmetric links, no active probes/path tracing). Other schemes can not ingest passive only input.
    }
    \label{fig:flockIrregularRuntime}
\end{figure*}

% \begin{figure}
%   \hskip -0.6 cm
% \centering
%      \begin{minipage}[b]{0.25\textwidth}
%     \centering
%     \begin{subfigure}[b]{\linewidth}
%     \includegraphics[width=1.19\textwidth]{figures/runtime/runtime_007_nb.pdf}
%     %\vskip -0.2 cm
%     \caption{Localization algorithm runtime}
% %    \vskip -0.cm
%     \label{fig:scaleRuntime}
%   \end{subfigure}
%   \end{minipage}
%   \hskip 0.6 cm
%   \begin{minipage}[b]{0.2\textwidth}
%     \centering
%     \begin{subfigure}[b]{\linewidth}
%     \includegraphics[width=1.15\textwidth]{figures/runtime/runtime.pdf}
%     %\vskip -0.2 cm
%     \caption{Benefit of greedy and JLE}
% %    \vskip -0.cm
%     \label{fig:jleBenefit}
%   \end{subfigure}
%   \end{minipage}
% \caption{

% %Each point is averaged over 32 examples %Noise drop-rate on functional links were set randomly between 0 and 0.01\%.
%   %All schemes were calibrated so that they achieve perfect precision. %007 was calibrated differently for uniform and skewed traffic.
%   }
%   \label{fig:runtime} %\label{fig:Effect of varying parameters}
% \end{figure}

\subsection{Link flaps}
\noindent We use a per-flow analysis (\S~\ref{inference graph model}), classifying a flow as problematic if its RTT is > 10 ms. Since the analysis is per-flow and not per-packet, we had to recalibrate parameters. Flock does not model acks traversing the reverse path, which is important in this case. Accounting for that with a revised model would be possible, we leave that for future work.
%though it would double the size of the path set for each P flow. We leave this for future work.
Even with a somewhat inaccurate model, Flock (INT) reduces the error rate by 1.66x over NetBouncer (INT) (fscore: 0.81 vs 0.69) and Flock (A2) reduces the error rate over 007 by 1.8x (see Fig.~\ref{fig:testbed_link_flap}).
%Flock's accuracy was lower for this experiment, likely because of high latency flows that don't take the faulty link in their forward paths. 

%As this problem manifests as high delay at a hop, it could be possible to detect more directly, using per-hop latency tracking in INT. But as discussed before (\S\ref{sec:existing-approaches}), INT may not be deployed universally or at all hops.  %in all networks or at all hops in a network. 
%-- removing purely for lack of space (Vipul)

%In the case of misconfigured queue, both Flock(A2+P) and Flock(A2) achieve high accuracy tradeoff points. In the case of packet corruption, Flock(A2+P) has strictly better tradeoff points, highlighting the benefits of incorporating passive information.

%, defined as min(precision, recall), by 3.6x over 007 using active probes (A2) and by 14x with both active and passive information (A2+P).

% \begin{itemize}
%     \item Misconfigured queue: We collect around 15K flows overall for each example. We set the drop probability once the queue crosses WRED threshold to be 1\%, the lowest configurable value. We experiment with up to link failures. Fig.~\ref{fig:testbed_wred} shows the precision recall trade-off curves for this experiment.  
%     \item Packet corruption:  
%     \item  
% \end{itemize}

%\subsection{Accuracy with varying number of failures}

\subsection{Irregular Clos topologies}
\noindent Real world datacenters are rarely perfectly symmetric like a Clos topology and typically have asymmetries due to failures, policies, piecemeal upgrades, etc. To see the effect of topology irregularity, we %use the same setup as \S~\ref{mixedTracesExperiments} but 
omit links from the fat tree. % and run simulations with the same setup as \S~\ref{mixedTracesExperiments}. 
We recalibrated parameters on the irregular topologies for all schemes since the topology can be known in advance. %\fix{Say why.  Presumably because in a real deployment we would have the opportunity to do this?} 
(we also tested without the recalibration; \Sys's improvement over other schemes was even higher in this case. We omit the results for brevity.)

Fig.~\ref{fig:omitPrecision},~\ref{fig:omitRecall} show that Flock's accuracy is robust to topology irregularity. 007 is sensitive to topology irregularity, probably because its effect is similar as having traffic skew. We omit A1 since its active probing mechanism is designed only for regular Clos topologies.

\parheading{Flock with passive only input (P):} Some networks may only have passive information available. Past fault localization schemes can not be applied to passive only input since they don't handle path uncertainty.  Operators in this situation resort to manual troubleshooting (traceroutes, adjustments to routing or taking links offline, etc.) which can take days. Flock with only passive input (P) can provide partial analysis (Fig.~\ref{fig:omitPrecision},~\ref{fig:omitRecall}). 
%Ais a new capability that \Sys offers.
%Fig.~\ref{fig:omitPrecision},~\ref{fig:omitRecall} include \Sys with only passive input (P).  
Interestingly, Flock (P)'s accuracy actually \emph{improves} with more links removed.
%The reason stems from the fundamental limits of passive input: 
This is because in a symmetric Clos topology, there are equivalence classes of links (e.g. all links from a leaf switch to the spine layer) that cannot be differentiated because they participate in the same ECMP paths. As the topology becomes irregular, this breaks symmetry, and Flock's inference algorithm automatically takes advantage of this.

Fig.~\ref{fig:flock_p_hard_scenario} shows an even more difficult fully passive scenario where the failed link is one of several symmetric links in a Clos topology, the network has little irregularity for Flock to leverage (< 5\% omitted links) and absence of active probes or path tracing. Flock (P) achieved >75\% recall and >40\% precision. We also show the theoretical maximum precision (calculated from the topology's link equivalence classes). Note that 40\% precision means Flock has narrowed down the faulty link to about 2-3 possibilities. We believe this will be an extremely helpful starting point for operators.
% \begin{figure}
%   \begin{minipage}[b]{0.23\textwidth}
%     \centering
%     \hskip -.7 cm
%   \begin{subfigure}[b]{\linewidth}
%     \includegraphics[width=1.0\textwidth]{figures/param_sensitivity.pdf}
%         \vskip -0.2cm
%     \caption{Varying $p_g$ and $p_b$}
%         \vskip -0.4cm
%     \label{fig:paramSensitivity}
%   \end{subfigure}
%   \end{minipage}
%   \hskip -.2 cm
%   \begin{minipage}[b]{0.23\textwidth}
%     \centering
%     \begin{subfigure}[b]{\linewidth}
%     \includegraphics[width=1.0\textwidth]{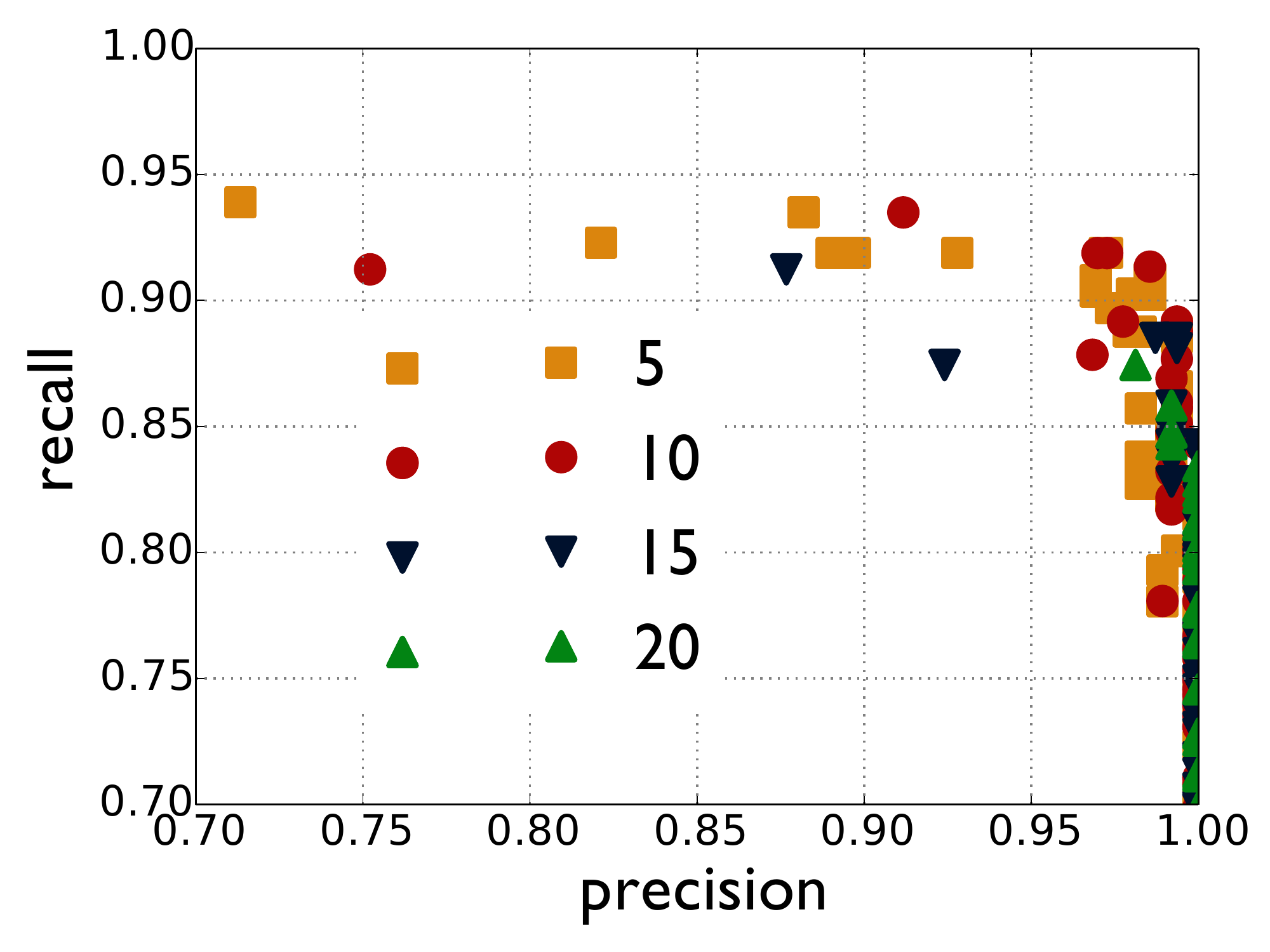}
%     \vskip -0.2 cm
%     \caption{Effect of adding priors}
%     \vskip -0.4cm
%     \label{fig:priorEffect}
%   \end{subfigure}
%   \end{minipage}
%   \scompactcaption{Effect of varying internal parameters of Flock. \fix{what is (b) saying, remove}}
%   \label{fig:varyInternalParameters}
% \end{figure}

% \begin{figure}
%     \includegraphics[width=0.32\textwidth]{figures/param_sensitivity.pdf}
%     \compactcaption{Accuracy with different parameter settings for the model. Precision and recall have been baked into a single metric on the y-axis as accuracy = min(precision, recall).}
%     \label{fig:paramSensitivity}
% \end{figure}

\begin{center}
\begin{table}
%\footnotesize
%\small
\scriptsize
%\tiny
\centering
\setlength\tabcolsep{2.0pt}
\begin{tabular}{|c|c|c|c|c|c|c|c|c|c|c|}
  \hline
  %\multicolumn{2}{c|}{D: trained on different scenario} &
  \multicolumn{2}{|c|}{\begin{tabular}{@{}c@{}}Parameters \\ calibrated for $\rightarrow$\\ (D: different, S: same) \end{tabular}} &
  \multicolumn{2}{c|}{\begin{tabular}{@{}c@{}}Different\\ topology \end{tabular} } & 
  \multicolumn{2}{c|}{\begin{tabular}{@{}c@{}}Different\\ failure \\ rate \end{tabular} } & 
   \multicolumn{2}{c|}{\begin{tabular}{@{}c@{}}Different\\ monitoring \\ interval \end{tabular} } & 
\multicolumn{2}{c|}{\begin{tabular}{@{}c@{}}Different\\ failure \\ scenario \end{tabular} } &
  \begin{tabular}{@{}c@{}}\textbf{Aggregate}\\ \textbf{score} \\ (average Fscore)\end{tabular} \\ \clineB{1-10}{1.0}
  
  \multicolumn{2}{|c|}{\begin{tabular}{@{}c@{}} p: precision, r: recall \end{tabular}} & p & r & p & r & p & r & p & r & \\ \thickhline

  \multirow{2}{*}{\begin{tabular}{@{}c@{}}Flock\\ (A1+A2+P) \end{tabular}} & D & 0.92 & 0.98 &  0.97 & 1 &	0.98 & 1 & 0.95 & 0.99 & 0.973\\ \clineB{2-11}{1.0}
   & S  & 0.96 & 0.98 & 0.97 & 1 & 0.99 & 1 & 0.96 & 0.99 & 0.981\\ \thickhline

  \multirow{2}{*}{Flock (A2)}    & D  & 0.90 & 0.99 &  0.96 & 1 &	0.86 & 0.97 &	0.94 & 0.98 & 0.948\\ \clineB{2-11}{1.0}
  & S & 0.94 & 0.98 &  1 & 1 &	0.94 & 0.92 &	0.94 & 0.98 & 0.962\\ \thickhline

  \multirow{2}{*}{Flock (INT)}  & D   & 0.92 & 0.99 & 0.96 & 1 &	0.98 & 1 &	0.95 & 0.99 & 0.973 \\ \clineB{2-11}{1.0}
  & S & 0.96 & 0.99 & 0.96 & 1 &	0.99 & 1 &	0.96 & 0.99 & 0.981\\ \thickhline

  \multirow{2}{*}{007 (A2)}   & D     & 0.75 & 1  &	0.87 & 1 & 0.51 & 0.82 &	1 & 0.33 & 0.728 \\ \clineB{2-11}{1.0}
  & S & 0.82 & 0.74 & 1 & 1 & 0.47 & 0.87 & 0.82 & 0.74 & 0.792\\ \thickhline
                  
  \multirow{2}{*}{\begin{tabular}{@{}c@{}}NetBouncer\\ (INT) \end{tabular}} & D & 0.25 & 0.33	& 0.95 & 1 &	1 & 0.66 & 0.28 & 0.33 & 0.589\\ \clineB{2-11}{1.0}
  & S & 0.81 & 0.9 &  1 & 1 & 0.95 & 0.85 & 0.81 & 0.9 & 0.901 \\ \hline
 \end{tabular}
 \caption{Evaluating parameter robustness. For each scheme, we show accuracy when its parameters are calibrated on a different environment than the test data set (D) and when they are calibrated in the same environment (S).}
 \label{paramRobust}
% \vspace{-0.3in}
\end{table}
\end{center}

\subsection{Parameter calibration robustness} \label{parameterSensitivity}
\noindent As discussed in \S~\ref{sec:paramCalibration}, all the schemes we consider have parameters that must be set, and we have calibrated them based on simulations with known ground truth. What happens when these systems encounter unexpected situations?

To test robustness, we trained each scheme on one environment and tested in a different environment. (This is effectively a strong form of cross-validation where not only are the train and test sets different, they are drawn from \emph{different distributions}.) Specifically, we created different types of differences between train and test, changing the (a) duration of monitoring, (b) topology, (c) failure rate (training set has  failed links with significantly different drop rates), 
%\fix{Be specific about what failure rate means. Probability that a link is failed?  Or probability that a packet is dropped by a failed link?} 
and (d) failure type. In particular, for (b), the schemes were calibrated in our simulator with random packet drops, and tested on misconfigured queues in a 20x smaller topology in our physical testbed.
Table~\ref{paramRobust} shows the accuracy in these cases, both when the train and test sets are drawn from different distributions (``D'') and when they are drawn from the same distribution (``S''). The table's aggregate score column shows Flock was fairly robust to parameter calibration in a different environment, with under 2\% loss in accuracy. 007 was also robust ($6\%$ loss) while NetBouncer was more sensitive ($31\%$ loss).

We also tested Flock's parameter sensitivity, i.e., how precision and recall vary with perturbations of its parameters. Accuracy remained high for many choices of parameters. See \ifEurosys Fig.7a in appendix of ~\cite{FlockFullDraft} \else Fig. ~\ref{fig:paramSensitivityPgPb} in appendix \fi.
%\vnew{Fig.~\ref{fig:paramSensitivityPgPb} shows the sensitivity of Flock's parameters $p_g$ and $p_b$. 
 %, showing that Flock's parameters are not very sensitive.

%Fig.~\ref{fig:paramSensitivity} illustrates the accuracy with various parameter settings for $p_g$ and $p_b$, with $\rho$ fixed, with traces from \S\ref{mixedTracesExperiments} for Flock (INT). %Fig.~\ref{fig:paramSensitivity} shows that \SystemName is robust to a wide range of parameter settings as the accuracy is perfect for most data points.
%As discussed earlier, $p_g$ and $p_b$ need to be set according to the ``fine-grainness" of the failures one wants to detect. For e.g. setting $p_b$ to be 0.01 implicitly suggests that one is looking to detect links with loss rates roughly of the order of $1\%$ or greater.
%The effect of increasing $p_g$ or $p_b$ is a reduction in recall, confirmed by Fig.~\ref{fig:paramSensitivity}. %Hence accuracy is expected to eventually dip as $p_g$ and $p_b$ increases.
%Incorporating priors increases precision at the expense of reduced recall, which is often useful to network operators. \fix{It's hard to see how to read that fact from the figure.  Needs more explanation.  One option is to move the figure and any other details to appendix and have just a 3-4 line text explanation here.}

%This is useful since operators often prefer precision over recall. %Note that a prior of 0.5 denotes that all hypotheses are equally likely apriori as the probability of a link f

\subsection{Running time and scalability} \label{runningTimeSection}

\parheading{Algorithm runtime:} \Sys's main algorithmic innovation is its fast PGM inference compared to Sherlock's PGM inference. Fig.~\ref{fig:jleBenefit} shows \Sys's inference is more than 4 orders of magnitude of faster than Sherlock, whose runtime on a large network was estimated to be 19 days, based on extrapolating a partial run. Recall \Sys employs two optimizations: greedy and JLE. Fig.~\ref{fig:jleBenefit} shows each of these optimizations alone yields a $\approx 100$x improvement over Sherlock. %Overall, at the highest scale in Fig.~\ref{fig:jleBenefit}, Flock evaluated 3.5M hypothesis in 18 seconds.

Fig.~\ref{fig:scaleRuntime} compares \Sys to the non-PGM schemes. Flock is faster than NetBouncer on the same input data. 007 is the fastest but its time savings (< 1 sec) does not trade-off well with accuracy.

\parheading{Agent/Collector} \label{collectorScalability}: Refer to Appendix  \ref{collectorScalabilityAppendix} for the scalability of our agent/collector. As other commercial solutions also exist~\cite{manageengine,solarwinds}, we leave more optimized agent/collector designs for future work.

\section{Related work}
%DeepView~\cite{deepView} uses fault localization to find root causes of virtual disk failures, moving away from PGMs to handle scale. \Sys's model and algorithm are both potentially applicable here. 
Many other works have studied fault localization outside of datacenter networks -- for troubleshooting reachability, black holes in IP networks~\cite{maxCoverage,tomo,netscope,netsonar}, virtual disk failures~\cite{deepView}, performance problems in distributed services~\cite{sherlock,gestalt,sage} and application performance anomalies~\cite{snap}. Some of these can benefit from PGM-based inference, accelerated via JLE. We leave this for future work.

Detecting entire flow drops, for e.g., caused by a misconfigured ACL or a forwarding loop, is challenging for end-to-end schemes (as noted in~\cite{netbouncer}), partly due to (un)available input when paths are fully blocked. Other schemes such as NetSeer~\cite{netseer} and Omnimon~\cite{omnimon} or network verification~\cite{mai2011debugging,kazemian2012header,khurshid2013veriflow,fogel2015general} are more suitable for this class of faults.

Several works orchestrate active probes for inference~\cite{score,shrink,nodeFailureLocalizationViaNetworkTomography,lend,algebraicApproach,KDD-Fault-Localization,netbouncer,netnorad,pingmesh}. Flock can handle active probes and outperforms one such approach (NetBouncer). Additionally, it can use passive data for accuracy gains. deTector~\cite{deTector}, MaxCoverage~\cite{maxCoverage}, Tomo~\cite{tomo} and Score~\cite{score}  find a minimal set of components that explain most of the problems (e.g. packet drops). % using heuristics for graph vertex cover. 
%We believe that they will run into similar problems as 007 with passive traffic as they don't account for traffic skew.
%for computing a small number of links that explain the probe losses.
%Note that these greedy algorithms are very different from Flock's greedy algorithm, which tries to approximate the brute-force MLE.
We expect them to run into similar problems as 007, since they don't account for traffic skew. Simon~\cite{simon} reconstructs queuing times from active probes. This allows it to diagnose high latency, but not silent packet drops.  Packet mirroring~\cite{everflow} can catch packet drops that happen in the switch pipeline (e.g. congestion drops), but can not  detect silent interswitch or silent intercard drops. Flock is well suited to detect such problems. Pingmesh~\cite{pingmesh} and NetNorad~\cite{netnorad} use active pings, but do not provide complete localization. \cite{FbLocalization} identifies anomalies among symmetric links in a Clos network using statistical tests. It is sensitive to topology irregularity and requires path information. %It further requires an in-network packet marking scheme to get the path of a flow.

\section{Conclusion}
Flock is a fault localization system for large datacenter networks based on end-to-end information.  \Sys's key innovation is an optimized MLE inference algorithm which allows it to use a PGM at scale, achieving both high accuracy and speed, where past work achieved only one of the two.

\section*{Acknowledgements}
We thank Radhika Mittal and members of the SysNet group at UIUC for providing feedback on an earlier version of the paper. We also thank CoNext 2023 reviewers for their reviews and feedback. 

\bibliographystyle{abbrv}
\balance
\begin{small}
\bibliography{bibliography}
\end{small}

%\theendnotes
\ifEurosys
\else
%!TEX root = sample-sigconf.tex

\appendix
%Appendix A
%\section{Headings in Appendices}

\begin{figure*}
    \hskip -0.6 cm
    \begin{minipage}[b]{0.4\textwidth}
        \centering
        \begin{subfigure}[b]{\linewidth}
            \includegraphics[width=0.8\textwidth]{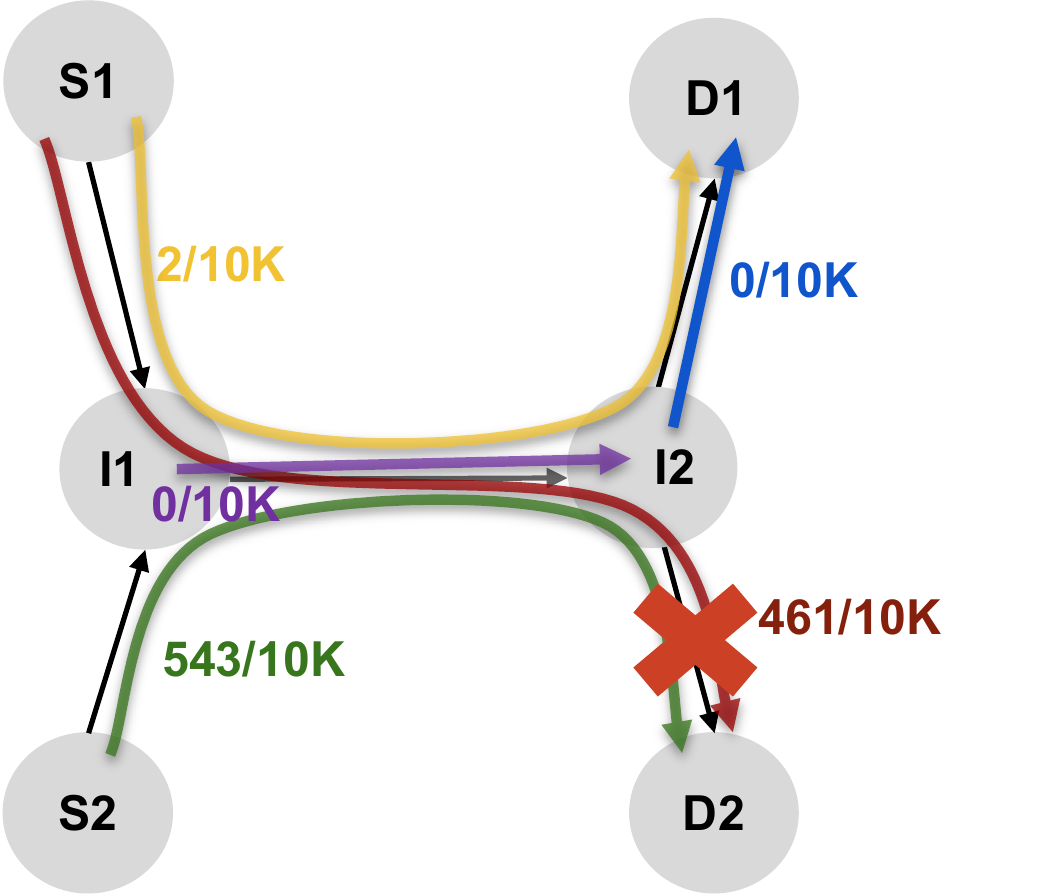}
            \caption{Example: failed link shown via the red cross}
            \label{fig:exampleNetwork}
        \end{subfigure}
    \end{minipage}
    \hskip 0.7 cm
    \begin{minipage}[b]{0.4\textwidth}
        \centering
        \begin{subfigure}[b]{\linewidth}
            \includegraphics[width=1.15\textwidth]{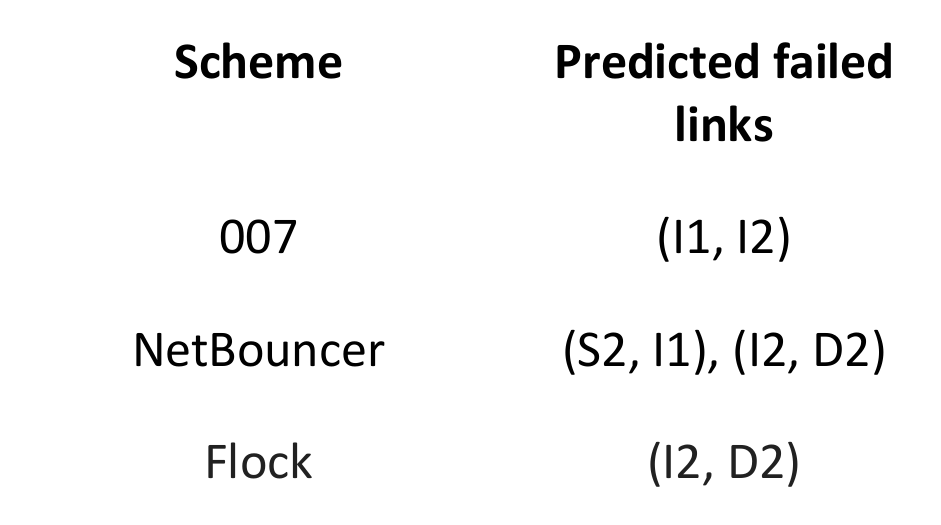}
            \caption{Output of various schemes}
            \label{fig:exampleLegend}
        \end{subfigure}
    \end{minipage}
  \caption{Example: (a) 5 links in the network. 5 flows shown in 5 different colors, annotated with packets dropped/packets sent. (b) Flock correctly localizes the failed link.}
  \label{fig:example}
\end{figure*}

\section{Proofs} \label{greedyJustificationProofs}

If the topology, flow statistics and path taken for each flow are chosen arbitrarily, then finding the MLE for such adversarial inputs, unsurprisingly, is NP-hard (\S~appendix~\ref{greedyJustificationProofs}). %We call this problem \textit{Adversarial Inference}.
%\begin{theorem}\label{adversarialVersionNPHard}
%Adversarial inference is NP-hard.
%\end{theorem}
%Proof is provided in the appendix.

\begin{definition}
Define adversarial inference as inference when the input is  arbitrary, consisting of topology, flow metrics with arbitrarily chosen source and destinations, paths and arbitrarily chosen flow metrics (packets sent/dropped). 
\end{definition}

However, we can make the following assumption, in the context of packet drops, to alleviate intractability for adversarial inputs: for each link, there is a (unknown) ground truth drop probability. Rather than adversarially, each packet crossing that link is dropped independently according to the drop probability of the link. First, we prove that adversarial inference is NP-hard.

\begin{theorem}\label{adversarialVersionNPHardAppendix}
Adversarial inference is NP-hard.
\end{theorem}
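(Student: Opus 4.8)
The plan is to reduce from \textsc{Vertex Cover}, which is NP-complete. Given a graph $G=(V,E)$ and an integer $k$, I would build an adversarial inference instance as follows: create one link-node $\ell_v$ for each vertex $v\in V$ and no other failable component, and for each edge $e=(u,v)\in E$ create a flow $F_e$ whose path set is the single path $\{\ell_u,\ell_v\}$ with observed statistics $r$ bad packets out of $r$ packets sent, where $r$ is a constant depending only on the fixed hyperparameters $p_g,p_b,\rho$, chosen below. Since the adversarial model lets us specify an arbitrary path set per flow (and such paths are realizable, e.g.\ by a star topology with a non-failable hub adjacent to a node $w_v$ via $\ell_v$, routing $F_e$ between $w_u$ and $w_v$), this is a legitimate, polynomial-time construction.

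Next I would evaluate the normalized log-likelihood of an arbitrary hypothesis $H\subseteq\{\ell_v:v\in V\}$. Because every flow has \emph{all} of its packets bad ($t=r$), Equation~\ref{likelihood equation} with $w=1$ gives $P[F_e\mid H]=p_b^{\,r}$ when $H$ hits $\{\ell_u,\ell_v\}$ and $p_g^{\,r}$ otherwise. After normalizing by the empty hypothesis and folding in the prior, this collapses to the clean form $LL(H)=a\cdot c(H)-b\cdot|H|$, where $c(H)$ is the number of edges covered by $H$, $a=r\log(p_b/p_g)>0$, and $b=\log\!\big((1-\rho)/\rho\big)>0$ (using $\rho<1/2$, the realistic regime). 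I would then pick $r$ large enough that $a>b$.

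The core of the argument is that a maximizer of $LL$ is exactly a minimum vertex cover of $G$. First, any maximizer $H^\star$ covers every edge: if some $e=(u,v)$ were uncovered then $\ell_u\notin H^\star$, and passing to $H^\star\cup\{\ell_u\}$ raises $c$ by at least $1$ while raising $|H|$ by exactly $1$, a net gain of at least $a-b>0$ — contradiction. (The crucial point is that, in this instance, adding a failed link never lowers any flow's likelihood, since turning a path bad multiplies that flow's factor by $(p_b/p_g)^r>1$; this is precisely why I force $t=r$.) Second, among hypotheses covering all edges, $LL(H)=a|E|-b|H|$ is maximized by the one with fewest links, so $\max_H LL(H)=a|E|-b\,\tau(G)$ with $\tau(G)$ the vertex-cover number. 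Hence $G$ has a vertex cover of size $\le k$ iff $\max_H LL(H)\ge a|E|-bk$, and the decision version of adversarial inference is NP-hard.

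The step I expect to be the main obstacle is making the reduction \emph{tight}: ruling out any hypothesis that is not a vertex cover yet scores as high as a small cover. This is exactly what pins down the two design choices above — setting $t=r$ so that failing an extra link is monotone in every flow's likelihood, and using the prior with $\rho<1/2$ so that each included link carries a strictly positive cost $b$, which makes the local-improvement argument go through. The remaining points — choosing $r$ to dominate the fixed hyperparameters, realizing the two-link paths in a syntactically valid topology in which no single failable component lies on all flows' paths, and checking the decision version is well-posed and in NP (the threshold $a|E|-bk$ can equivalently be phrased as a comparison between two products of the rationals $p_g,p_b,\rho$, so numerical precision is a non-issue) — are routine and I would dispatch them briefly.
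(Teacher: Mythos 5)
Your construction is essentially the paper's: both proofs reduce from vertex cover, and both use, for each graph edge, a flow routed over the two links corresponding to its endpoints with fully ``bad'' observations, so that marking either endpoint link failed multiplies that flow's (normalized) likelihood by a fixed factor greater than one, making coverage of every edge the dominant term. Where you genuinely differ is in how the size of the hypothesis is penalized, and in the reduction style. The paper works with the prior-free objective $\prod_f P[f\mid H]$ and builds the penalty into the instance: for every vertex-link it adds a single-link flow with a mildly ``good'' observation (its factor $1+\epsilon$) so that each inclusion carries a small cost, and it adds strongly good single-link flows (factor $1+C$) to forbid failing the auxiliary links it needs to make the two-link paths legitimate; it then gives a Turing reduction, calling the inference oracle and reading off a minimum vertex cover. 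You instead lean on the prior $\rho^{|H|}(1-\rho)^{n-|H|}$ with $\rho<1/2$ to supply the per-link cost $b$, and you give a Karp-style reduction to the decision version via the threshold $a|E|-bk$; your star topology with a non-failable hub also sidesteps the need to forbid auxiliary links. These are reasonable simplifications, and your local-improvement argument (any maximizer covers all edges because adding an endpoint gains at least $a-b>0$; among covers, $LL(H)=a|E|-b|H|$) is correct as stated.

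The one substantive caveat is your reliance on the prior. The objective the paper actually analyzes in its proof is the pure likelihood of \S 3.3, with no prior term; under that formulation your size penalty $b$ vanishes, every superset of a cover attains the same value $a|E|$, and the decision question trivializes, so the reduction as written would not go through. The repair is exactly the paper's device and costs you one sentence: for each vertex-link add a single-link flow with $0$ bad packets out of $t'$ sent, which contributes $t'\log\frac{1-p_b}{1-p_g}<0$ when that link is marked failed; this plays the role of $b$ (choose $r$ large enough that $a$ exceeds it), and the rest of your argument carries over verbatim. With that adjustment, or with the MAP-with-prior formulation made explicit in the problem statement, your proof is complete.
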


\begin{proof}
We reduce the problem of finding a minimum vertex cover in a graph to the adversarial inference problem. For the proof, we will design an algorithm for min-vertex cover that makes polynomial number of accesses to an oracle $O^{AI}$ for adversarial inference.

Let's say, we're given a graph G = (V, E) where V is the set of vertices and E is the set of edges and our goal is to find a minimum vertex cover in this graph. We create topology $\mathcal{T}$ for $O^{AI}$ as follows-
\begin{itemize}[leftmargin=*]
\item We create ``vertex''-nodes $n_v^1$ and $n_v^2$ in $\mathcal{T}$ for each vertex $v \in V$ and attach them with a directed link $l_v$: $(n_v^1 \rightarrow n_v^2)$. Note that all vertex-nodes in $\mathcal{T}$ have exactly one link, either incoming or outgoing.

\item For each edge $e \in E$, we create an ``edge"-flow $f_e$, where $f_e$ goes through links $(l_{v_1}$ and $l_{v_2})$, where $v_1$ and $v_2$ are the endpoints of $e$. In order for this to be a legitimate path, we connect the endpoints of $(l_{v_1}$ and $l_{v_2})$ to a special node. 
\item For each link $l$ in $\mathcal{T}$, we create a ``link"-flow $f_l$. going through just link $l$.
\end{itemize}

Finally we need to assign each flow, some number of packets sent and dropped. We first derive an expression for the return value of oracle $O^{AI}$. The output of $O^{AI}$ is a binary assignment to links in $\mathcal{T}$ where we interpret a value of 0 as that link being failed and 1 as the link being up. Let $E_{\mathcal{T}}$ be the number of links in $\mathcal{T}$ and $\gamma^A_f =  \prod_{l\in Path(f)} l$ denote the status of the path taken by flow a $f$ with $\gamma_f^A$=0 being path failed and 1 being path not failed. Recall that a path is deemed to be failed if it contains at least one failed link. For flow $f$, if $n$ and $r$ be the number of packets sent/dropped by the flow respectively, then the likelihood for flow $f$ can be written as:

% %\vspace{1in}
% \begin{algorithm}
% %\vspace{0.6in}
% \fix{Unfinished}
% \begin{algorithmic}[1]
% \small
% \Procedure{MCMCSamplingWithJLE()}{}
%     \State $\Delta$ = ComputeSingleLinkLikelihoods()
%     \State current\_hypothesis = []
%     \State current\_likelihood = []
%     \For{1 to max\_iter}
%         \State link = LinkToResample(current\_likelihood, $\Delta$)
%         \State current\_likelihood = current\_likelihood + $\Delta$[link]
%         \State $\Delta$ = UpdateDeltaArr($\Delta$, current\_hypothesis, link)
%         \State current\_hypothesis = current\_hypothesis.add(l)
%     \EndFor
% \EndProcedure
% \end{algorithmic}
% \caption{JLE can be used to speedup MCMC with discrete variables (shown here for binary variables). The function LinkToResample samples a link treating the $\Delta$ array as the logarithm of a probability distribution across links (barring a normalizing constant).}\label{JLEMCMCAlgorithm}

% \end{algorithm}
% \vspace{1in}

%p_g = p_1
%p_b = p_2
\begin{align*}
& \frac{P[f|H = \{l_1, l_2, ...\ l_{E_{\mathcal{T}}}\}]}{P[f|l_1=1, l_2=1, ...\ l_{E_{\mathcal{T}}}=1]} \quad \quad  \\
&\quad \quad \quad = \frac{\gamma^A_f p_g^r (1 - p_g)^{n-r} + (1 - \gamma^A_f) p_b^r (1 - p_b)^{n-r}}{p_g^r (1-p_g)^{n-r}}\\
&\quad \quad \quad = \frac{p_b^r (1-p_b)^{n-r}}{p_g^r (1-p_g)^{n-r}}\bigg(1+\gamma^A_f\Big(\frac{p_g^r (1-p_g)^{n-r}}{p_b^r (1-p_b)^{n-r}} - 1\Big)\bigg)\\
&\quad \quad \quad  = \frac{1+\alpha_f \gamma^A_f}{\alpha_f + 1} = \frac{(1 + \alpha_f \prod_{l\in Path(f)} l)}{\alpha_f + 1} 
\end{align*}

where, $\alpha_f = \frac{p_g^r (1-p_g)^{n-r}}{p_b^r (1-p_b)^{n-r}} - 1 \in (-1, \infty)$ is a constant for flow $f$ irrespective of $H$. The oracle $O^{AI}$ then returns 

\[\underset{H\in  \{0,1\}^{E_{\mathcal{T}}}}{\text{arg max}} \prod_{f:\text{flows}}P[f| H] = \underset{H\in  \{0,1\}^{E_{\mathcal{T}}}}{\text{arg max}} \prod_{f:\text{flows}} (1 + \alpha_f \prod_{l\in Path(f)}l^H) \]

Where $l^H$ is the status (0/1) of link $l$ as per hypothesis $H$. Given the above expression, we set the number of packets sent/dropped for all flows in the following way
\begin{enumerate}
    \item \label{a1}
    For all edge-flows $f_e$,  we set $n$, $r$: the number of packets sent/dropped such that $1 + \alpha_{f_l} = \frac{1}{C}$ where $C>>1$.
    \item \label{a2}
    For all link-flows $f_l$ where $l$ is connected to a special node, we set $n$, $r$: the number of packets sent/dropped in such a way that if $l$ is connected to a special node, then $1 + \alpha_{f_l} = 1+C$. This ensures that $O^{AI}$ will always assign a label of 1 to a link connected to a special node (recall that 1 denotes the link being up). 
    \item \label{a3} For all link-flows $f_l$ where both endpoints of $l$ are connected to vertex nodes, we set $n$, $r$: the number of packets sent/dropped such that $1 + \alpha_{f_l} = 1+\epsilon$ where $\epsilon$ is a small number > 0. This assigns a small cost of assigning a label 0 to a link in $\mathcal{T}$ whose both endpoints of $l$ are connected to vertex nodes.
\end{enumerate}

 The vertex cover is obtained by simply picking vertices in $G$ corresponding to links in $\mathcal{T}$ that are deemed as failed by $O^{AI}$. Conditions (\ref{a1}) and (\ref{a2}) above ensure that only vertex-links will be classified as failed by $O^{AI}$. Conditions (\ref{a2}) and (\ref{a3}) ensure that the result set of vertices will cover all edges. Condition (\ref{a3}) ensures that the resultant vertex cover will be of the smallest size. \end{proof}

\begin{figure*}
  \hskip -0.2 cm
    \begin{minipage}[b]{0.31\textwidth}
    \centering
  \begin{subfigure}[b]{\linewidth}
    \includegraphics[width=0.95\textwidth]{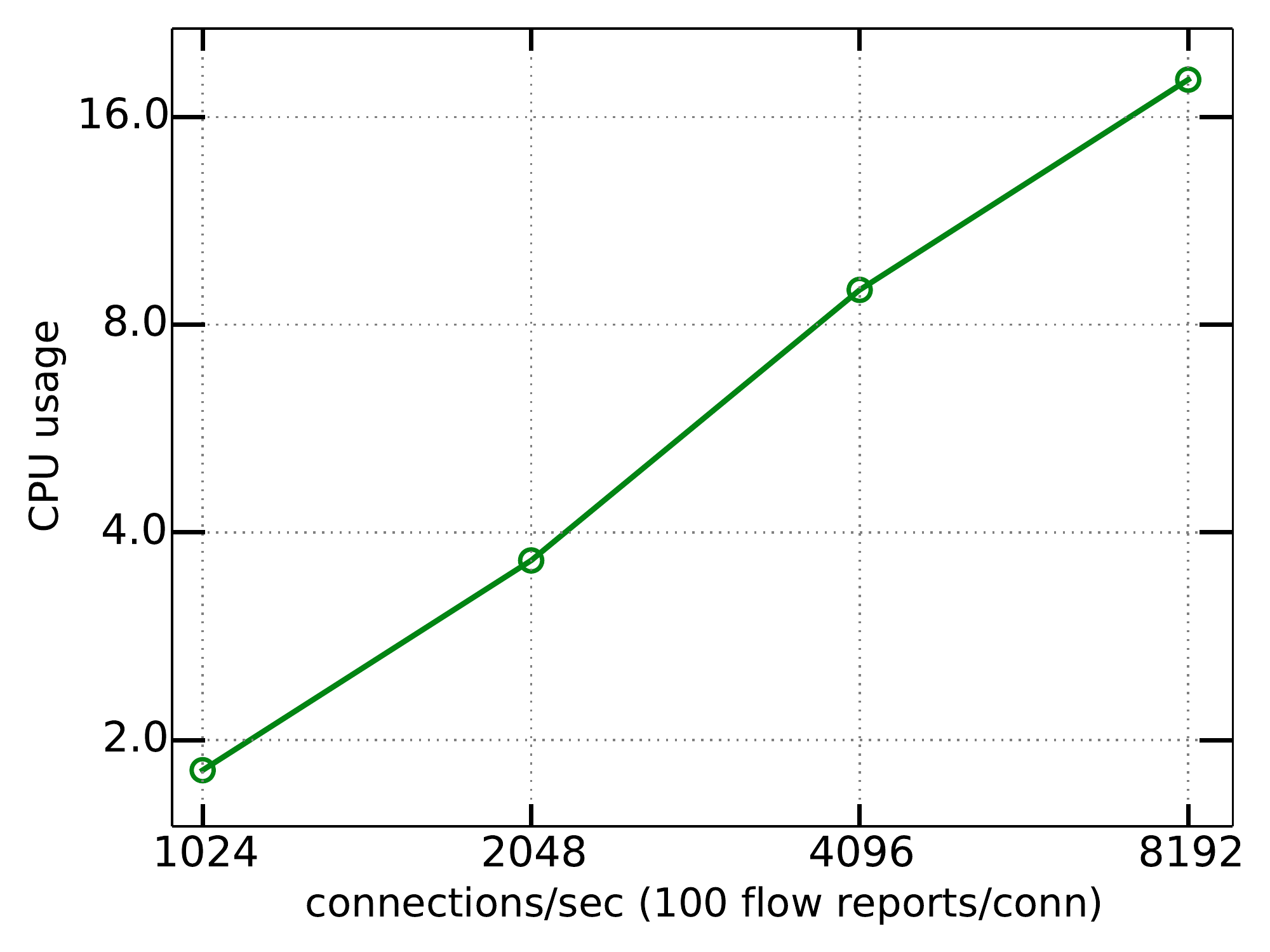}
    \caption{Collector scaling}
    %    \vskip -0.4cm
    \label{fig:collectorScale}
  \end{subfigure}
  \end{minipage}
  \begin{minipage}[b]{0.31\textwidth}
    \centering
    \hskip -.2 cm
  \begin{subfigure}[b]{\linewidth}
    \includegraphics[width=1.05\textwidth]{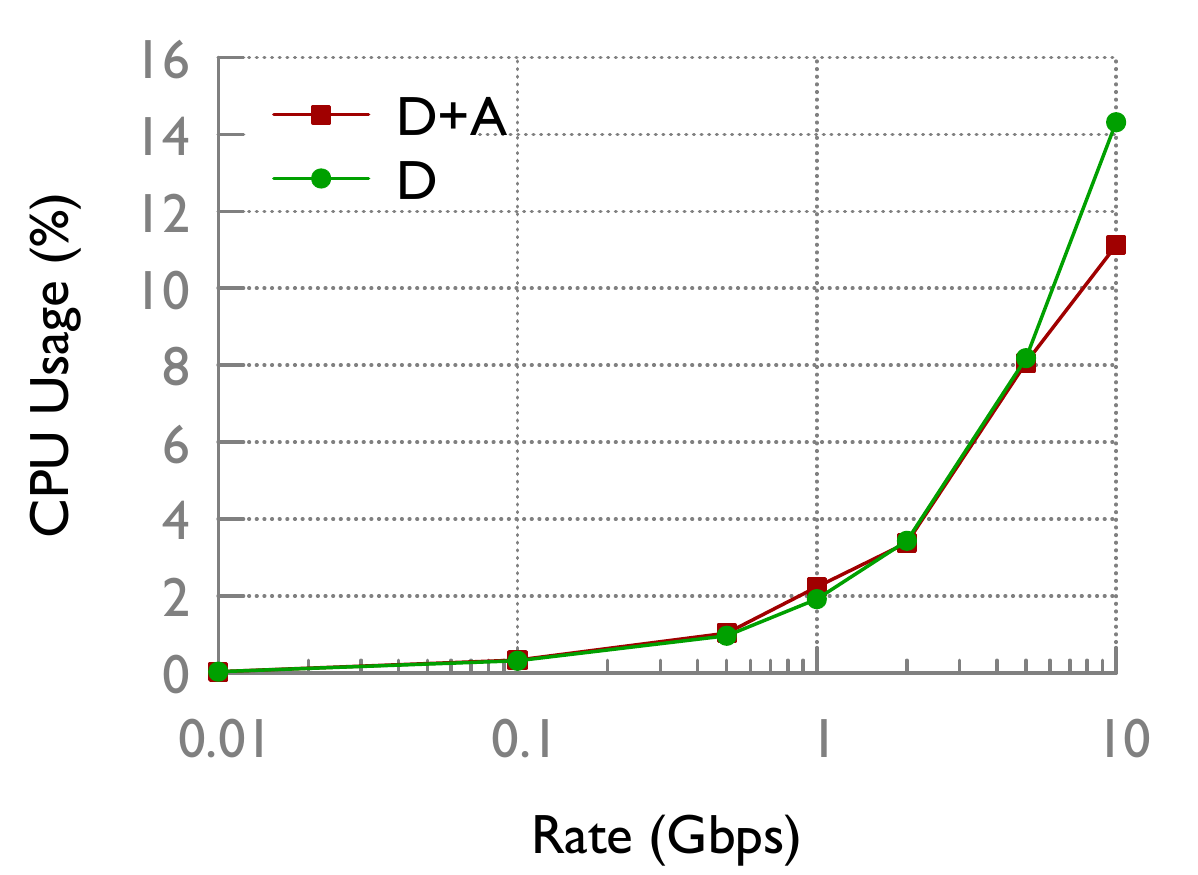}
        %\vskip -0.2cm
    \caption{agent CPU usage: single flow}
        %\vskip -0.4cm
    \label{fig:singleFlowAgentCost}
  \end{subfigure}
  \end{minipage}
\begin{minipage}[b]{0.31\textwidth}
    \centering
    \begin{subfigure}[b]{\linewidth}
    \includegraphics[width=1.05\textwidth]{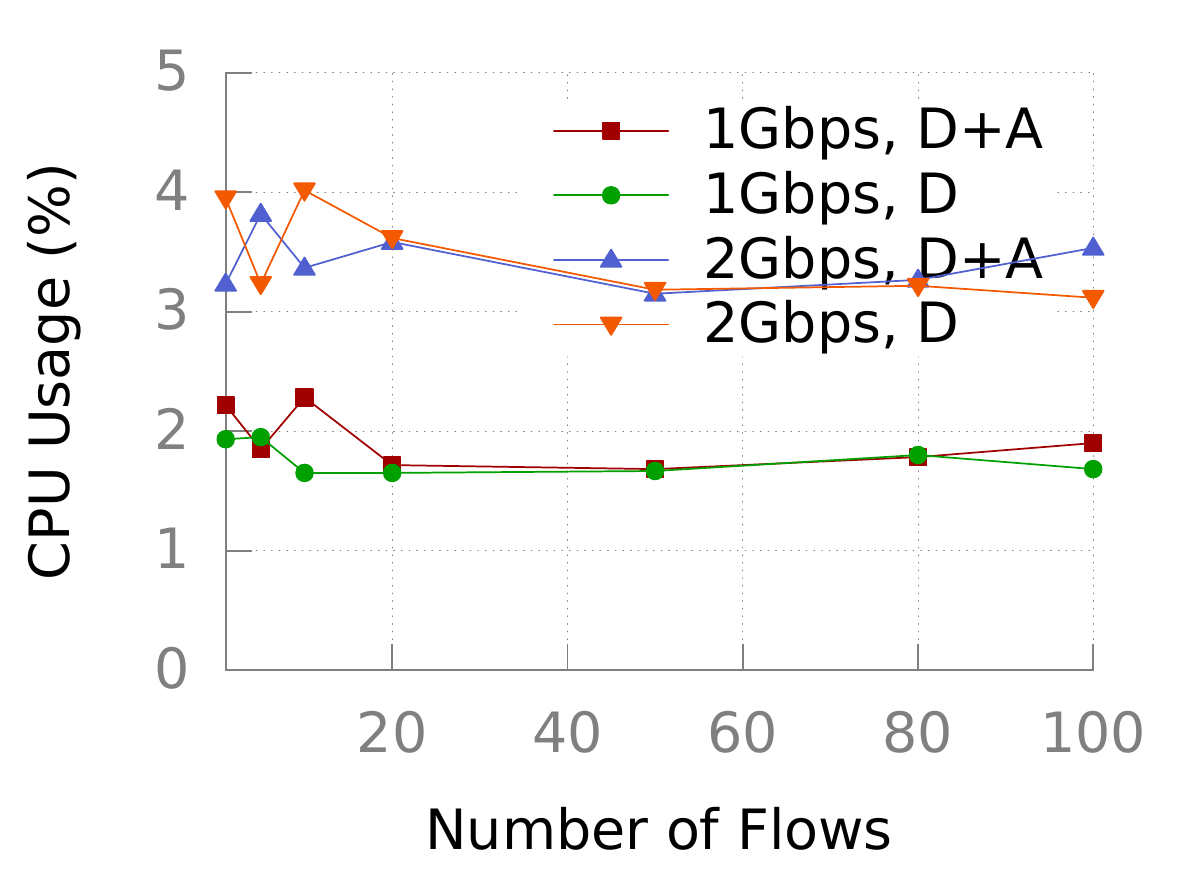}
    %\vskip -0.2 cm
    \caption{several flows}
    %\vskip -0.4cm
    \label{fig:multipleFlowAgentCost}
  \end{subfigure}
  \end{minipage}
  \caption{(a) CPU usage at the collector, varying number of agents (b) CPU usage on the agent for handling a single flow. D: cost of packet header dumping, A: cost of compiling flow reports from packet headers.(c) agent CPU usage with many concurrent flows }
\label{fig:agentCollectorScaling}
\end{figure*}

\parheading{Agent/Collector scalability} \label{collectorScalabilityAppendix}: Our agent's CPU usage  
for a end-host sending traffic 
grew linearly with the data rate (see Fig~\ref{fig:singleFlowAgentCost}) and was
<2\% of one core for a 1Gbps uplink and 10-15\% of a core for a 10 Gbps uplink. 
%) %
As can be seen from figure~\ref{fig:multipleFlowAgentCost}, the resource usage was independent of the number of flows.
We verified that our multicore collector can handle 8K connections/sec from agents (see Fig~\ref{fig:agentCollectorScaling}). We tested this by launching several agent processes that generate dummy flow reports to send to the collector . 
These results show that the passive  monitoring can be handled by an end-host agent and a centralized collector.

\begin{figure}
%  \hskip -.4 cm
\centering
   \begin{minipage}[b]{0.27\textwidth}
        \centering
        \begin{subfigure}[b]{\linewidth}
            \includegraphics[width=1.0\textwidth]{}
            %\vskip -0.2 cm
            \caption{Parameter sensitivity}
        %    \vskip -0.4cm
            \label{fig:paramSensitivityPgPb}
       \end{subfigure}
    \end{minipage}
     \begin{minipage}[b]{0.27\textwidth}
    \centering
    \begin{subfigure}[b]{\linewidth}
    \includegraphics[width=1.0\textwidth]{}
    %\vskip -0.2 cm
    \caption{Effect of changing priors}
%    \vskip -0.cm
    \label{fig:fig:paramSensitivityPrior}
  \end{subfigure}
  \end{minipage}
\caption{(a) Effect of changing $p_g$ and $p_b$: intuitively, we expect precision to increase when $p_g$ or $p_b$ is increased, at the cost of reduced recall, which is confirmed by the figure (b) Higher priors result in points to the right. Priors $\rho$ resulted in a significant reduction in false positives. %Noise drop-rate on functional links were set randomly between 0 and 0.01\%.
  %All schemes were calibrated so that they achieve perfect precision. %007 was calibrated differently for uniform and skewed traffic.
  }
  \label{fig:paramSensitivity} %\label{fig:Effect of varying parameters}
\end{figure}

\noindent \begin{theorem} \label{greedyGuaranteeConditionsAppendix} 
For any topology with $(1/\alpha)$-skewed traffic, \Sys's inference returns the set of all failed links if the number of failures is $\leq \alpha/2$ with high probability, the number of packets $T_{min}$ crossing every link is larger than a certain threshold, and the drop probabilities are $<p_g$ on all good links and $>p_b$ on all failed links where ($p_g$, $p_b$) satisfy the condition $5p_g < p_b < 0.05$.
%For any topology with $(1/\alpha)$-skewed traffic, the greedy algorithm returns the optimal hypothesis if the number of failures is $\leq \alpha/2$ with high probability, the number of packets $T_{min}$ crossing any link is larger than a certain threshold, and the drop probabilities on good and bad links ($p_g$, $p_b$) satisfy the condition $5p_g < p_b < 0.05$.
\end{theorem}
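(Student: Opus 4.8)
The plan is to reduce the statement to a deterministic claim about a greedy process on additive ``evidence scores,'' and then verify that claim on a single high‑probability event. \emph{Step 1 (the normalized log‑likelihood is flow‑additive with a threshold structure).} Since all paths are known, $w=1$ in Equation~\ref{likelihood equation}; after normalizing by $H_{0}=\{\}$, a flow $F=(r,t)$ contributes $0$ to $LL(H)$ if $H$ places no failed link on $\mathrm{path}(F)$, and otherwise contributes $g(r,t):=r\log\frac{p_b}{p_g}+(t-r)\log\frac{1-p_b}{1-p_g}$, while the prior contributes $-c$ per included link, $c:=\log\frac{1-\rho}{\rho}>0$. Write $a:=\log\frac{p_b}{p_g}>0$, $b:=\log\frac{1-p_b}{1-p_g}<0$, and $\phi(q):=aq+b(1-q)$. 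A flow with per‑packet fail probability $q$ has $\mathbf{E}[g(r,t)]=t\,\phi(q)$, and $\phi$ is increasing with $\phi(p_g)=-D_{\mathrm{KL}}(p_g\|p_b)<0$ and $\phi(p_b)=D_{\mathrm{KL}}(p_b\|p_g)>0$. Hence a flow whose path is entirely good (effective fail probability $<p_g$) is \emph{negative} evidence, while a flow crossing a failed link (effective fail probability $>p_b$) is \emph{positive} evidence, and $\mathrm{Gain}(l\mid H):=LL(H\cup\{l\})-LL(H)$ equals $\sum g(r_F,t_F)$ over exactly the flows through $l$ whose paths avoid $H$, minus $c$.

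\emph{Step 2: reduce to expectations.} Under the per‑link independent‑drop model every $r_F$ is a sum of independent Bernoullis, and every $\mathrm{Gain}(l\mid H)$ with $H\subseteq S^{*}$ ($S^{*}$ the true failed set) is a fixed linear function of the per‑link drop totals $B(l)=\sum_{F\ni l}r_F$, the good‑path totals $B_{\mathrm{good}}(l)$, and the pairwise totals $B(l,l')=\sum_{F\ni l,l'}r_F$ --- only $O(n^2)$ quantities, independent of $H$. A Chernoff bound plus a union bound therefore yields one event $\mathcal{E}$ of probability $1-o(1)$, provided $T_{\min}$ exceeds a threshold depending on $p_g,p_b,\rho,n$, on which all these totals lie within a small relative error of their means; on $\mathcal{E}$ it suffices to reason about $\mathbf{E}[\mathrm{Gain}(\cdot\mid\cdot)]$, with the deviation absorbed into the constants below.

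\emph{Step 3: greedy recovers $S^{*}$ exactly.} Argue by induction that greedy's current hypothesis $H$ stays $\subseteq S^{*}$ and halts at $H=S^{*}$. (i)~\textbf{It stops at the truth.} If $H=S^{*}$, then for any $l\notin S^{*}$ the flows through $l$ that avoid $S^{*}$ are all good‑path flows, so $\mathbf{E}[\mathrm{Gain}(l\mid S^{*})]\le T_{\mathrm{good}}(l)\,\phi(p_g)-c<-c<0$, which stays $<0$ on $\mathcal{E}$; greedy adds nothing. (ii)~\textbf{A true link beats every good link until completion.} If $H\subsetneq S^{*}$, pick $l^{*}\in S^{*}\setminus H$. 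Because the traffic is $(1/\alpha)$‑skewed and $|H|\le|S^{*}|-1<\alpha/2$, at least half the packets through $l^{*}$ lie on flows avoiding $H$, so $\mathbf{E}[\mathrm{Gain}(l^{*}\mid H)]\ge\tfrac12 T(\{l^{*}\})\,D_{\mathrm{KL}}(p_b\|p_g)-c$, large once $T_{\min}$ is above threshold. For a good link $l$, the only positive contributions to $\mathrm{Gain}(l\mid H)$ come from flows through $l$ that also cross some $l'\in S^{*}\setminus H$; skew bounds the packet mass of flows through $l$ and any one $l'$ by $\tfrac1\alpha T(\{l'\})$, i.e.\ by at most a $\tfrac2\alpha$ fraction of $l'$'s own avoiding‑$H$ traffic, and since $l$ is good it barely raises those packets' fail probability, so the leaked positive evidence is at most that same small fraction of $l'$'s evidence. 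Summing over the $\le\alpha/2$ links of $S^{*}\setminus H$ gives total leakage at most a factor $(\alpha/2)\cdot(2/\alpha)\cdot\tfrac12=\tfrac12$ of $\max_{l'\in S^{*}\setminus H}\mathbf{E}[\mathrm{Gain}(l'\mid H)]$, and subtracting the negative $b$‑term on $l$'s honest traffic together with $-c$ only helps. Hence some $l^{*}\in S^{*}\setminus H$ has $\mathrm{Gain}(l^{*}\mid H)>\mathrm{Gain}(l\mid H)$ on $\mathcal{E}$, so the link greedy selects is a true failure and the induction continues; after at most $|S^{*}|$ steps greedy reaches $H=S^{*}$ and halts by (i), returning exactly $S^{*}$.

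\emph{Main obstacle.} The crux is making the ``blame leakage'' estimate in (ii) rigorous. The honest version must compare $\phi$‑weighted drop counts (weights in $[\phi(p_b),a]$) through the skew inequality, handle flows that traverse two or more failed links (rare by skew, but double‑counted across several links' scores), and show that after also accounting for the negative $b$‑term on honest traffic and the $-c$ prior the residual advantage of a genuine failure is still strictly positive given $|S^{*}|\le\alpha/2$. This is exactly where the numeric hypotheses $5p_g<p_b<0.05$ and the lower bound on $T_{\min}$ enter: they force $D_{\mathrm{KL}}(p_b\|p_g)$ to be a large enough fraction of $\log\frac{p_b}{p_g}$, and the concentration slack of Step~2 to be small enough, that the ``$f\le\tfrac12$'' margin coming from $|S^{*}|\le\alpha/2$ survives. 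Everything else is bookkeeping.
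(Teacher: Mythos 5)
Your overall architecture matches the paper's: write the normalized log-likelihood as an additive per-flow ``evidence'' score (your $g(r,t)=ar+b(t-r)$ is exactly the paper's $\lambda(r-t\mu)$), pass to expectations on a concentration event, and induct on the claim that greedy's hypothesis stays inside the true set $S^*$ and halts there. The genuine gap is in the quantitative heart of your Step 3(ii) --- precisely the point you yourself flag as the ``main obstacle'' --- and as sketched it does not close. First, the per-packet comparison (``the leaked positive evidence is at most that same small fraction of $l'$'s evidence'') is unjustified: a flow through the good link $l$ and a failed $l'$ has per-packet evidence $\phi(q_F)$ with $q_F\ge p_{l'}$ (larger still if it crosses several failed links), i.e.\ at least as large per packet as $l'$'s own flows, so a $2/\alpha$ fraction in packet mass does not automatically give a $2/\alpha$ fraction in evidence; and the extra factor $\tfrac12$ in your ``$(\alpha/2)\cdot(2/\alpha)\cdot\tfrac12$'' has no source. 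The paper supplies the two facts needed here: Lemma~\ref{lemmaPathBound}, which splits a multi-failure flow's evidence as $\lambda(p_f-\mu)\le\lambda\sum_{l''\in S^*\cap L_f}p_{l''}$ so leakage can be charged link by link, and Lemma~\ref{paramConditions}, which turns $5p_g<p_b\le 0.05$ into $\mu<p_b/2$, i.e.\ per-packet genuine evidence $\lambda(p_{l'}-\mu)\ge \lambda p_{l'}/2$. Your stated mechanism for the numeric hypotheses --- that they make $D_{\mathrm{KL}}(p_b\|p_g)$ a large fraction of $\log(p_b/p_g)$ --- cannot work (that ratio is roughly $p_b<0.05$); the real role of the hypotheses is the $\mu<p_b/2$ bound. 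Second, even granting those two lemmas, your non-deletion bookkeeping loses a factor $2$ through ``at least half the packets through $l^*$ avoid $H$'': the honest chain gives total leakage $\le \frac1\alpha\sum_{l'}\lambda p_{l'}T(\{l'\})\le\frac4\alpha\sum_{l'}E[\mathrm{Gain}(l'\mid H)]\le 2\max_{l'}E[\mathrm{Gain}(l'\mid H)]$ at $|S^*|\le\alpha/2$, which proves nothing. The missing idea is the paper's deletion/induction step: after each greedy pick, remove the chosen (failed) link and every flow crossing it; the residual traffic is $1/(\alpha-j)$-skewed with at most $\alpha/2-j$ remaining failures, so the first-iteration comparison --- which uses each remaining failed link's \emph{full} residual traffic, not half of it --- applies verbatim, and the constants close since $\frac{2}{\alpha-j}\bigl(\frac{\alpha}{2}-j\bigr)<1$.

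Two smaller points. The claim that all gains for $H\subseteq S^*$ are linear functions of only $O(n^2)$ per-link/pairwise totals is not exact (which flows avoid $H$ depends on $H$); the paper is equally informal about concentration except at the stopping step, so this is not the main issue. Also, ``good-path flows have effective fail probability $<p_g$'' conflates per-link and per-path rates --- a good path with $k$ links can drop at nearly $kp_g$ --- which the paper patches by requiring $p_g\ge k p^*$ with $p^*$ the maximum good-link drop rate and $k$ the maximum path length.
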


The condition $T_{min} > T_0$ ensures that the effect of variance in the number of packets dropped by a link is small. Theorem~\ref{greedyGuaranteeConditions} holds for a wide range of values of $p_g$ and $p_b$ (see lemma~\ref{paramConditions}).

\begin{proof}

If $\gamma^H_f$ be the 0/1 status of the path taken by flow $f$ as per hypothesis $H$ (0 being that the path has at least one failed link as per $H$ and hence is labeled as failed), we can express the (normalized) log likelihood for a flow given a hypothesis as follows

\begin{align*}
& \log P[f|H = \{l_1, l_2, ...\ l_n\}] - \log P[f|l_1=1, l_2=1, ...\ l_n=1]  \\
=& \log \big(\gamma^H_f p_g^r (1 - p_g)^{n-r} + (1 - \gamma^H_f) p_b^r (1 - p_b)^{n-r}\big) \\
& - \log \big(p_g^r (1-p_g)^{n-r}\big)\\
=& (1 - \gamma^H_f) \Big(r \log \frac{p_b}{p_g} + (n-r)\log \frac{(1-p_g)}{(1-p_b)}\Big) \\
=& (1 - \gamma^H_f) \lambda (r - n\mu ) = (1 - \gamma^H_f) \lambda \sum_{i=0}^{n} (b_i - \mu)  \\
=& (1 - \gamma^H_f) X(f, H)
\end{align*}

where $\lambda= \log \frac{p_b(1-p_g)}{p_g(1-p_b)} $ and $\mu=\log \frac{(1-p_g)}{(1-p_b)}\big/\log \frac{p_b(1-p_g)}{p_g(1-p_b)}$ are constants and $b_i$ is a binary variable which is 1 if the $i^{th}$ packet of the flow was dropped and 0 otherwise. One can check that $p_g < \mu < p_b$ for any $0 < p_g < p_b < 1$ by taking partial derivatives or using a plotting tool. Note that we can simply ignore the constant $\lambda$ for the purpose of log likelihood maximization.  If the maximum allowed drop rate on a correctly working link be $p^*$ and the maximum path length for the given topology be $k$, then we set $p_g \geq kp^*$ so that for any flow $f$ that does not go through any of the failed links , $(p_f - \mu) < 0$, where $p_f$ is the packet drop probability of the path taken by $f$. We show the following lemma which holds for any reasonable settings for $p_g$ and $p_b$ for the purpose of detecting packet drops. The expected value of the $X(f, H)$ is given as

\begin{align*}
E[X(f,H)] &= E\Big[\lambda \sum_{i=0}^{n} (b_i - \mu)\Big] \\
&= \lambda \sum_{i=0}^{n} E[(b_i - \mu)] = n\lambda (p_f - \mu)
\end{align*}

\begin{lemma} \label{paramConditions}
If $\mu = \frac{\log\frac{(1-p_g)}{(1-p_b)}}{\log \frac{p_b(1-p_g)}{p_g(1-p_b)}}$ and $ 5 p_g < p_b \leq 0.05$, then \[0 \leq p_g < \mu < 2\mu < p_b\] 
\end{lemma}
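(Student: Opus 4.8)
The plan is to write $a = p_g$, $b = p_b$ and introduce $u = \log\frac{1-a}{1-b} > 0$ and $v = \log\frac{b}{a} > 0$ (both positive since $0 < a < b < 1$), so that $\mu = u/(u+v) \in (0,1)$. Then $0 \le p_g$ is trivial and $\mu < 2\mu$ is immediate from $\mu > 0$, so the whole statement reduces to the two bounds $p_g < \mu$ — equivalently $av < (1-a)u$ — and $2\mu < p_b$ — equivalently $(2-b)u < bv$. (I take $p_g > 0$, which is the relevant case in the application where $p_g \ge kp^*$; $p_g = 0$ is read as the obvious limit.)

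For $p_g < \mu$ I would bound the two sides crudely. Using $\log(1+t) > t/(1+t)$ with $t = (b-a)/(1-b)$ gives $(1-a)u > b-a > \tfrac45 b$, the last step from $a < b/5$. On the other side, with $x = a/b \in (0,\tfrac15)$ one has $av = b\,x\log(1/x)$, and since $x \mapsto x\log(1/x)$ is increasing on $(0, 1/e]$ this is at most $\tfrac15 b\log 5 < \tfrac45 b$. Chaining, $av < \tfrac45 b < (1-a)u$, i.e.\ $p_g < \mu$.

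The bound $2\mu < p_b$ is the crux and is quantitatively tight (it already fails for $p_b$ slightly above $0.05$, and for fixed $p_b$ as $p_g \to p_b/5$), so no one-line logarithm estimate suffices. I plan a two-step reduction. First, fix $b$ and set $F(a) := bv - (2-b)u$; since $F'(a) = \frac{2-b}{1-a} - \frac{b}{a} < 0$ whenever $a < b/2$, and $a < b/5 < b/2$, $F$ is strictly decreasing on $(0, b/5)$, so it suffices to handle the boundary value $a = b/5$, where $v = \log 5$ and $u = u_0(b) := \log\frac{5-b}{5(1-b)}$; the claim becomes $u_0(b) < \psi(b) := \frac{b\log 5}{2-b}$ for $b \in (0, 0.05]$. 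Second, I would prove $\phi := \psi - u_0 > 0$ on $(0,0.05]$ by computing $\phi(0) = 0$, checking directly that $\phi(0.05) > 0$ (with a small margin, about $2.5\times 10^{-5}$), and showing $\phi$ is concave on $[0,0.05]$: $\phi''(b) = \frac{4\log 5}{(2-b)^3} + \frac{1}{(5-b)^2} - \frac{1}{(1-b)^2}$, and on $[0,0.05]$ the first two terms are below $0.87$ and $0.041$ while $\frac{1}{(1-b)^2} \ge 1$, so $\phi'' < -0.09 < 0$. A concave function with $\phi(0)=0$ and $\phi(0.05)>0$ lies above the chord through its endpoints, so $\phi(b) \ge \frac{b}{0.05}\,\phi(0.05) > 0$ on $(0,0.05]$, and $F(b/5) = (2-b)\phi(b) > 0$; by monotonicity $F(a) > F(b/5) > 0$ for $a < b/5$, which is exactly $2\mu < p_b$.

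The main obstacle is this second inequality: because it is nearly tight, the argument must track how $u$ and $v$ change together as $p_g$ varies — hence the monotonicity reduction to the curve $p_g = p_b/5$ — and then lean on concavity to avoid having to locate the interior maximum of the gap function $\phi$. The only other care point is numerical precision: $\log 5$ and the endpoint value $\phi(0.05)$ must be evaluated accurately enough, since the endpoint margin is small, but this is routine.
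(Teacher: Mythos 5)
Your proof is correct, and it is genuinely more of a proof than what the paper offers: the paper disposes of this lemma with a single sentence ("can be seen by taking partial derivatives or via a numerical plotting tool"), i.e.\ it delegates the verification to calculus/plotting without recording an argument. Your route — reducing to $av<(1-a)u$ and $(2-b)u<bv$ with $u=\log\frac{1-a}{1-b}$, $v=\log\frac{b}{a}$, handling the first by the elementary bounds $(1-a)u>b-a>\tfrac45 b$ and $av=bx\log(1/x)\le \tfrac{b\log 5}{5}$, and handling the tight second inequality by first showing $F(a)=bv-(2-b)u$ is decreasing for $a<b/2$ (so only the boundary curve $a=b/5$ matters) and then establishing $\phi=\psi-u_0>0$ on $(0,0.05]$ via $\phi(0)=0$, $\phi(0.05)\approx 2.5\times10^{-5}>0$, and $\phi''<-0.09$ so that concavity puts $\phi$ above the chord — is sound; I checked the derivative sign condition $a<b/2$, the second-derivative bounds $\frac{4\log5}{(2-b)^3}<0.87$, $\frac1{(5-b)^2}<0.041$, $\frac1{(1-b)^2}\ge 1$, and the endpoint margin, and all are as you state. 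Your observation that the constant $0.05$ is essentially sharp (at $b=0.05$, $a=b/5$ one gets $2\mu\approx 0.04997$) explains why the reduction-plus-concavity structure is needed rather than a one-line logarithm estimate. The only caveats are minor and you flag them yourself: the case $p_g=0$ must be excluded or read as a limit (the strict inequality $p_g<\mu$ degenerates there, as does $\mu<2\mu$), and the endpoint evaluation of $\phi(0.05)$ must be done with enough precision since the margin is small — a finite, checkable computation, so the overall argument is rigorous where the paper's is only asserted.
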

\noindent Lemma~\ref{paramConditions} can be seen by taking partial derivatives or via a numerical plotting tool. 
\begin{lemma} \label{lemmaPathBound}
If $p_l$ denotes the drop probability of link $l$, $L_f =\{l_1, l_2, ... l_k\}$ be the links in the path taken by flow $f$, $p_f$ denotes the drop probability of the path $L_f$ and $H^*$ denote the set of failed links, then $(p_f -\mu) \leq \sum_{(l\in H^*\cap L_f)} p_{l}$ 
\end{lemma}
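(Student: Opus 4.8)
The plan is to bound the path drop probability by a union bound over the links of the path, and then separately control the aggregate contribution of the good links on the path using the calibration $p_g \ge k p^*$ already fixed in the proof of Theorem~\ref{greedyGuaranteeConditions}.

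First I would write the path drop probability in the natural way consistent with the model: since a packet traversing $L_f = \{l_1,\dots,l_k\}$ is dropped on each link independently, the probability it is dropped somewhere on the path is $p_f = 1 - \prod_{l\in L_f}(1-p_l)$. A union bound (equivalently, $1-\prod(1-x_i)\le\sum x_i$ for $x_i\in[0,1]$) then gives $p_f \le \sum_{l\in L_f} p_l$.

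Next I would split this sum according to whether a link is failed or not:
\[ p_f \;\le\; \sum_{l\in L_f\cap H^*} p_l \;+\; \sum_{l\in L_f\setminus H^*} p_l . \]
For the second sum, recall from the setup that every correctly working link has drop rate at most $p^*$ and that path length is at most $k$, so $\sum_{l\in L_f\setminus H^*} p_l \le k\,p^* \le p_g$, where the last inequality is the design choice $p_g \ge k p^*$. By Lemma~\ref{paramConditions} we have $p_g < \mu$, hence $\sum_{l\in L_f\setminus H^*} p_l < \mu$. Substituting back yields $p_f \le \mu + \sum_{l\in L_f\cap H^*} p_l$, which rearranges to the claimed bound $p_f - \mu \le \sum_{l\in H^*\cap L_f} p_l$.

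The only point requiring care — and it is minor — is that the loose hypothesis ``good links have drop rate $<p_g$'' stated in Theorem~\ref{greedyGuaranteeConditions} is not by itself sufficient, since a path made of $k$ good links could then contribute up to $k p_g$; the argument genuinely relies on the tighter per-link bound $p^*$ together with $p_g \ge k p^*$, exactly as arranged earlier in the proof. Given that calibration, I expect this lemma to be essentially immediate, with the union bound and the single application of Lemma~\ref{paramConditions} being the whole content.
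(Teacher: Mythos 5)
Your proof is correct and takes essentially the same route as the paper's: the paper peels off only the failed links from $1-\prod_{l\in L_f}(1-p_l)$ (rather than union-bounding all links at once) and then bounds the remaining good-links-only term by $\mu$ using exactly the calibration $p_g \ge k p^*$ and $p_g < \mu$ that you invoke. The caveat you flag—that the theorem's loose hypothesis ``good links have drop rate $<p_g$'' is not by itself enough and the argument really uses the per-link bound $p^*$ with $p_g \ge k p^*$—applies equally to the paper's own proof, so it marks no substantive difference.
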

\begin{proof} For any $l_i \in L_f$, we have- 
\begin{align*}
p_f &= 1 - \Big((1-p_{l_1})...(1-p_{l_{i}})...(1-p_{l_k})\Big)\\
&= 1 - (1-p_{l_{i}})\Big((1-p_{l_1})...(1-p_{l_{i-1}})(1-p_{l_{i+1}})...(1-p_{l_k})\Big)\\
&= 1 - \Big((1-p_{l_1})...(1-p_{l_{i-1}})(1-p_{l_{i+1}})...(1-p_{l_k})\Big) \\
&\quad+ p_{l_{i}}\Big((1-p_{l_1})...(1-p_{l_{i-1}})(1-p_{l_{i+1}})...(1-p_{l_k})\Big)\\
&\leq 1 - \Big((1-p_{l_1})...(1-p_{l_{i-1}})(1-p_{l_{i+1}})...(1-p_{l_k})\Big) + p_l
\end{align*}
Applying the same argument recursively for all links in $H^*\cap L_f$, we get: 
\begin{align*}
 p_f &-\mu = 1 - (1-p_{l_1})(1-p_{l_2})...(1-p_{l_k}) - \mu \\
&\leq \Big(1 - \prod_{l \in  L_f \setminus H^*} (1-p_{l})\Big) - \mu + \sum_{(l\in H^*\cap L_f)} p_{l} \leq \sum_{(l\in H^*\cap L_f)} p_{l} 
\end{align*}

\noindent The last inequality follows from the fact that $(p_f - \mu) < 0$ for a path consisting of only good links. This completes the proof of lemma~\ref{lemmaPathBound}. \end{proof}

Consider the set of hypotheses with single link failures- say $S_1$. We first show that $H_{opt} = \text{arg max}_{H\in S_1} L(H) $, corresponds to a failed link as $T_{min} \rightarrow \infty$  which in turns implies that the greedy algorithm succeeds in picking a failed link in the first iteration. Let $H_l \in S_1$ denote the hypothesis $\{l\}$ where $l$ is a good link in the topology, $F(l)$ is the set of flows that go through the link $l$, $n_f$ is the number of packets sent by flow $f$ and $p_f$ as before is the ground truth drop probability of the path taken by $f$. We have,

\begin{align*}
E[LL(H_l)] = & \sum_{f\in F(l)}E[X(f,H)] = \sum_{f\in F(l)}n(p_f - \mu) \\
    \leq& \sum_{l^* \in H^*} \sum_{f\in F(l)\cap F(l^*)} n(p_f - \mu)\\
    \leq & \sum_{l^* \in H^*}\ \sum_{f\in F(l)\cap F(l^*)} n_f p_{l^*} \leq  \sum_{l^* \in H^*}\ p_{l^*}  \ T(l, l^*) \\
    \leq & \sum_{l^* \in H^*}\frac{1}{\alpha} \ \ p_{l^*} \ T(l^*)\leq  \sum_{l^* \in H^*}\frac{1}{\alpha} \ \sum_{F(l^*)} n_f p_{l^*}\\
    \leq &  \sum_{l^* \in H^*}\frac{1}{\alpha} \ \sum_{F(l^*)} n_f 2 (p_{l^*}-\mu) < \frac{2}{\alpha} \sum_{l^* \in H^*} E[LL(H_{l*}] \\
    < &\ \underset{l\in H^*}{\text{arg max}}  E[LL(H_l)]
\end{align*}

Note that $p_f - \mu < 0$ if flow $f$ takes a path with all good links and $|H^*| \leq \alpha/2$.

\noindent This shows that the greedy inference algorithm will pick a failed link in the first iteration, say $l_1$. Greedily picking the link that maximizes the log likelihood of the hypothesis $\{l_1\}$ is equivalent to deleting all flows crossing $l_1$ and the link $l_1$ itself from the input for analysis for subsequent iterations. Hence, the same proof about iteratively picking a failed link works for subsequent iterations if we ensure that the traffic-skew is maintained after deleting $l_1$ and all flows crossing it. 

For any pair of links $l_2, l_3 \neq l_1$, we have $T(\{l_2, l_3\})/T(\{l_2\}) \leq \frac{1}{\alpha}$. Let's say that the number of packets crossing link $l_2$ is $T'(\{l_2\})$ after we delete $l_1$ and all flows crossing $l_1$. Then we have, 
\begin{align*}
\frac{T'(\{l_2, l_3\})}{T'(\{l_2\})} \leq &\frac{T(\{l_2, l_3\})}{T'(\{l_2\})} = \frac{T(\{l_2, l_3\})}{T(\{l_2\})-T(\{l_1,l_2\})} \\
\leq & \frac{T(\{l_2, l_3\})}{(1-1/\alpha)T(\{l_2\})} = \frac{1}{\alpha-1}
\end{align*}
Thus, for subsequent iterations, after deleting $l_1$, traffic is $1/(\alpha-1)$-skewed and the number of failures is $\alpha/2 - 1$ in the deleted graph. The same argument as before shows that the greedy algorithm will pick a failed link in every iteration as long as there is at least one failed link not in the current hypothesis. \\

\noindent \textbf{Stopping Criteria}: Finally we need to show that once all failed links are picked, the greedy algorithm will halt. This happens when $LL(H_l) < 0$ for all $l$ not in the current hypothesis. Note that the input to log likelihood computations in the current iteration is the topology obtained after deleting all links in the current hypothesis and all flows crossing any of those links. As before we have, 

\begin{align*}
E[LL(H_l)] = & \sum_{f\in F(l)}E[X(f,H)] = \sum_{f\in F(l)}n(p_f - \mu) 
\end{align*}

Note that for a path with all good links, $p_f - \mu \leq p_g - \mu < 0$. Hence, $E[LL(H_l)]$ is bounded away from 0 towards $-\infty$. Since, $LL(H_l)$ is the sum of independent binary variables corresponding to packets each of whose expectation is $\leq (p_g - \mu) < 0$, applying Chernoff bounds followed by a union bound for all links shows that $LL(H_l) < 0$ for all links $l$ with high probability. This complete the proof of Theorem~\ref{greedyGuaranteeConditions}. \end{proof}

\subsection{Defining precision/recall}\label{prDefinition}
\textbf{Precision} is the fraction of predicted failed links that had actually failed and \textbf{recall} is the fraction of failed links that were correctly reported as failed. A faulty device or any of its links are considered to be correct for calculating precision. For calculating recall, including the faulty device itself in $H$ counts as 100\% recall, and including x\% of the device links in $H$ counts as x\% recall, where $H$ is the set of failed links/devices predicted by the algorithm.
More precisely, if $H$ is the set of failed links predicted by the algorithm and $H^*$ is the actual set of failed links, then precision =$|H\cap H^*|/|H|$ and recall = $|H\cap H^*|/|H^*|$. 

We define precision to be 1 if the algorithm returns the empty hypothesis. For 0 actual failures, precision represents the fraction of examples where the algorithm returns a wrong answer and recall is 1 since there are no failures to detect. %If a device is faulty, then either the faulty device or any of its links are considered to be correct for calculating precision. For calculating recall, including the faulty device itself in $H$ counts as 100\% recall, and including x\% of the device links in $H$ counts as x\% recall. 

\setlength{\textfloatsep}{2.0pt}
\begin{algorithm}
\begin{algorithmic}[1]
\footnotesize
%\small
\Procedure{GreedySearch()}{}
\State current\_hypothesis $\leftarrow$ []
\State $\Delta$ = ComputeInitialDelta()
\While{max($\Delta$) > 0}
    \State link = argmax($\Delta$)
    \State $\Delta$ = UpdateDeltaArr($\Delta$, current\_hypothesis, link)
    \State current\_hypothesis.add(link)
\EndWhile
\State return current\_hypothesis
\EndProcedure
%\LineComment{\textit{Joint Likelihood Exploration (JLE)}}
\Procedure{UpdateDeltaArr}{$\Delta$, hypothesis, link}
\State new\_hypothesis $\leftarrow$ hypothesis + [link]
\For{F in FlowsIntersectingWithLink(link)}
%    \LineComment{\textit{update entries for links that intersect with flow F}}
    \For{l in F.links}
        \State $\Delta$[l] += GetFlowDelta(new\_hypothesis, l, F)
        \State $\Delta$[l] -= GetFlowDelta(hypothesis, l, F)
    \EndFor
\EndFor
\State return $\Delta$
\EndProcedure
\end{algorithmic}
\caption{Flock inference: Greedy search with JLE (crux)}\label{JLEAlgorithmShort}
\end{algorithm} 

\newpage
\begin{algorithm}
\begin{algorithmic}[1]
\footnotesize
\Procedure{GreedySearch()}{}
\State current\_hypothesis $\leftarrow$ []
\State $\Delta$ = ComputeInitialDelta()
\While{max($\Delta$) > 0}
    \State link = argmax($\Delta$)
    \State $\Delta$ = UpdateDeltaArr($\Delta$, current\_hypothesis, link)
    \State current\_hypothesis.add(link)
\EndWhile
\State return current\_hypothesis
\EndProcedure
% \Procedure{HypothesisSearch()}{}
% \State all\_hypothesis\_searched $\leftarrow$ dict()
% \State hypothesis $\leftarrow$ []
% \State log\_likelihood $\leftarrow$ 0
% \State all\_hypothesis\_searched[hypothesis] = log\_likelihood
% \State $\Delta$ = ComputeSingleLinkLikelihoods()
% \While{MAX($\Delta$) > 0}
%     \State link = ARGMAX($\Delta$)
%     \State log\_likelihood = log\_likelihood + $\Delta$[link]
%     \State $\Delta$ = ComputeNewDeltaArr($\Delta$, hypothesis, link)
%     \State $\Delta$[link] = -$\infty$
%     \State hypothesis.add(link)
%     \State all\_hypothesis\_searched[hypothesis] = log\_likelihood
% \EndWhile
% \State return ARGMAX(all\_hypothesis\_searched)
% \EndProcedure
\\
\Procedure{UpdateDeltaArr}{$\Delta$, hypothesis, link}
\State new\_hypothesis $\leftarrow$ hypothesis + [link]
\For{F in FlowsIntersectingWithLink(link)}
%    \LineComment{\textit{update entries for links that intersect with flow F}}
    \LineComment{\textit{these counters are a simple data structure}} % trick to speed-up the for loop}}
    \LineComment{\textit{trick to speed-up the subsequent for loop}}
    \State old\_counters = GetCounters(hypothesis, flow)
    \State new\_counters = GetCounters(new\_hypothesis, flow)
    \For{l in F.links}
        \State $\Delta$[l] += GetFlowDelta(l, *new\_counters)
        \State $\Delta$[l] -= GetFlowDelta(l, *old\_counters)
    \EndFor
\EndFor
\State return $\Delta$
\EndProcedure
\\
\Procedure{GetCounters}{hypothesis, flow}
\State paths\_failed $\leftarrow$ 0
\State num\_paths = dict()
\For{path in flow.paths}
    \If{(PathFailedAsPerHypothesis(path, hypothesis))}
        \State paths\_failed++
    \Else
        \For{link in path}
    	 \State  num\_paths[link]++
    	\EndFor
    \EndIf
\EndFor
\State return (paths\_failed, num\_paths[link], flow.paths.size(), flow.packets\_sent, flow.bad\_packets)
\EndProcedure
\\
\Procedure{GetFlowDelta}{link, paths\_failed, num\_paths, n\_flow\_paths, packets\_sent, bad\_packets}
\State bad\_paths = paths\_failed + num\_paths[link]
\State return GetLogLikelihood(bad\_paths, n\_flow\_paths, packets\_sent, packets\_dropped)
\EndProcedure
\\
\Procedure{ComputeInitialDelta()}{}
\For{l in links}
\State $\Delta$[l] $\leftarrow$ 0
\EndFor
\For{flow in flows}
    \State counters = GetCounters([], flow)
    \For{l in flow.links}
        \State $\Delta$[l] += GetFlowDelta(l, *counters)
    \EndFor
\EndFor
\State return $\Delta$
\EndProcedure
\\
\Procedure{GetLogLikelihood}{bad\_paths, n\_flow\_paths, bad\_packets, packets\_sent}
\State good\_packets = packets\_sent - bad\_packets
\State log\_likelihood = bad\_paths * pow($p_b$, bad\_packets) * \\ \quad \quad \quad \quad \quad \quad \quad \quad \quad \quad \quad pow(1 - $p_b$, good\_packets)
\State good\_paths = n\_flow\_paths - bad\_paths
\State log\_likelihood += good\_paths * pow($p_g$, bad\_packets) \\ \quad \quad \quad \quad \quad \quad \quad \quad \quad \quad \quad * pow(1 - $p_g$, good\_packets)
\State log\_likelihood  /= n\_flow\_paths
\State return log\_likelihood
\EndProcedure
\end{algorithmic}
\caption{Flock's Hypotheses search: Greedy with Joint Likelihood Exploration}
\label{HypothesisSearchAlgorithm}
\end{algorithm}

\begin{algorithm}
\begin{algorithmic}[1]
\footnotesize
\State best\_hypothesis = None
\State max\_likelihood = -$\infty$
\Procedure{SherlockWithJLE()}{}
    \State $\Delta$ = ComputeInitialDelta()
    \State ExploreBranch([], $\Delta$, 0.0)
    \State return best\_hypothesis
\EndProcedure
\Procedure{ExploreBranch}{current\_hypothesis, $\Delta$, current\_likelihood}
\If{current\_likelihood > max\_likelihood}
\State max\_likelihood = current\_likelihood
\State best\_hypothesis = current\_hypothesis
\EndIf
\If{current\_hypothesis.size < K}
    \For{l in links}
        \State new\_hypothesis = current\_hypothesis.add(l)
        \State new\_likelihood = current\_likelihood + $\Delta$[l]
        \State $\Delta_{new}$ = UpdateDeltaArr($\Delta$, current\_hypothesis, l)
        \State ExploreBranch(new\_hypothesis, $\Delta_{new}$, new\_likelihood)
    \EndFor
\EndIf
\EndProcedure
\end{algorithmic}
\caption{JLE can be used to speedup Sherlock's inference, with max concurrent failures = K}\label{JLEBruteForceAlgorithm}
%\vspace{2.0in}
\end{algorithm}

\section{Joint Likelihood Exploration: pseudocode}
Refer to Algorithm~\ref{JLEAlgorithmShort} for a short summary of Flock's inference algorithm and Algorithm~\ref{HypothesisSearchAlgorithm} for full pseudocode.

\section{Full Runtime analysis} \label{fullRuntimeAnalysis}
Let $n$ be the number of links, $m$ be the number of flows, $T$ be an upper bound on the number of links that any flow intersects with, $D$ be an upper bound on the number of flows that any link intersects with and $K$ be the maximum number of concurrent failures (note that \Sys's algorithm doesn't need to know $K$). 

We describe the components in the running time of Flock's overall inference, including Greedy and JLE:
\begin{itemize}[leftmargin=*]
    \item Before the first greedy iteration, the $\Delta$ array is computed once, in a linear pass over all flows and their path sets, incurring $O(n+mT)$ time. %(function ComputeInitialDelta of Algorithm~\ref{HypothesisSearchAlgorithm}).
    \item After each greedy iteration, updating the $\Delta$ array via JLE requires iterating over all flows that intersect with the newly added link. For each such flow $F$, let $L_F$ be the links that $F$ intersects with.
    %and for each such flow $F$, updating $F$'s contribution to all links (call these links $L_F$) that intersect with $F$. 
    Updating all entries $\Delta_{H'}(l, F)$ for all $l \in L_F$ requires a couple of passes over $L_F$. %(see \ifEurosys Algorithm 2 in ~\cite{FlockFullDraft} \else Algorithm~\ref{HypothesisSearchAlgorithm} in appendix \fi). 
    Thus, the execution time for subsequent $(K-1)$ greedy iterations, barring the first, is $O((K-1)DT)$. %\fix{Shouldn't that trick be part of the algorithm rather than the analysis?  I think it will actually be helpful if the algorithm design is more complete/substantial. Even a brief mention there would be enough.}%, %In addition, we picked top $C=4$ hypotheses every greedy iteration instead of one, but didn't see any noticeable gains over $C=1$ 
    %\item One can make the greedy algorithm more exhaustive by picking the top $C$ hypotheses every greedy iteration instead of just one\footnote{We used $C=4$ for our experiments, but didn't see any noticeable gains over $C=1$.}. 
\end{itemize}
\noindent Hence, the running time of Greedy inference with JLE is $O(n+mT+(K\!-\!1)DT)$. If we had used just Greedy without JLE (computing likelihood of each hypothesis individually), the runtime would be $O(n + mT + (K\!-\!1)nDT)$. %In practice, the first two terms dominated the running time (\S~\ref{runningTimeSection}).

%If we had used just Greedy without JLE, computing likelihood of each hypothesis individually as in ~\cite{sherlock,netsonar}, \pbg{Those references can be confusing because they do not use greedy. Maybe omit that here, and leave it for the sherlock paragraph}

We compare runtime with Sherlock.  To compute the MLE, Sherlock scans all $O(n^K)$ hypotheses with $\leq K$ failures. Sherlock is better than brute force, however: it uses $LL(H)$, for an explored hypothesis $H$, to compute $LL(H\oplus l)$ by updating the flow contributions $LL_F(H)$ for all flows $F$ that intersect with link $l$ (since their likelihoods would have changed after flipping the status of link $l$),  giving  $O(n^KDT)$ runtime. We can apply JLE to accelerate Sherlock's inference by evaluating $n$ neighbor hypotheses at once (\ifEurosys Algorithm 3 in ~\cite{FlockFullDraft} \else Algorithm~\ref{JLEBruteForceAlgorithm} in appendix\fi), improving Sherlock's runtime, by a factor of $n$, to $O(n^{K-1}DT)$. However, this is exponential in $K$ and too slow for our purposes. From the analysis above (and our experiments later), it can be seen that Greedy + JLE is dramatically faster.

\fi

\end{document}